\documentclass[%
 reprint,
superscriptaddress,
nofootinbib,
nobibnotes,
 amsmath,amssymb,
 aps,
pra,
]{revtex4-2}

\usepackage{graphicx}
\usepackage{dcolumn}
\usepackage{bm}
\usepackage{hyperref}
\usepackage{tikz-cd}
\usepackage{amsthm}
\usetikzlibrary{arrows.meta}
\tikzcdset{
    arrow style=tikz,
    arrows={>={Circle[open]}}
    }

\newtheorem{theorem}{Theorem}

\newtheorem{definition}{Definition}
\newtheorem{lemma}{Lemma}

\newcommand{\B}{\mathcal{B}}
\newcommand{\density}{\mathcal{D}}
\newcommand{\hil}{\mathcal{H}}

\newcommand{\braket}[2]{\langle #1 | #2 \rangle}
\newcommand{\ketbra}[1][]{| #1 \rangle \langle #1 |}
\newcommand{\ket}[1][]{| #1 \rangle}
\newcommand{\bra}[1][]{\langle #1 |}

\newcommand{\trace}{\mathrm{Tr}}
\newcommand{\id}{\mathbb{I}}
\newcommand{\real}{\mathbb{R}}

\newcommand{\conj}{\theta}

\newcommand{\iden}{\mathrm{id}}

\newcommand{\fidelity}{\mathcal{F}}

\newcommand{\trev}{\Theta}
\newcommand{\adj}{\mathrm{Ad}}

\newcommand{\chan}{\mathcal{E}}
\newcommand{\cat}{\mathbf{C}}
\newcommand{\choi}[1][]{{\mathrm{C}_{ #1 }}}
\newcommand{\cstate}[1][]{{\widetilde{{\mathrm{C}}}_{#1}}}

\newcommand{\swapio}{{\mathcal{S}_{I \leftrightarrow O}}}

\newcommand{\unital}{\textup{\texttt{Unital}}}
\newcommand{\trs}{\textup{\texttt{TRS}}}
\newcommand{\free}{\textup{\texttt{Free}}}

\newcommand{\diag}{{\mathrm{diag}}}

\DeclareMathOperator*{\argmin}{arg\,min}

\begin{document}

\title{Dynamical resource theory of time-reversal symmetry breaking}

\author{Jisho Miyazaki}
\email{jisho.miyazaki@phys.s.u-tokyo.ac.jp}
\affiliation{Department of Physics, Graduate School of Science, The University of Tokyo}
\affiliation{Ritsumeikan University BKC Research Organization of Social Sciences}

\author{Kohdai Kuroiwa}
\affiliation{Institute for Quantum Computing and Department of Combinatorics and Optimization, University of Waterloo}
\affiliation{Perimeter Institute for Theoretical Physics}

\author{Mio Murao}
\affiliation{Department of Physics, Graduate School of Science, The University of Tokyo}
\affiliation{Trans-scale Quantum Science Institute, The University of Tokyo}


\date{\today}

\begin{abstract}
    Time-reversal symmetry and its violation (T-violation) are fundamental across diverse physics domains, from particle physics to fluctuation theorems and reciprocity.
    While time-reversal symmetry for isolated systems is simply determined by the Hamiltonian, quantifying the magnitude of T-violation, especially in noisy quantum processes, has been a subject of ongoing debate. Here, we propose an operationally meaningful framework to quantify the intrinsic T-violation of quantum channels. We integrate operationally defined time-reversal transformations into dynamical resource theory to formulate T-violation as a resource. T-violation decomposes into two distinct components: kinematic T-violation originating from antiunitary motion-reversal, and nonunitality driven purely by thermodynamics. This decomposition resolves the longstanding confusion between broken time-reversal symmetry and non-invertibility. Finally, we analyze the resulting resource theory of kinematic T-violation, showing that it is neither quality-like nor quantity-like, and proving the existence of a universal golden unit even within classical channels.
\end{abstract}


\maketitle

\section{Introduction}
Violation of time-reversal symmetry (T-violation) is often confused with the non-invertibility of physical processes, despite their being fundamentally distinct phenomena.
Time-reversal symmetry in classical dynamics requires both the original and motion-reversed processes to satisfy the equations of motion; this concept generalizes to quantum Hamiltonians through antiunitary motion-reversal operators \cite{wigner1931group,sakurai_modern_2011}. In addition to its fundamental role in particle physics, time-reversal symmetry underlies fluctuation theorems \cite{crooks1999entropy,tasaki_jarzynski_2000,crooks_quantum_2008,aurell_time_2015,manzano_nonequilibrium_2015}, Onsager reciprocity \cite{onsager1931reciprocal1,onsager1931reciprocal2,tsang_quantum_2025}, and other reciprocity relations \cite{Potton_2004_reciprocity,deak_reciprocity_2012,barzanjeh_nonreciprocity_2025}.

In contrast, non-invertibility---typically illustrated by a shattering wine glass or the mixing of coffee and milk---refers to a system's inability to return to its original state because of information loss. Whereas T-violation is a kinematic property, non-invertibility is thermodynamic. Indeed, entropy production is sometimes referred to as thermodynamic violation of time-reversal symmetry \cite{ghosal_identification_2026,liang_thermodynamic_2023,van_vliet_time-reversal_2012}.
Thermodynamic and kinematic T-violations are nevertheless distinct phenomena: one can easily conceive of an invertible physical process that violates kinematic T-symmetry. 

The distinction between T-violation and non-invertibility becomes more subtle for non-invertible state transformations. In this article, we introduce an operational definition of T-symmetry and T-violation for general quantum channels. Shifting from Hamiltonians to quantum channels is essential to account for non-invertibility. We demonstrate that while non-invertibility partially contributes to T-violation, its contribution can be rigorously decoupled from kinematic T-violation.

We formulate the equivalence between original and motion-reversed processes operationally and introduce intrinsic measures of T-violation for quantum channels, drawing on recent developments in dynamical resource theory \cite{gour_how_2019,li_quantifying_2020,liu_operational_2020,gonda_monotones_2023,ji_entropic_2023}. While a unified framework exists for quantifying intrinsic asymmetries of quantum states as resources \cite{gour_resource_2008,gour_measuring_2009,marvian_extending_2014}, T-violation---originally defined at the level of Hamiltonians---is inherently a property of state transformations rather than states themselves. We therefore extend this framework to quantum channel asymmetries using higher-order quantum computation \cite{chiribella_quantum_2008,chiribella_transforming_2008,chiribella_theoretical_2009,taranto2025higher}.

Consequently, our resource theory of T-violation is fundamentally distinct from reference-frame resource theories for breaking superselection under time reversal \cite{Gour_timereversal_2009} and CPT \cite{Skotiniotis_CPT_2013,Skotiniotis_CPT_2014pra}, which characterize the intrinsic asymmetry of states. Their extensions to dynamical resources do not coincide with our notion of T-violation, nor does our dynamical resource straightforwardly reduce to a state counterpart. Unlike these previous approaches, our formulation is consistent with the convention that a T-symmetric Hamiltonian commutes with antiunitary motion reversal, thereby characterizing T-violation directly.

Quantifying the degree of T-violation, rather than merely detecting its presence, is essential across physics. For instance, nonreciprocal devices require strong directionality rather than subtle imbalances \cite{barzanjeh_nonreciprocity_2025}. In particle physics, the magnitude of T-violation is related to parameters of the Cabibbo--Kobayashi--Maskawa matrix \cite{cabibbo1963unitary,kobayashi1973cp} and to the Jarlskog invariant \cite{Jarlskog_1985PhysRevLett.55.1039,jarlskog1985basis,Wu1986_PhysRevD.33.860}. Although our present formulation focuses on quantum channels, it provides a foundation for quantifying kinematic T-violation more generally.

Ultimately, we separate kinematic T-violation from the contribution associated with non-invertibility in general quantum channels. Non-invertible channels may be either unital or nonunital. Unital channels do not decrease system entropy \cite{chiribella_microcanonical_2017,gour_resource_2015} and make no contribution to T-violation. Nonunital channels map the uniform state to a nonuniform state and thereby contribute to T-violation. Isolating this thermodynamic contribution yields the quantitative relation
\begin{equation}
    \label{concept} \text{T-violation} = \text{nonunitality} ~+~ \text{kinematic T-violation}.
\end{equation}
This relation resolves the longstanding confusion between kinematic T-violation and non-invertibility.

Finally, we highlight several properties of our kinematic T-violation resource theory. Reference~\cite{gour_resource_2008} notes that discrete symmetry violations in quantum states do not scale extensively with system size. Although this observation does not apply directly to channel symmetries, we show that kinematic T-violation is neither extensive nor intensive. Furthermore, we construct a classical channel that maximally violates T-symmetry, showing that kinematic T-violation is not unique to quantum systems.

This article is organized as follows. In Sec.~\ref{sec:operational_theory}, after reviewing the physical interpretation of motion-reversal operators and T-symmetry in isolated systems, we define T-symmetry for quantum channels. Section~\ref{sec:nested_classes} provides the necessary ingredients to formulate T-violation as a dynamical resource, including unital and T-symmetric superchannels, and presents the structural decomposition of nonunitality and kinematic T-violation. Section~\ref{sec:decomposition} establishes the quantitative decomposition of these two contributions by applying a generalized Pythagorean theorem in information geometry to the T-violation resource monotones. In Sec.~\ref{sec:kinematic}, we focus on kinematic T-violation and explore its resource-theoretic properties. Finally, Sec.~\ref{sec:conclusion} concludes the paper.

\section{Operational formulation of T-symmetry}\label{sec:operational_theory}
Formally, the resource theory developed in this article may be constructed from any antiunitary operator. Its interpretation as a resource theory of T-violation, however, requires that this antiunitary be assigned the physical role of reversing the motion of the quantum system. This section clarifies the meaning of the antiunitary motion-reversal operator and explains how it serves as the organizing principle for the resource-theoretic formulation.

\subsection{Motion-reversal operator}\label{subsec:motion-reversal}
Throughout this work, the motion-reversal operator is treated as part of the kinematic specification of each quantum system. To clarify the physical content of this assumption, we recall the following standard construction.

The antiunitary motion-reversal operator was introduced by Wigner \cite{wigner1931group} in relation to a preassigned unitary time-displacement operator $U_{\Delta t}$. Given such a notion of time displacement on the system, Wigner characterized motion reversal by an antiunitary operator $\trev$ satisfying
\begin{equation}
    \label{wigner}    U_{\Delta t} \trev U_{\Delta t} \trev = \id, \quad (\forall \Delta t \in \real)
\end{equation}
which expresses the following operational cycle: after reversing the motion, displacing the system in time, reversing the motion once more, and applying the same time displacement again, the system returns to its initial state.

This characterization emphasizes that motion reversal is not an intrinsic structure independent of dynamics; rather, it is defined relative to the chosen time displacement. Its role is to invert the direction of the motion specified by that displacement when acting on quantum states.

Although the term ``time reversal'' is more common, we use ``motion reversal'' deliberately. For the present formulation, it is essential to distinguish the antiunitary operation from a literal reversal of time itself: the operation reverses the kinematic sense of motion, while the interpretation as time reversal rests on how that motion is physically understood. This terminology is also supported by several discussions in the literature, which point out that ``time reversal'' can be misleading and that ``motion reversal'' is often the more precise description \cite{ballentine1998quantum,Gibson:1976wp,lopez_review_2024,sakurai_modern_2011}. Wigner likewise described the operation as a reversal of motion rather than a reversal of time \cite{wigner1931group}.

In this article, the relevant time displacement is not identified with the actual evolution of the system. Instead, we assume a reference, or free, Hamiltonian $H_\mathrm{free}$ and take $U_{\Delta t}=e^{-i H_\mathrm{free} \Delta t}$ as the corresponding free evolution. The motion-reversal operator is then defined relative to this reference displacement. Since condition \eqref{wigner} does not determine $\trev$ uniquely, one typically supplements it with additional physical requirements, such as $\trev \hat{p} \trev^{-1} = - \hat{p}$ for the momentum operator $\hat{p}$. For instance, \cite{Roberts_2017} derives the motion-reversal operator from further constraints, including spatial isotropy.

Different choices of time displacement in \eqref{wigner} therefore lead to different motion-reversal operators and, consequently, to different assessments of T-symmetry \cite{lopez_review_2024}. For example, if the time displacement is identified with the actual evolution $e^{- i H \Delta t}$, then \eqref{wigner} can always be satisfied by choosing an antiunitary operator $\trev$ that commutes with $H$. In that convention, \eqref{wigner} effectively becomes the definition of T-symmetry, and the symmetry is built into the definition of the motion-reversal operator itself. We adopt a different convention: the actual evolution of the system need not coincide with the reference time displacement used to define motion reversal.

Irrespective of the displacement operator, we assume that every motion-reversal operator is \textit{involutive} in the sense that
\begin{equation}
    \trev^2 = e^{i \phi} \id,
\end{equation}
for some phase $\phi$. Physically, this means that applying motion reversal twice returns a system to its original state, consistent with the meaning of ``reversal.'' More generally, if the system has several superselection sectors, the phase factor may depend on the sector. Here, we focus on the simplest setting, without a decomposition into superselection sectors.

\subsection{Traditional formulation of T-symmetry for Hamiltonians}
Once a motion-reversal operator has been specified, one can ask how T-symmetry should be attributed to a quantum system. The most direct possibility would be to call a quantum \emph{state} $\ket[\psi]$ T-symmetric when
\begin{equation}
    \trev \ket[\psi] = \ket[\psi].
\end{equation}
This perspective leads to resource theories of breaking time-reversal superselection \cite{Gour_timereversal_2009}  and imaginarity \cite{hickey_quantifying_2018,wu_resource_2021,wu_operational_2021}, with the reference basis determined by the chosen antiunitary operator. Our objective is different: rather than treating T-violation as a property of states, we seek a formulation grounded in the reversal of dynamical processes.

The relevant intuition comes from classical mechanics. Dynamics carrying an initial state to a final state are regarded as T-symmetric when the corresponding motion-reversed dynamics also satisfy the equation of motion. The motion-reversed trajectory begins at the motion-reversed final state and terminates at the motion-reversed initial state.

In quantum mechanics, the conventional formulation of this idea is as follows \cite{sakurai_modern_2011}: whenever $\ket[\psi_s(t)]$ is a solution of the Schr\"odinger equation
\begin{equation}
    \label{schrodinger}    i \frac{d}{dt} \ket[\psi (t)] = H \ket[\psi (t)],
\end{equation}
then $\trev \ket[\psi_s(-t)]$ is also a solution. This condition holds if and only if
\begin{equation}
    \label{hamiltonian} [\trev, H] = 0.
\end{equation}

From an operational standpoint, imposing T-symmetry directly at the level of the Schr\"odinger equation is unnecessarily stringent, because it constrains unobservable global phases. Nevertheless, the same Hamiltonian condition is recovered even if one instead starts from the Liouville--von Neumann equation
\begin{equation}
    i \frac{d}{dt} \rho(t) = [H,\rho(t)],
\end{equation}
as the equation of motion. Indeed, suppose that $\rho(t) := e^{-iHt} \rho_0 e^{iHt}$ is a solution, and require the motion-reversed trajectory $\rho_\trev(t) := \trev \rho(-t) \trev^\dagger$ to be a solution as well. Then
\begin{eqnarray*}
    [H,\rho_\trev(t)] &= i \frac{d}{dt} \rho_\trev(t) = i \trev \left( \frac{d}{dt} \rho(-t) \right) \trev^\dagger \\
    &= \trev [H,\rho(-t)] \trev^\dagger = [\trev H \trev^\dagger, \rho_\trev(t)],
\end{eqnarray*}
and hence
\begin{equation}
    [H - \trev H \trev^\dagger, \rho_\trev(t)]=0.
\end{equation}
Since this relation must hold for every possible density matrix $\rho_\trev(t)$, Schur's lemma implies
\begin{equation}
    \label{identity_freedom}    \trev H \trev^\dagger = H + c \id. \quad (c \in \mathbb{C})
\end{equation}
Because $\trev$ is antiunitary, $\trev H \trev^\dagger$ has the same spectrum as $H$. The two sides of \eqref{identity_freedom} can therefore have identical spectra only when $c=0$, which yields \eqref{hamiltonian}. On this basis, we take \eqref{hamiltonian} as the definition of a T-symmetric Hamiltonian.\footnote{Under the alternative convention, discussed at the end of Sec.~\ref{subsec:motion-reversal}, that identifies the reference time displacement with the actual time evolution, T-symmetry is regarded as an intrinsic property of quantum mechanics and is satisfied by construction.}


\subsection{Process-based T-symmetry and microreversibility}
We next introduce a definition of T-symmetry for unitary transformations that more directly reflects the preceding process-based intuition and is better suited to an operational formulation. The resulting criterion is closely related to ``microreversibility,'' which is usually regarded as a consequence of T-symmetry.

We regard the unitary transformation $U$ as the underlying dynamics, while distinguishing it from the observed ``process,'' understood simply as an input-output relation. If the original process is
\begin{equation}
    \ket[\psi_1] \rightarrow \ket[\psi_2],
\end{equation}
then the corresponding motion-reversed process is
\begin{equation}
    \trev \ket[\psi_2] \rightarrow \trev \ket[\psi_1].
\end{equation}
The relation between the two processes is illustrated in Fig.~\ref{fig:motion_reversal}.
\begin{figure}[tbp]
    \includegraphics[width=.45\textwidth]{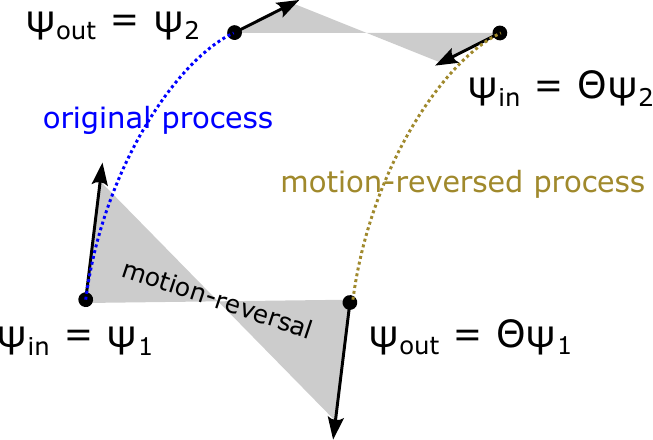}
    \caption{\label{fig:motion_reversal} Schematic relation between original and motion-reversed processes for a system with momentum.}
\end{figure}
T-symmetry is thus expressed at the level of observable processes: the original and motion-reversed processes must occur with the same probability. More precisely,
\begin{definition}\label{def:trs_operational}
    Let $\trev:\hil \rightarrow \hil$ be the motion-reversal antiunitary operator. A unitary transformation $U:\hil \rightarrow \hil$ is defined to be T-symmetric if
    \begin{equation}
        \label{transition_prob} | \langle \psi_2 | U | \psi_1 \rangle | = | \langle \trev \psi_1 | U | \trev \psi_2 \rangle |. \quad (\forall \ket[\psi_1], \ket[\psi_2] \in \hil)
    \end{equation}
\end{definition}
\noindent This notion of T-symmetry is closely related to microreversibility. A formulation of microreversibility particularly close to Def.~\ref{def:trs_operational} appears, for example, in the discussion of scattering matrices in \cite{messiah1961quantum} (see Chapter XIX, $\S$ 3), where original and motion-reversed scattering processes are required to have the same amplitude. Whether microreversibility equates transition amplitudes, including their complex phases, or only transition probabilities, as in Def.~\ref{def:trs_operational}, depends on the context.


The following theorem relates this operational definition to the Hamiltonian condition \eqref{hamiltonian}.
\begin{theorem}\label{thm:trs_algebraic}
    \begin{description}
        \item[(i)] A unitary transformation is T-symmetric if and only if there exists a phase factor $e^{i \phi}$ such that $U_\trev := \trev^\dagger U^\dagger \trev = e^{i \phi} U$.
        \item[(ii)] The unitary transformation $e^{-iHt}$ is T-symmetric for any $t \in \real$ if and only if $[H,\trev]=0$.
        In this case, $\trev^\dagger e^{iHt} \trev = e^{-iHt}$ holds for any $t$.
    \end{description}
\end{theorem}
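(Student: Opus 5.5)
For (i), the first step is to rewrite the right-hand side of \eqref{transition_prob} using antiunitarity of $\trev$. For antiunitary $\trev$ we have $\langle \trev \alpha | \trev \beta \rangle = \overline{\langle \alpha|\beta\rangle}$, so that
\begin{equation*}
\langle \trev\psi_1 | U | \trev\psi_2\rangle = \overline{\langle \psi_1 | \trev^\dagger U \trev | \psi_2\rangle} = \langle \psi_2 | \trev^\dagger U^\dagger \trev | \psi_1\rangle = \langle \psi_2 | U_\trev | \psi_1\rangle .
\end{equation*}
The adjoint of the (linear) operator $\trev^\dagger U \trev$ is $\trev^\dagger U^\dagger \trev$, which is what makes the middle equality work. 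Condition \eqref{transition_prob} therefore becomes $|\langle \psi_2|U|\psi_1\rangle| = |\langle\psi_2|U_\trev|\psi_1\rangle|$ for all $\psi_1,\psi_2$. Note that $U_\trev$ is a linear unitary, being a product of two antiunitaries and a unitary.

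The \emph{if} direction is then immediate. For the \emph{only if} direction, I set $V := U^\dagger U_\trev$, which is unitary, and substitute $\psi_2 = U\phi$. The condition becomes $|\langle \phi|\psi_1\rangle| = |\langle \phi | V|\psi_1\rangle|$ for all $\phi,\psi_1$. Taking $\phi=\psi_1$ gives $|\langle\psi_1|V|\psi_1\rangle|=1$ for every unit vector. Cauchy--Schwarz equality then forces $V\psi_1 = \lambda(\psi_1)\psi_1$, so every vector is an eigenvector of $V$. The standard argument applies: if $e_1,e_2$ are eigenvectors with eigenvalues $\lambda_1\neq\lambda_2$, then $e_1+e_2$ fails to be an eigenvector. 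Hence $V=e^{i\phi}\id$ and $U_\trev = e^{i\phi}U$. This rigidity step (probability preservation for all pairs forcing a global phase) is the main point, though it is elementary. An alternative route would be to invoke Wigner's theorem, but the direct eigenvector argument avoids this.

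For (ii), the key identity is that $\trev^\dagger e^{iHt}\trev = e^{\trev^\dagger (iH t)\trev} = e^{-i(\trev^\dagger H\trev)t}$, since antiunitary conjugation conjugates $i$. If $[H,\trev]=0$, then $\trev^\dagger H \trev = H$, giving $\trev^\dagger e^{iHt}\trev=e^{-iHt}$. This is $U_\trev=U$ for $U=e^{-iHt}$, so the forward direction and the final claim follow from (i). Conversely, suppose (i) gives $e^{-i(\trev^\dagger H\trev)t} = e^{i\phi(t)} e^{-iHt}$ for all $t$. Setting $H' := \trev^\dagger H \trev$, we have $e^{-iH't}e^{iHt}$ scalar for all $t$. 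I would differentiate at $t=0$, after noting that $\phi(t)$ can be chosen smooth: for instance, take a trace, $e^{i\phi(t)} = \trace(e^{-iH't}e^{iHt})/d$, which is smooth in finite dimension. This yields $H' = H + c\id$. Then, exactly as in the argument leading from \eqref{identity_freedom} to \eqref{hamiltonian}, isospectrality of $H'$ and $H$ forces $c=0$, so $\trev^\dagger H\trev = H$, i.e.\ $[H,\trev]=0$ using $\trev\trev^\dagger=\id$. The only delicate point here is the smoothness/differentiation of the phase, which the trace formula handles in finite dimensions.
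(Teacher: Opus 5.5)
Your proposal is correct and follows the paper's proof essentially step for step. For (i) you use the same rewriting of $\langle \trev\psi_1|U|\trev\psi_2\rangle$ as $\langle\psi_2|U_\trev|\psi_1\rangle$, and for (ii) the same differentiation of $\trev^\dagger e^{iHt}\trev = e^{i\phi(t)}e^{-iHt}$ followed by the isospectrality argument that removes the scalar shift. The paper simply asserts that equal transition probabilities force $U_\trev$ and $U$ to agree up to a phase, and it tacitly assumes $\phi(t)$ is differentiable; you justify both, via the every-vector-is-an-eigenvector argument and the trace formula for $e^{i\phi(t)}$, which tightens the proof without changing the route.
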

\begin{proof}
    (i) Recall that the Hermitian adjoint of antilinear operators is defined by $\bra[\psi] (\trev^\dagger \ket[\phi]) = \bra[\phi] (\trev \ket[\psi])$. We have
    \begin{eqnarray}
        \langle \trev \psi_1 | U | \trev \psi_2 \rangle &=  \langle U^\dagger \trev \psi_1 | \trev \psi_2 \rangle \\
        &= \langle \psi_2 |  \trev^\dagger | U^\dagger \trev \psi_1 \rangle \\
        &= \langle \psi_2 |  \trev^\dagger U^\dagger \trev | \psi_1 \rangle.
    \end{eqnarray}
    The T-symmetry constraint is thus equivalent to
    \begin{equation}
        | \langle \psi_2 | U | \psi_1 \rangle | = | \langle \psi_2 |  U_\trev | \psi_1 \rangle |. \quad (\forall \ket[\psi_1], \ket[\psi_2] \in \hil)
    \end{equation}
    This equality states that the unitary channels $U \cdot U^\dagger$ and $U_\trev \cdot U_\trev^\dagger$ coincide, which means that the operators $U_\trev$ and $U$ differ only by a phase.

    (ii) Since sufficiency is immediate, we prove necessity. Assume
    \begin{equation}
        \trev^\dagger e^{iHt} \trev = e^{i \phi(t)} e^{-iHt},
    \end{equation}
    with a $t$-dependent phase $\phi(t)$. Both sides are unitary. Differentiating with respect to $t$ gives
    \begin{eqnarray}
        \nonumber & -i \trev^\dagger H \trev (\trev^\dagger e^{iHt} \trev) = (i \phi(t)' \id -i H) (e^{i \phi(t)} e^{-iHt}) \\
        \Leftrightarrow & \trev^\dagger H \trev = H - \phi(t)' \id.
    \end{eqnarray}
    For the spectra of both sides to agree, one must have $\phi(t)' = 0$. Hence $[H,\trev]=0$, and $\phi(t)$ is constant. This constant can be taken to be zero, since at $t=0$ we have $\id = \trev^\dagger \id \trev = e^{i \phi} \id$.
\end{proof}
Definition~\ref{def:trs_operational} therefore gives an operational test for T-violation: one compares the transition probabilities of the original and motion-reversed processes.

Although the T-symmetry condition may be identified with microreversibility, we avoid doing so for several reasons. First, particularly in the context of fluctuation theorems, microreversibility is often imposed only on a restricted class of transitions, such as transitions between energy eigenstates, whereas Def.~\ref{def:trs_operational} requires the condition to hold for every pair of states. When microreversibility is imposed only on a restricted class of states, its equivalence to Hamiltonian T-symmetry, established in Theorem~\ref{thm:trs_algebraic}, need not follow.
Second, although T-symmetry and microreversibility may be identified, our notion of T-\textit{violation} is not readily identified with conventional micro\textit{ir}reversibility. The resource theory developed here is motivated by kinematic T-violation.
By contrast, violations of microreversibility (and hence detailed balance) have primarily been associated with thermodynamic T-violation in previous studies \cite{ghosal_identification_2026,liang_thermodynamic_2023,van_vliet_time-reversal_2012}.
For terminological consistency, we therefore use ``T-symmetry'' in Def.~\ref{def:trs_operational}.

We now assign terminology to the unitary operator $U_\trev$ appearing in Thm.~\ref{thm:trs_algebraic} (i).
\begin{definition}\label{def:time_reversal_unitary}
    For a unitary operator $U:\hil \rightarrow \hil$, the operator $\trev^\dagger U^\dagger \trev$ is referred to as its reverse-process simulator (RPS) and denoted by $U_\trev$.
\end{definition}
\noindent RPS is slightly but strikingly different from the ``time reversal'' of unitary used in fluctuation theorems \cite{campisi_colloquium_2011,aurell_time_2015,manzano_nonequilibrium_2015}, given by
\begin{equation}
    \label{trev_unitary}    U^\mathrm{tr}_\trev := \trev U^\dagger \trev^\dagger.
\end{equation}
The time reversal is defined to satisfy
\begin{equation}
    | \langle \psi_2 | U | \psi_1 \rangle | = | \langle \trev \psi_1 | U^\mathrm{tr}_\trev | \trev \psi_2 \rangle |. \quad (\forall \ket[\psi_1], \ket[\psi_2] \in \hil),
\end{equation}
whereas the RPS satisfies
\begin{equation}
    \label{time_reversal_unitary}   | \langle \psi_2 | U_\trev | \psi_1 \rangle | = | \langle \trev \psi_1 | U | \trev \psi_2 \rangle |. \quad (\forall \ket[\psi_1], \ket[\psi_2] \in \hil).
\end{equation}
Equation~\eqref{time_reversal_unitary} shows that the transition probability of the motion-reversed process $\trev \ket[\psi_2] \rightarrow \trev \ket[\psi_1]$ can be represented as the transition probability of the original process $\ket[\psi_1] \rightarrow \ket[\psi_2]$ under $U_\trev$. This motivates calling $U_\trev$ the reverse process simulator (RPS) of $U$.

Before extending RPS to quantum channels, we record two elementary features of the unitary case. First, any antiunitary operator can be decomposed as $\trev = \conj X$, where $\conj$ is complex conjugation in the computational basis and $X$ is a unitary operator. The RPS is then represented as
\begin{equation}
    U_\trev = X^\dagger U^\top X,
\end{equation}
where $\top$ denotes transposition. Thus, RPS amounts to transposition up to unitary conjugation.

Second, symmetric unitaries are not generally closed under sequential composition. Indeed,
\begin{equation}
    (UV)_\trev = \trev^\dagger V^\dagger U^\dagger \trev = V_\trev U_\trev \neq U_\trev V_\trev = UV.
\end{equation}
Thus, noncommutativity of $U$ and $V$ obstructs closure under sequential composition.

\subsection{T-symmetry of channels}\label{subsec:trs_channels}
We now extend the preceding operational definition to general dynamics, represented in operational quantum theory by quantum channels. This extension is necessary for treating experimentally relevant situations in which noise cannot be neglected, and it is also essential for formulating a resource theory in terms of the broadest operationally available transformations. The guiding principle is the same as in Def.~\ref{def:trs_operational}: dynamics are T-symmetric when the original and motion-reversed processes occur with the same probability.

Generalizing transition probabilities requires particular care. A naive extension of condition \eqref{transition_prob} to quantum channels is
\begin{equation}
    \label{trs_channel_condition0}   \trace[\rho_2 \chan (\rho_1)] = \trace[\trev \rho_1 \trev^\dagger \chan (\trev \rho_2 \trev^\dagger)], \quad \forall \rho_1, \rho_2.
\end{equation}
However, the left-hand side is usually interpreted as the probability of the effect $\rho_2$ on the state $\chan (\rho_1)$, and the right-hand side as that of the effect $\trev \rho_1 \trev^\dagger$ on the state $\chan (\trev \rho_2 \trev^\dagger)$, rather than as transition probabilities between states. This interpretation requires $\trev$ to interchange states and effects, an assumption that lacks an immediate operational justification. This issue does not arise for spatial symmetry transformations such as rotations and parity inversion, which map states to states and effects to effects.

Oreshkov and Cerf \cite{oreshkov_operational_2015} provide an operational formulation of time reversal that interchanges state preparations and measurements. We follow their approach to define T-symmetry for channels. Their formulation generalizes the Born rule to treat preselection in state preparation and postselection in measurement symmetrically. A state preparation on system $A$ is represented by a pair of positive semidefinite operators $(\rho_A ; \overline{\rho}_A)$ such that $\rho_A \leq \overline{\rho}_A$ and $\trace[\overline{\rho}_A]=1$. An effect on $A$ is represented by a pair of positive semidefinite operators $(E_A ; \overline{E}_A)$ such that $E_A \leq \overline{E}_A$ and $\trace[\overline{E}_A] = d_A$. The generalized probability of the event $(\rho_A,E_A)$, conditioned on the preparation and postselection specified by $\overline{\rho}_A$ and $\overline{E}_A$, is
\begin{equation}
    p((\rho_A ; \overline{\rho}_A),(E_A ; \overline{E}_A)) = \frac{\trace[E_A \rho_A]}{\trace[\overline{E}_A \overline{\rho}_A]},
\end{equation}
whenever the denominator is nonzero; otherwise, $p((\rho_A ; \overline{\rho}_A),(E_A ; \overline{E}_A)) = 0$. The denominator $\trace[\overline{E}_A \overline{\rho}_A]$ normalizes the probability over the restricted class of events. The standard Born rule is recovered by choosing $\rho_A = \overline{\rho}_A$ and $\overline{E}_A = \id_A$. In the Oreshkov--Cerf formulation, a time-reversal transformation maps states to effects and effects to states. Before specifying the motion-reversal operator, we write the general transformation $\hat{S}$ allowed in operational probabilistic theory as
\begin{align}
    \label{oreshkov_cerf_1} \hat{S}_{s \rightarrow e} (\rho_A ; \overline{\rho}_A) &= \left( d_A \frac{S \rho_A^\top S^\dagger}{\trace[S \overline{\rho}_A^\top S^\dagger ]} ; d_A \frac{S \overline{\rho}_A^\top S^\dagger}{\trace[S \overline{\rho}_A^\top S^\dagger ]} \right) \\
    \label{oreshkov_cerf_2} \hat{S}_{e \rightarrow s} (E_A ; \overline{E}_A) &= \left( \frac{S^{-1 \dagger} E_A^\top S^{-1}}{\trace[S^{-1 \dagger} \overline{E}_A^\top S^{-1}]} ; \frac{S^{-1 \dagger} \overline{E}_A^\top S^{-1}}{\trace[S^{-1 \dagger} \overline{E}_A^\top S^{-1}]} \right),
\end{align}
where $S$ is any invertible linear operator on $\hil_A$.

To define T-symmetric channels within the Oreshkov--Cerf formulation, we specialize the transformations in \eqref{oreshkov_cerf_1} and \eqref{oreshkov_cerf_2} using the motion-reversal operator $\trev$ on $A$:
\begin{align}
    \label{s_to_e}    \hat{\trev}_{s \rightarrow e} (\rho_A ; \overline{\rho}_A) &= \left( d_A \frac{\trev \rho_A \trev^\dagger}{\trace[\trev \overline{\rho}_A \trev^\dagger ]} ; d_A \frac{\trev \overline{\rho}_A \trev^\dagger}{\trace[\trev \overline{\rho}_A \trev^\dagger ]} \right) \\
    &= \left( d_A \trev \rho_A \trev^\dagger ; d_A \trev \overline{\rho}_A \trev^\dagger \right), \\
    \label{e_to_s}  \hat{\trev}_{e \rightarrow s} (E_A ; \overline{E}_A) &= \left( \frac{\trev E_A \trev^{-1}}{\trace[\trev \overline{E}_A \trev^{-1}]} ; \frac{\trev \overline{E}_A \trev^{-1}}{\trace[\trev \overline{E}_A \trev^{-1}]} \right) \\
    &= \left( \frac{1}{d_A} \trev E_A \trev^\dagger ; \frac{1}{d_A}\trev \overline{E}_A \trev^\dagger \right),
\end{align}
by taking $\trev = S \conj$. This kinematic specification is not implied by the structure of operational probabilistic theory alone. The composition rule of \cite{oreshkov_operational_2015} generalizes the transition probability $|\langle \psi_2 | U | \psi_1 \rangle|^2$ to channels as
\begin{equation}
    p((\rho_A ; \overline{\rho}_A),\chan,(E_A ; \overline{E}_A)) := \frac{\trace[E_A \chan(\rho_A)]}{\trace[\overline{E}_A \chan(\overline{\rho}_A)]}.
\end{equation}
With these preliminaries, we introduce an operational definition of T-symmetric channels.
\begin{definition}
    Let $\trev: \hil \rightarrow \hil$ be the motion-reversal antiunitary operator. A quantum channel (i.e., a CPTP map) $\chan:\B(\hil) \rightarrow \B(\hil)$ is defined to be T-symmetric if
    \begin{equation}
        \label{trs_channel_condition}   p((\rho ; \overline{\rho}),\chan,(E ; \overline{E})) = p( \hat{\trev}_{e \rightarrow s} (E ; \overline{E}),\chan,\hat{\trev}_{s \rightarrow e} (\rho ; \overline{\rho}))
    \end{equation}
    for any pair consisting of a state $(\rho ; \overline{\rho})$ and an effect $(E ; \overline{E})$ on $\hil$ satisfying
    \begin{equation}
        \label{selection_symmetry}  (\dim \hil) \trev \overline{\rho} \trev^\dagger = \overline{E}.
    \end{equation}
\end{definition}
\noindent For this definition to be meaningful, the input and output spaces of the channel must be identical; otherwise, the effect $(E ; \overline{E})$, which belongs to the output space on the left-hand side of \eqref{trs_channel_condition}, cannot be motion-reversed into an input state $\hat{\trev}(E ; \overline{E})$ on the right-hand side.
The condition \eqref{selection_symmetry} requires the original and motion-reversed processes to share the same preselection and postselection scenarios.

The operation of taking the RPS can likewise be generalized from unitaries to channels.
\begin{definition}\label{def:time_reversal_channel}
    If $\chan:\B(\hil) \rightarrow \B(\hil)$ is a quantum channel, its reverse-process simulator (RPS) is defined to be the CP map $\chan_\trev$ such that
    \begin{equation}
        \label{time_reversal_interpretation}    p( \hat{\trev}_{e \rightarrow s} (E ; \overline{E}),\chan,\hat{\trev}_{s \rightarrow e} (\rho ; \overline{\rho})) = p((\rho ; \overline{\rho}),\chan_\trev,(E ; \overline{E})),
    \end{equation}
    for any pair consisting of a state $(\rho ; \overline{\rho})$ and an effect $(E ; \overline{E})$ on $\hil$ satisfying
    \begin{equation}
        \label{selection_symmetry_reversal}  (\dim \hil) \trev \overline{\rho} \trev^\dagger = \overline{E}.
    \end{equation}
\end{definition}
\noindent The RPS encodes the transition probability of the motion-reversed process in the language of the original process. The following theorem is an immediate consequence of the definition.
\begin{theorem}\label{thm:trs_channel}
    The reverse-process simulator (RPS) of a channel $\chan$ is given by
    \begin{equation}
        \chan_\trev := \adj_{\trev^\dagger} \circ \chan^\dagger \circ \adj_\trev = \adj_{X^\dagger} \circ \chan^\top \circ \adj_X,
    \end{equation}
    where $\adj_O$ denotes the adjoint action $\adj_O (\cdot) := O \cdot O^\dagger$, $\chan^\dagger$ denotes the Hilbert--Schmidt adjoint of $\chan$, and $\trev = \conj X$.
    A quantum channel is T-symmetric if and only if it is equal to its RPS, i.e., $\chan = \chan_\trev$.
\end{theorem}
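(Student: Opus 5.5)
The plan is to unfold both sides of \eqref{time_reversal_interpretation} using the explicit forms \eqref{s_to_e} and \eqref{e_to_s}, and then read off the map $\chan_\trev$ from the trace identity.

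\textbf{Step 1: expand the left-hand side.} Write $d=\dim\hil$. For a state $(\rho;\overline{\rho})$ and an effect $(E;\overline{E})$ with $\overline{E}=d\,\trev\overline{\rho}\trev^\dagger$, the left-hand side of \eqref{time_reversal_interpretation} becomes
\begin{equation*}
\frac{\trace[d\,\trev\rho\trev^\dagger\,\chan(\trev E\trev^\dagger/d)]}{\trace[d\,\trev\overline{\rho}\trev^\dagger\,\chan(\trev\overline{E}\trev^\dagger/d)]}.
\end{equation*}
The factors of $d$ cancel between numerator and denominator.

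\textbf{Step 2: move the antiunitaries across.} Next I use the Hilbert--Schmidt adjoint, $\trace[A\,\chan(B)]=\trace[\chan^\dagger(A)B]$, to put $\chan^\dagger$ on the $\trev\rho\trev^\dagger$ factor. Then I need the antiunitary trace identity $\trace[\trev A\trev^\dagger B]=\trace[A\,\trev^\dagger B^\dagger\trev]^*$. For Hermitian positive operators every trace involved is real. Applying this turns the numerator into
\begin{equation*}
\trace[E\,\trev^\dagger\chan^\dagger(\trev\rho\trev^\dagger)\trev]=\trace[E\,(\adj_{\trev^\dagger}\circ\chan^\dagger\circ\adj_\trev)(\rho)],
\end{equation*}
and the denominator similarly with bars. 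The key check is the identity $\trev^\dagger\trev=\id$ (antiunitary); the involutive phase plays no role.

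\textbf{Step 3: identify $\chan_\trev$ and compare forms.} Comparing with $p((\rho;\overline{\rho}),\chan_\trev,(E;\overline{E}))=\trace[E\chan_\trev(\rho)]/\trace[\overline{E}\chan_\trev(\overline{\rho})]$ shows that $\adj_{\trev^\dagger}\circ\chan^\dagger\circ\adj_\trev$ satisfies the defining relation. For uniqueness, I take $\overline{\rho}=\id/d$, so $\overline{E}=\id$. The denominators then reduce to fixed normalizations, and since $\rho\le\id/d$ and $0\le E\le\id$ range over spanning sets, the numerators determine the CP map.

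One subtlety is that the denominator normalization $\trace[\chan_\trev(\id/d)]$ must match. It does, because $\adj_{\trev^\dagger}\circ\chan^\dagger\circ\adj_\trev$ evaluated at $\id$ gives $\trev^\dagger\chan^\dagger(\id)\trev=\id$, using unitality of $\chan^\dagger$, which follows from trace preservation of $\chan$. Alternatively, the denominators are literally the same expression rewritten, so the ratio identity holds term by term.

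For the second form, I substitute $\trev=\conj X$. Then $\adj_\trev(A)=\conj XAX^\dagger\conj=\overline{XAX^\dagger}$, and I use $\overline{\chan^\dagger(\bar A)}=\chan^{\top}(A)$ in the sense of the transpose map, defined via Kraus operators $K_i\mapsto K_i^\top$. This gives $\adj_{X^\dagger}\circ\chan^\top\circ\adj_X$.

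\textbf{Step 4: the T-symmetry criterion.} By definition, $\chan$ is T-symmetric iff $p(\cdot,\chan,\cdot)=p(\cdot,\chan_\trev,\cdot)$ on all admissible pairs. By the uniqueness argument of Step 3, with $\overline{\rho}=\id/d$, this forces $\chan=\chan_\trev$; the converse is trivial.

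\textbf{Main obstacle.} I expect the hardest part to be the careful handling of antilinearity. Here $\trev^\dagger$ is defined via the antilinear adjoint, and I must make sure complex conjugates vanish, which relies on Hermiticity and reality of the traces. I also need to confirm that conjugating by $\conj$ converts $\chan^\dagger$ into the transpose map. I would verify this on Kraus operators: $\chan^\dagger$ has Kraus operators $K_i^\dagger$, and complex conjugation turns these into $K_i^\top$.
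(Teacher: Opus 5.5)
Your proposal is correct and matches the paper's approach: the paper gives no separate argument, calling the theorem an immediate consequence of the definition, and you unfold that definition directly. You get existence via the Hilbert--Schmidt adjoint and the antiunitary trace identity, the transpose form via $\overline{\chan^\dagger(\overline{A})}=\chan^\top(A)$, and the criterion via $\overline{\rho}=\id/d$, $\overline{E}=\id$. One caveat: the trace identity should read $\trace[\trev A\trev^\dagger B]=\overline{\trace[A\,\trev^\dagger B\trev]}$ (with $B$, not $B^\dagger$), which is harmless here since all operators involved are Hermitian.

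A second caveat concerns uniqueness in Step 3. Because the generalized probability is a ratio, it is unchanged under $\chan_\trev\mapsto c\,\chan_\trev$ for any $c>0$, so the defining relation fixes $\chan_\trev$ only up to a positive factor. The stated formula is the representative with $\trace[\chan_\trev(\id/d)]=1$, which you verify. This normalization is exactly what Step 4 needs: with $\overline{\rho}=\id/d$, both $\trace[\chan(\id/d)]$ and $\trace[\chan_\trev(\id/d)]$ equal $1$, since $\chan$ is trace-preserving. The iff statement is therefore unaffected.
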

\noindent Importantly, the adjoint $\chan^\dagger$, and therefore the RPS $\chan_\trev$, need not be a quantum channel, because it may fail to be trace-preserving.

Again, RPS is different from the time reversal of channels, given by
\begin{equation}
    \chan^\mathrm{tr}_\trev := \adj_\trev \circ \chan^\dagger \circ \adj_{\trev^\dagger},
\end{equation}
which reduces to \eqref{trev_unitary} for unitary channels.
When states and effects are reversed according to \eqref{s_to_e} and \eqref{e_to_s},
the time reversal of channels that appear in the Oreshkov--Cerf formulation coincides with $\chan^\mathrm{tr}_\trev$.
In their formulation, the time reversal of an operation from system $A_\mathrm{in}$ to system $A_\mathrm{out}$ is an operation from $A_\mathrm{out}$ to $A_\mathrm{in}$ and therefore reverses the temporal order. By contrast, the RPS channel in \eqref{time_reversal_interpretation} has the same temporal orientation as the original channel: its action on the original process reproduces the probabilities of the motion-reversed process.
Since a channel and its RPS share the same temporal orientation, equating them as in Theorem~\ref{thm:trs_channel} makes sense and hence the notion of kinematic T-symmetry is well defined.

As a symmetry transformation, the mapping $\chan \rightarrow \chan_\trev$ is not realizable as a linear supermap \cite{chiribella_symmetry_2021}. Instead, it belongs to the class known as ``input-output inversion'' \cite{chiribella_quantum_2022}, whose prominent feature is reversing the order of sequential composition, i.e., $(\chan^2 \circ \chan^1 )_\trev = \chan^1_\trev \circ \chan^2_\trev$. For this reason, one might be tempted to consider the input and output of $\chan_\trev$ to be reversed relative to those of the original channel $\chan$. However, as mentioned in the previous paragraph, we insist on a shared temporal orientation for both $\chan_\trev$ and $\chan$, which enables a direct comparison between them. This convention remains consistent as long as sequential compositions of the form $\chan^2 \circ \chan^1$ are not involved.

The Choi-operator representation of the RPS channel will play an important role in the subsequent analysis. A direct calculation gives
\begin{equation}
    \label{choi_time_reversal}  \choi[\chan_\trev] = \adj_{X^\dagger_\mathrm{out} \otimes X^\top_\mathrm{in}} \circ \swapio (\choi[\chan]),
\end{equation}
where $\trev = \conj X$, $\choi[\chan]$ denotes the Choi operator of $\chan$, and $\swapio$ denotes the operation that interchanges the input and output spaces of the Choi operator.

Two features arise here that have no analogue for unitary transformations. First, T-symmetry in this sense is independent of channel invertibility. The completely depolarizing channel $\chan: \rho \mapsto \id / d$, although maximally non-invertible, is T-symmetric. This is because the RPS of a channel is defined through the Hilbert--Schmidt adjoint $\chan^\dagger$, rather than through an inverse map $\chan^{-1}$. Theorem~\ref{thm:trs_channel} therefore makes it possible to isolate T-violation that persists even in the presence of noise. This highlights the distinction between our T-symmetry and thermodynamic T-symmetry that is violated by entropy production.

Second, channels exhibit an additional mechanism for T-violation that has no unitary counterpart: nonunitality. If the adjoint channel is not trace-preserving, then the RPS cannot coincide with a quantum channel. Channels whose adjoints are also channels are precisely the unital channels, also called bistochastic or bidirectional channels \cite{chiribella_quantum_2022}, characterized by $\chan (\id) = \id$. Hence, any channel that maps the uniform input $\id/d$ to a nonuniform output necessarily breaks T-symmetry.

The preceding argument suggests that overall T-violation comprises both nonunitality and ``kinematic'' T-violation. This intuition can be formulated rigorously using resource theory and quantum information geometry. In Sec.~\ref{sec:decomposition}, we define resource monotones for these three concepts and relate them through a simple identity.

\section{Nested classes of free superchannels}\label{sec:nested_classes}
A standard formulation of a quantum resource theory begins by specifying the free states and free operations, namely those objects and transformations that do not contain the resource. This specification underlies both the associated convertibility problem and operational quantifications of the resource.

The form of T-violation considered here does not fit neatly into this standard state-based framework. First, the resource of interest is intrinsic to channels rather than states. Resource theories of quantum channels, or dynamical resource theories, have been studied extensively \cite{gour_how_2019,li_quantifying_2020,liu_operational_2020,gonda_monotones_2023,ji_entropic_2023}, but several of them are constructed from an underlying state resource theory, whereas no such state-level resource theory is available for the present formulation of T-violation. Finally, for kinematic T-violation, the ambient class of admissible processes must be restricted to unital quantum channels, which form a proper subclass of all quantum channels.

To address these issues, we use the ``top-down'' approach of \cite{ji_entropic_2023}, which starts from superchannels rather than states, to construct nested dynamical resource theories.
We define unital and T-symmetric superchannels as subclasses of deterministic superchannels. The nested resource theories of T-violation, nonunitality, and kinematic T-violation are defined by the respective classes of channels and superchannels in Table~\ref{tab:nest}.
\begin{table}[htbp]
\caption{\label{tab:nest} Admissible channels, free channels and free superchannels of three dynamical resource theories.}
    \begin{ruledtabular}
        \begin{tabular}{lccc}
            Resource &
            \begin{tabular}{c}
                Admissible \\ channels 
            \end{tabular} &
            \begin{tabular}{c}
                Free \\ channels
            \end{tabular} &
            \begin{tabular}{c}
                Free \\ superchannels
            \end{tabular} \\
            \colrule
            T-violation & arbitrary & T-symmetric & T-symmetric \\
            Nonunitality & arbitrary & unital & unital \\
            \begin{tabular}{c}
                Kinematic ~~~ \\ ~ T-violation
            \end{tabular} & unital & T-symmetric & T-symmetric \\
        \end{tabular}
    \end{ruledtabular}
\end{table}
As shown below, every T-symmetric superchannel is unital. This inclusion is essential for defining kinematic T-violation.

\subsection{Notation}
We label systems by capital Roman letters $A,B,\ldots$. The special ``void'' system, represented by the one-dimensional space $\mathbb{C}$, is denoted by $I$.

The sets of unital and T-symmetric channels on $A$ are denoted by $\unital(A)$ and $\trs(A)$, respectively. When considering time-reversal symmetry, we assume that system $A$ is specified by both its Hilbert space $\hil_A$ and its motion-reversal operator $\trev_A$. The latter kinematic specification is unnecessary when considering nonunitality alone.

Deterministic superchannels (or simply superchannels), also known as deterministic combs or networks \cite{chiribella_quantum_2008,chiribella_theoretical_2009,chiribella_transforming_2008}, are transformations from channels to channels that can be physically implemented by encoder--decoder channel pairs.\footnote{All superchannels considered in this article have definite causal order.}

Mathematically, a superchannel from channels on $A$ to channels on $B$ is a specific linear map from $\B(\B(\hil_A),\B(\hil_A))$ to $\B(\B(\hil_B),\B(\hil_B))$. We denote the input and output spaces of channels on $A$ by $\hil_A^\mathrm{in}$ and $\hil_A^\mathrm{out}$, respectively, and use the analogous notation for $B$. In this notation, a superchannel $f$ from $A$ to $B$ is represented by a CPTP map from $\hil_B^\mathrm{in} \otimes \hil_A^\mathrm{out}$ to $\hil_A^\mathrm{in} \otimes \hil_B^\mathrm{out}$ with the no-signaling condition from $\hil_A^\mathrm{out}$ to $\hil_A^\mathrm{in}$.  We denote a superchannel $f$ of this type simply by
\begin{equation}
    f:A \multimap B,
\end{equation}
the corresponding CPTP map (refered to as ``quantum staircase'' in \cite{yokojima2026probabilisticstorageretrievalquantum}) by
\begin{equation}
    \mathcal{M}_f: \B(\hil_B^\mathrm{in} \otimes \hil_A^\mathrm{out}) \rightarrow \B(\hil_A^\mathrm{in} \otimes \hil_B^\mathrm{out}),
\end{equation}
and its Choi operator by $\choi[f] (= \choi[\mathcal{M}_f])$. Figure~\ref{fig:staircase} visualizes the relation between superchannel $f$ and corresponding CPTP map $\mathcal{M}_f$.
\begin{figure}[tbp]
    \includegraphics[width=.45\textwidth]{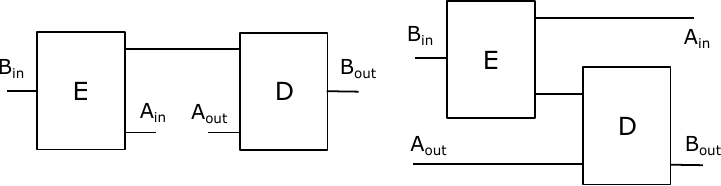}
    \caption{\label{fig:staircase} (Left) the circuit representation of a general superchannel $f:A \multimap B$ and (right) its map form $\mathcal{M}_f$.}
\end{figure}

The sequential composition of superchannels $f:A \multimap B$ and $g:B \multimap C$ is denoted by $g \bullet f: A \multimap C$, to distinguish it from the usual composition of channels $\circ$. Note that the ``sequence'' of superchannels does not follow the temporal order of operations. The parallel composition of superchannels is simply denoted by $\otimes$. The parallel composition of $f_1:A_1 \multimap B_1$ and $f_2:A_2 \multimap B_2$ has the type $A_1 \otimes A_2 \multimap B_1 \otimes B_2$.

For every system $E$, there is a unique identity superchannel that leaves every input channel unchanged. We denote it by $i_E: E \multimap E$.

\subsection{Unital superchannels}\label{subsec:unital}
We first define unital superchannels.
\begin{definition}\label{def:unital_superchannels}
    A deterministic superchannel $f:A \multimap B$ is said to be completely unitality-preserving, or simply unital, if, for every ancillary system $\hil_E$ and every $\chan \in \unital(A \otimes E)$, $f \otimes i_E (\chan) \in \unital(B \otimes E)$, where $i_E$ is the identity superchannel on $E$. The set of all unital superchannels from $A$ to $B$ is denoted by $\unital(A,B)$.
\end{definition}
\noindent Equivalently, a superchannel is unital if it maps unital channels to unital channels even when it acts only on a subsystem of a composite unital channel. This notion is closely related to the higher-order transformations of unital channels studied in \cite{apadula_higher-order_2026}; here, however, we restrict attention to higher-order transformations implementable by standard quantum circuits.

A simple class of unital superchannels is constructed by using unital channels for both the encoder and decoder, as depicted in Figure~\ref{fig:superchannels} (a). This construction relies on the fact that sequential composition preserves unitality. The superchannel of type $A \multimap I$ that discards its input channel (depicted in Figure~\ref{fig:superchannels} (c)) is also unital, even though neither its encoder nor its decoder is unital.
\begin{figure}[tbp]
    \includegraphics[width=.3\textwidth]{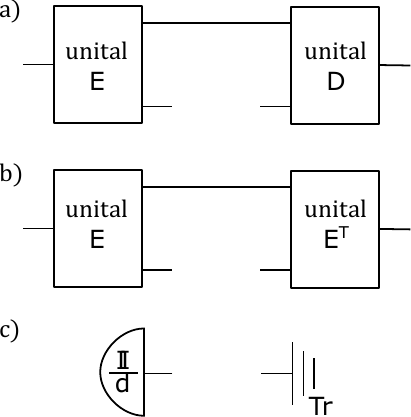}
    \caption{\label{fig:superchannels} (a) Circuit representation of an example of a unital superchannel, featuring a unital encoder channel $E$ and a unital decoder channel $D$. (b) Circuit representation of an example of a T-symmetric superchannel, featuring a unital encoder channel $E$ and its transposed unital decoder channel $E^\top$. The motion reversal is assumed to be given by complex conjugation. (c) A T-symmetric superchannel that discards its input channel.}
\end{figure}

For later use, we extend the definition of adjoint channel to superchannels.
\begin{definition}
    A superchannel $f':A \multimap B$ satisfying
    \begin{equation}
        \label{adjoint_superchannel}    [f \otimes i_E (\chan)]^\dagger = f' \otimes i_E (\chan^\dagger),
    \end{equation}
    for any ancillary system $E$ and any deterministic channel $\chan$ on $A \otimes E$, is called the adjoint of $f$ and is denoted by $f^\dagger$.
\end{definition}
\begin{lemma}\label{lem:unital}
    For any superchannel, its adjoint exists and is unique. For any unital superchannel, its adjoint is also a unital superchannel.
\end{lemma}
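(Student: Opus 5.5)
The plan is to work entirely with Choi operators, read the adjoint superchannel off as an explicit transformation of $\choi[f]$, and then check the required normalization conditions. Throughout, the word ``superchannel'' in the existence and uniqueness claim will be read as a completely-CP-preserving linear supermap, which need not be trace-preserving in general. This reading is forced: for the preprocessing superchannel $f(\Phi)=\Phi\circ E$ with a nonunital channel $E$, the only candidate is $f'(\Psi)=E^\dagger\circ\Psi$. This map is not trace-preserving on channels, because $\trace[E^\dagger(X)]=\trace[E(\id)X]$. Only in the unital case will I show that $f'$ is a deterministic superchannel.

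\emph{Uniqueness and existence.} Deterministic channels on $A\otimes E$ span (over $\mathbb{C}$) all linear maps on $\B(\hil_A\otimes\hil_E)$. The map $\chan\mapsto\chan^\dagger$ is a real-linear bijection on Hermiticity-preserving maps. Hence the values $f'\otimes i_E(\chan^\dagger)$ on channels determine $f'$ on a spanning set, which gives uniqueness. For existence I will define
\begin{equation*}
    f'(\Psi):=[f(\Psi^\dagger)]^\dagger
\end{equation*}
on Hermiticity-preserving $\Psi$ and extend it linearly. I will then compute its Choi operator. Realize $f$ with encoder $\mathcal{V}$ and decoder $\mathcal{D}$, so that $f(\Phi)=\mathcal{D}\circ(\Phi\otimes\iden_R)\circ\mathcal{V}$. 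Then $f'(\Psi)=\mathcal{V}^\dagger\circ(\Psi\otimes\iden_R)\circ\mathcal{D}^\dagger$ with CP maps $\mathcal{V}^\dagger$ and $\mathcal{D}^\dagger$. This shows that $f'$ is completely CP-preserving, and that it commutes with $\otimes i_E$, which is the defining identity \eqref{adjoint_superchannel}. Equivalently, in the spirit of \eqref{choi_time_reversal}, $\choi[f']$ is obtained from $\choi[f]$ by interchanging $\hil_A^\mathrm{in}\leftrightarrow\hil_A^\mathrm{out}$ and $\hil_B^\mathrm{in}\leftrightarrow\hil_B^\mathrm{out}$ and taking a transpose (complex conjugation). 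Both operations preserve positivity.

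\emph{Unital case.} The deterministic comb conditions for $f$ read
\begin{equation*}
    \trace_{B_\mathrm{out}}\choi[f]=C_1\otimes\id_{A_\mathrm{out}}, \qquad \trace_{A_\mathrm{in}}C_1=\id_{B_\mathrm{in}}.
\end{equation*}
I will show that complete unitality of $f$ is equivalent to the mirrored conditions
\begin{equation*}
    \trace_{B_\mathrm{in}}\choi[f]=C_2\otimes\id_{A_\mathrm{in}}, \qquad \trace_{A_\mathrm{out}}C_2=\id_{B_\mathrm{out}}.
\end{equation*}
Under the swap $\mathrm{in}\leftrightarrow\mathrm{out}$, the mirrored conditions for $f$ become exactly the deterministic comb conditions for $f'$, and the original comb conditions for $f$ become the mirrored (unitality) conditions for $f'$. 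Hence $f'$ is a deterministic, unital superchannel, and $(f')'=f$.

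\emph{Main obstacle.} The hard step is deriving the mirrored Choi conditions from Definition~\ref{def:unital_superchannels}. I would copy Chiribella--D'Ariano--Perinotti's characterization of deterministic combs. The Choi operators of unital channels on $A\otimes E$ form an affine set, namely $\{C\geq0:\trace_{\mathrm{in}}C=\id_{\mathrm{out}}\}$. Its affine span is the hyperplane $\{C \text{ Hermitian}:\trace_{\mathrm{in}}C=\id_{\mathrm{out}}\}$: it contains a neighbourhood of the interior point $\id/d$ and the defining constraint is affine. The condition ``$f\otimes i_E$ maps this set into the corresponding set on $B\otimes E$'' is affine in $\chan$. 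Therefore it extends to the whole hyperplane, and taking $E$ large enough to purify the link product yields the two partial-trace identities by the usual argument. The point I would check most carefully is the ancilla: it is needed so that the condition holds for correlated inputs, which pins down the form $C_2\otimes\id_{A_\mathrm{in}}$ rather than a weaker trace condition.
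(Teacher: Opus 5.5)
Your existence step matches the paper's: replacing the encoder and decoder by $\mathcal{D}^\dagger$ and $\mathcal{V}^\dagger$, with their roles swapped, gives a CP-preserving supermap that commutes with $\otimes i_E$. Your reading of ``superchannel'' as possibly non-trace-preserving also matches the paper's parenthetical remark. However, two of your justifications rest on false statements, although both conclusions survive. \emph{Uniqueness.} Channels on $A\otimes E$ do not span all linear maps. Any combination satisfies $\trace\circ\sum_k a_k\chan_k=(\sum_k a_k)\trace$, so their span is a proper subspace of codimension $n^2-1$, where $n=\dim(\hil_A\otimes\hil_E)$. Likewise, the adjoints $\chan^\dagger$ only span maps $\Phi$ with $\Phi(\id)\propto\id$. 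Uniqueness must therefore come from the ancilla. The cleanest fix is the trick the paper uses for T-freeness. Take $E=A$ and $\chan=S_A$, the swap, which is a self-adjoint channel. Then $\mathcal{M}_{f'}=f'\otimes i_A(S_A)=[f\otimes i_A(S_A)]^\dagger=\mathcal{M}_f^\dagger$, which pins down $\choi[f']$.

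\emph{Main obstacle.} You dropped trace preservation: the Choi operators of unital \emph{channels} form $\{C\geq 0:\trace_{\mathrm{out}}C=\id_{\mathrm{in}},\ \trace_{\mathrm{in}}C=\id_{\mathrm{out}}\}$. Definition~\ref{def:unital_superchannels} constrains $f$ only on genuine unital channels. Affinity therefore lets you extend to Hermitian $C$ obeying both constraints, not to the whole hyperplane $\trace_{\mathrm{in}}C=\id_{\mathrm{out}}$. The argument still closes.

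Differences of unital Choi operators span $\{\Delta:\trace_{\mathrm{in}}\Delta=0=\trace_{\mathrm{out}}\Delta\}$. For $\dim\hil_E\geq 2$, take $\Delta=X_{A_\mathrm{in}A_\mathrm{out}}\otimes Y_{E_\mathrm{out}}\otimes\id_{E_\mathrm{in}}/d_E$ with $\trace_{A_\mathrm{in}}X=0$ and $\trace Y=0$. This lies in that span and satisfies $\trace_{E_\mathrm{in}}\Delta=X\otimes Y$. Feeding it into the output-unitality condition gives $\trace_{A_\mathrm{in}A_\mathrm{out}}[(\trace_{B_\mathrm{in}}\choi[f])(X^\top\otimes\id_{B_\mathrm{out}})]=0$ for every such $X$. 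Hence $\trace_{B_\mathrm{in}}\choi[f]=C_2\otimes\id_{A_\mathrm{in}}$, and evaluating on the identity channel forces $\trace_{A_\mathrm{out}}C_2=\id_{B_\mathrm{out}}$. A qubit ancilla suffices and no purification is needed. Without an ancilla the same computation only yields $C_2\otimes\id_{A_\mathrm{in}}+\id_{A_\mathrm{out}}\otimes C_3$, which confirms your instinct about why the ancilla matters.

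With these repairs, your route is valid and genuinely different from the paper's for the unital part. The paper never touches Choi operators there. It transfers complete unitality directly via $f^\dagger\otimes i_E(\chan)=[f\otimes i_E(\chan^\dagger)]^\dagger$, using that $\chan^\dagger$ is a unital channel whenever $\chan$ is. That is a two-line, basis-free argument, but it only shows that $f^\dagger$ completely maps unital channels to unital channels. It leaves implicit that $f^\dagger$ is deterministic on \emph{all} channels, which Definition~\ref{def:unital_superchannels} requires and Appendix~\ref{sec:proof_choi_CPTP} uses. Establishing that takes exactly your affine-span-plus-ancilla argument, mirrored in/out and applied to trace preservation of the outputs. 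Your approach delivers determinism and unitality of $f'$ at once, from the in/out symmetry of $\choi[f']=\swapio(\overline{\choi[f]})$. It also yields the mirrored conditions on $\choi[f]$ directly, whereas the paper obtains them in Appendix~\ref{sec:proof_choi_CPTP} only as a consequence of this lemma.
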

\begin{proof}
    We denote the input and output spaces of channels on $A$ by $\hil_A^\mathrm{in}$ and $\hil_A^\mathrm{out}$, respectively, and use the analogous notation for $B$. Let $\mathcal{F}_e$ and $\mathcal{F}_d$ be the encoding and decoding channels of $f$, respectively, sharing a memory system $M$. The left-hand side of \eqref{adjoint_superchannel} can be written as
    \begin{align}
        \nonumber & [(\mathcal{F}_e \otimes \iden_E) \circ (\iden_M \otimes \chan) \circ (\mathcal{F}_d \otimes \iden_E)]^\dagger \\
        & = (\mathcal{F}_d^\dagger \otimes \iden_E) \circ (\iden_M \otimes \chan^\dagger) \circ (\mathcal{F}_e^\dagger \otimes \iden_E).
    \end{align}
    Therefore, defining $f^\dagger:A \multimap B$ as the superchannel with encoder $\mathcal{F}_d^\dagger$ and decoder $\mathcal{F}_e^\dagger$ (which need not be trace-preserving) yields the adjoint of $f$. Its uniqueness follows from linearity.

    If $f:A \multimap B$ is unital, then, for any unital channel $\chan$ on $A \otimes E$, we have
    \begin{equation}
        f^\dagger \otimes i_E (\chan) = [f \otimes i_E (\chan^\dagger)]^\dagger.
    \end{equation}
    Since $\chan^\dagger$ is unital and $f$ is unital, $f \otimes i_E (\chan^\dagger)$ is a unital channel. Therefore, $f^\dagger$ preserves unital channels completely.
\end{proof}

A class of superchannels defines the free transformations of a dynamical resource theory when it contains the identity superchannels and is closed under sequential composition \cite{ji_entropic_2023}. Unital superchannels satisfy both conditions. The free \textit{channels} of this resource theory are the superchannels from the void system $I$, which are precisely the unital channels:
\begin{equation}
    \unital(I,A) = \unital (A).
\end{equation}
Thus, unital superchannels define the dynamical resource theory of \textit{nonunitality}.

Notably, $\unital(A,B)$ is the maximal set of completely resource-nongenerating superchannels for nonunitality. In other words, every superchannel outside $\unital(A,B)$ can convert some unital channel into a nonunital one, possibly when acting on part of a composite system.

\subsection{T-symmetric superchannels}\label{subsec:trs_superchannels}
We extend the process-based definition of RPS in Sec.~\ref{subsec:trs_channels} to superchannels. The guiding principle remains the same: for T-symmetric channels and superchannels, the original and motion-reversed processes should occur with equal probability.

We define RPS of superchannels as follows.
\begin{definition}\label{def:trs_superchannel}
    For a deterministic superchannel $f:A \multimap B$, its reverse-process simulator (RPS) is defined as a superchannel $f_\trev:A \multimap B$, not necessarily deterministic, satisfying
    \begin{align}
        \nonumber &p((\rho_B \otimes \rho_A ; \overline{\rho}_B \otimes \overline{\rho}_A), \mathcal{M}_{f_\trev}, (E_A \otimes E_B ; \overline{E}_A \otimes \overline{E}_B)) = \\
        \label{time_reversal_superchannel} & p((E_B^\trev \otimes E_A^\trev ; \overline{E}_B^\trev \otimes \overline{E}_A^\trev), \mathcal{M}_f, (\rho_A^\trev \otimes \rho_B^\trev ; \overline{\rho}_A^\trev \otimes \overline{\rho}_B^\trev)),
    \end{align}
    for any states $(\rho; \overline{\rho})$ and effects $(E ;\overline{E})$ in the Oreshkov--Cerf formulation, where
    \begin{align}
        &E^\trev := \frac{\trev E \trev^\dagger}{d}, ~ \rho^\trev := d \trev \rho \trev^\dagger,\\
        &\overline{E}^\trev := \frac{\trev \overline{E} \trev^\dagger}{d} = \overline{\rho}, ~ \overline{\rho}^\trev := \frac{\trev \overline{\rho} \trev^\dagger}{d} = \overline{E}.
    \end{align}
    The set of all T-symmetric superchannels from $A$ to $B$ is denoted by $\trs(A,B)$.
\end{definition}
The RPS $f_\trev$, acting on the original process, reproduces the transition probability of the motion-reversed process. The original process comprises processes on $A$ and $B$ with the temporal ordering
\begin{align}
    (\rho_B ; \overline{\rho}_B) &\rightarrow (E_B ; \overline{E}_B), \\
    (E_A ; \overline{E}_A) &\rightarrow (\rho_A ; \overline{\rho}_A).
\end{align}
\begin{lemma}
    For any superchannel $f:A \multimap B$, its RPS exists and is uniquely determined by
    \begin{equation}
        \label{tr_superchannel} f_\trev = \mathcal{A}_{\trev_B} \bullet f^\dagger \bullet \mathcal{A}_{\trev_A^\dagger},
    \end{equation}
    where $\mathcal{A}_{\trev_B}:B \multimap B$ and $\mathcal{A}_{\trev_A^\dagger}:A \multimap A$ are supermaps defined by
    \begin{equation}
        \mathcal{A}_\trev (\chan) = \adj_{\trev^\dagger} \circ \chan \circ \adj_{\trev}.
    \end{equation}
\end{lemma}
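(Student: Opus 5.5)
The plan mirrors the proof of Theorem~\ref{thm:trs_channel}, but carried out at the level of the staircase map $\mathcal{M}_f$ rather than a channel. The proof has two parts: rewrite the defining identity \eqref{time_reversal_superchannel} as an identity between linear maps, which gives existence and uniqueness; then translate the resulting map back into superchannel language, which gives \eqref{tr_superchannel}.

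\emph{Step 1: reduce to numerators.} The right-hand side of \eqref{time_reversal_superchannel} has numerator
\begin{equation*}
\trace[(\rho_A^\trev\otimes\rho_B^\trev)\,\mathcal{M}_f(E_B^\trev\otimes E_A^\trev)].
\end{equation*}
By the defining property of the Hilbert--Schmidt adjoint, this equals $\trace[\mathcal{M}_f^\dagger(\rho_A^\trev\otimes\rho_B^\trev)\,(E_B^\trev\otimes E_A^\trev)]$. Its denominator is the same expression with barred operators. Because of the constraints $\overline{E}^\trev=\overline{\rho}$ and $\overline{\rho}^\trev=\overline{E}$, this denominator turns into the denominator of the left-hand side, provided $\mathcal{M}_{f_\trev}$ is built from $\mathcal{M}_f^\dagger$ in the same way as the numerator. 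It therefore suffices to find a CP map $\mathcal{N}$ satisfying
\begin{equation*}
\trace[(E_A\otimes E_B)\,\mathcal{N}(\rho_B\otimes\rho_A)]=\trace[\mathcal{M}_f^\dagger(\rho_A^\trev\otimes\rho_B^\trev)\,(E_B^\trev\otimes E_A^\trev)]
\end{equation*}
for all product operators.

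\emph{Step 2: solve the identity and get uniqueness.} We substitute the antiunitary conjugations, whose factors of $d$ cancel. We then use $\trace[\trev X\trev^\dagger\,\trev Y\trev^\dagger]=\overline{\trace[XY]}=\trace[Y^\dagger X^\dagger]$ for Hermitian arguments. This moves every $\trev$ onto $\mathcal{M}_f^\dagger$ and gives
\begin{equation*}
\mathcal{N}=\adj_{\trev_A^\dagger\otimes\trev_B^\dagger}\circ\mathcal{M}_f^\dagger\circ\adj_{\trev_A\otimes\trev_B},
\end{equation*}
where the factors are reordered so that $\mathcal{N}$ maps $B^\mathrm{in}\otimes A^\mathrm{out}$ to $A^\mathrm{in}\otimes B^\mathrm{out}$. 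This is exactly the manipulation used for Theorem~\ref{thm:trs_channel}, applied to $\mathcal{M}_f$ with the bipartite reversal $\trev_A\otimes\trev_B$. Uniqueness follows because product states and effects span the operator spaces and the identity is linear in each argument. The denominators then match automatically by the remark in Step 1, including the convention for the case where they vanish.

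\emph{Step 3: recognize the map as \eqref{tr_superchannel}.} We write $\mathcal{M}_f$ through an encoder $\mathcal{F}_e$ and decoder $\mathcal{F}_d$. As in the proof of Lemma~\ref{lem:unital}, $\mathcal{M}_f^\dagger$ is the staircase of $f^\dagger$, which has encoder $\mathcal{F}_d^\dagger$ and decoder $\mathcal{F}_e^\dagger$. Conjugating the $B$ wires by $\trev_B$ amounts to post-composing the output channel with $\mathcal{A}_{\trev_B}$. Conjugating the $A$ wires amounts to pre-composing the input channel with $\mathcal{A}_{\trev_A^\dagger}$: inside the comb, $\chan$ becomes $\adj_{\trev_A}\circ\chan\circ\adj_{\trev_A^\dagger}$.

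\emph{Main obstacle.} The delicate part is the bookkeeping in Steps 2 and 3:
\begin{itemize}
\item tracking which wire (in or out, $A$ or $B$) receives $\trev$ versus $\trev^\dagger$;
\item handling the index swap $A^\mathrm{in}\leftrightarrow A^\mathrm{out}$ that the adjoint induces on the staircase;
\item checking that the superchannel composition order $\bullet$ reverses as stated.
\end{itemize}
I would settle this by passing to Choi operators with $\trev=\conj X$. Each $\adj_\trev$ then becomes a transpose combined with conjugation by $X$, as in \eqref{choi_time_reversal}, and the claimed formula can be checked directly against a staircase built from an explicit encoder and decoder.
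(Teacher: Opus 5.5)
Your proposal is correct and is essentially the paper's proof. The paper likewise uses local tomography to identify $\mathcal{M}_{f_\trev}$ with the channel RPS $\adj_{\trev_A^\dagger\otimes\trev_B^\dagger}\circ\mathcal{M}_f^\dagger\circ\adj_{\trev_A\otimes\trev_B}$ of the staircase (it cites Theorem~\ref{thm:trs_channel} where you redo that adjoint/antiunitary manipulation inline), then uses $\mathcal{M}_f^\dagger=\mathcal{M}_{f^\dagger}$ and reads the wire conjugations as $\mathcal{A}_{\trev_B}\bullet(\cdot)\bullet\mathcal{A}_{\trev_A^\dagger}$.

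One caveat applies equally to your argument and the paper's. The defining condition is a ratio, so it is unchanged if the map is rescaled by any positive constant. What both arguments actually prove is uniqueness of the numerator-level identity you isolate in Step~1, so the claimed uniqueness holds only once that normalization is fixed.
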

\begin{proof}
    Local tomography in quantum theory uniquely identifies the CP map $\mathcal{M}_{f_\trev}$ satisfying \eqref{time_reversal_superchannel} as the RPS of $\mathcal{M}_f$. By Theorem~\ref{thm:trs_channel}, this map is
    \begin{align}
        (\mathcal{M}_f)_\trev &= \adj_{\trev_A^\dagger \otimes \trev_B^\dagger} \circ \mathcal{M}_f^\dagger \circ \adj_{\trev_A \otimes \trev_B} \\
        &= \adj_{\trev_A^\dagger \otimes \trev_B^\dagger} \circ \mathcal{M}_{f^\dagger} \circ \adj_{\trev_A \otimes \trev_B} \\
        &= \mathcal{M}_{\mathcal{A}_{\trev_B} \bullet f^\dagger \bullet \mathcal{A}_{\trev_A^\dagger}}.
    \end{align}
    This implies \eqref{tr_superchannel}.
\end{proof}
The Choi-operator expression of the RPS of $f$ is given by
\begin{equation}
    \label{choi_time_reversal_superchannel} \choi[f_\trev] = \adj_{X^\dagger_\mathrm{in} \otimes Y^\dagger_\mathrm{out} \otimes X^\top_\mathrm{out} \otimes Y^\top_\mathrm{in}} \circ \swapio (\choi[f]),
\end{equation}
where the unitary operators $X$ and $Y$ specify the motion-reversal operators through $\trev_A = \conj X$ and $\trev_B = \conj Y$, and $\swapio$ denotes the swap between input and output systems.

A simple class of T-symmetric superchannels is constructed by using mutually-transposed unital channels for both the encoder and decoder, as depicted in Figure~\ref{fig:superchannels} (b). In this example we assume that the motion reversal is given by complex conjugation $\conj$.
The superchannel of type $A \multimap I$ that discards its input channel (depicted in Figure~\ref{fig:superchannels} (c)) is also T-symmetric for any motion-reversal operator, even though the encoder-decoder pair is not mutually-transposed.

As an immediate consequence of the definition, T-symmetric superchannels from the void system $I$ are precisely the T-symmetric channels:
\begin{equation}
    \trs(I,A) = \trs(A).
\end{equation}
More generally, T-symmetric superchannels in $\trs(A,B)$ correspond exactly to the free operations for the T-violation resource theory.
\begin{definition}
    A deterministic superchannel $f:A \multimap B$ is said to be T-free if, for every ancillary system $E = (\hil_E,\trev_E)$ and every $\chan \in \trs(A \otimes E)$, one has $f \otimes i_E (\chan) \in \trs(B \otimes E)$.
\end{definition}
\begin{theorem}
    A deterministic superchannel is T-symmetric if and only if it is T-free.
\end{theorem}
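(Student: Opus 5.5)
We prove the two directions separately. Both rest on one covariance identity, which we establish first. For any superchannel $f:A \multimap B$, any ancillary system $E$ and any channel $\chan$ on $A\otimes E$, we claim
\begin{equation*}
    [f \otimes i_E(\chan)]_\trev = (f_\trev \otimes i_E)(\chan_\trev),
\end{equation*}
where channel RPS is taken with respect to $\trev_A\otimes\trev_E$ and $\trev_B\otimes\trev_E$, respectively. To prove it, we combine Theorem~\ref{thm:trs_channel}, which gives $\chan_\trev = \adj_{\trev^\dagger}\circ\chan^\dagger\circ\adj_\trev$, with the defining property of the adjoint superchannel, $[f\otimes i_E(\chan)]^\dagger = f^\dagger\otimes i_E(\chan^\dagger)$. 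We then use \eqref{tr_superchannel}, $f_\trev = \mathcal{A}_{\trev_B}\bullet f^\dagger\bullet\mathcal{A}_{\trev_A^\dagger}$. The remaining point is that the conjugations $\adj_{\trev_B^\dagger\otimes\trev_E^\dagger}$ and $\adj_{\trev_B\otimes\trev_E}$ applied to the output channel can be pulled through $f^\dagger\otimes i_E$, turning into $\mathcal{A}_{\trev_A^\dagger}$ acting on $\chan$. This is bookkeeping with the encoder/decoder form used in the proof of Lemma~\ref{lem:unital}. Because $\trev=\conj X$, the antiunitary parts factor as a global complex conjugation together with transposition, so the $E$ factor passes through trivially. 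It is cleanest to verify the identity at the Choi level using \eqref{choi_time_reversal} and \eqref{choi_time_reversal_superchannel}.

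\textbf{T-symmetric $\Rightarrow$ T-free.} Suppose $f=f_\trev$ and $\chan\in\trs(A\otimes E)$, so $\chan_\trev=\chan$. The identity gives $[f\otimes i_E(\chan)]_\trev = f\otimes i_E(\chan)$. The output is a channel because $f$ is deterministic, so by Theorem~\ref{thm:trs_channel} it lies in $\trs(B\otimes E)$.

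\textbf{T-free $\Rightarrow$ T-symmetric.} This is the main obstacle, since T-freeness constrains $f$ only on the restricted set of T-symmetric inputs. We plan to choose $E$ and $\chan$ so that the output determines $\choi[f]$ and the RPS condition for the output channel becomes $\choi[f_\trev]=\choi[f]$.

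Take $E = A'\otimes B'$, copies of the in/out spaces of $A$ with motion reversal $\trev_A$ (and similarly for $B$), so the ancilla holds $\hil_A^{\rm in}\otimes\hil_A^{\rm out}$. Let $\chan$ be the swap channel exchanging system $A$ with this ancilla, or a symmetrized mixture $\frac12(\mathrm{SWAP}+\mathrm{SWAP}_\trev)$ if needed. Since $\trev_A\otimes\trev_A$ commutes with the swap up to the fixed $X$-twist, this $\chan$ is unital and T-symmetric. Feeding it into $f\otimes i_E$ yields a unital channel whose Choi operator is, up to local $X$-conjugations, $\choi[f]$ routed through the ancilla; in effect it implements $\mathcal{M}_f$ with teleportation-free wiring. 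By assumption this output is T-symmetric. Unwinding its RPS by the covariance identity and \eqref{choi_time_reversal_superchannel} should give exactly $\choi[f_\trev]=\choi[f]$.

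If the plain swap does not produce $\mathcal{M}_f$ directly, the fallback is as follows. T-symmetric channels span all Hermiticity-preserving maps $\mathcal{G}$ with $\mathcal{G}=\mathcal{G}_\trev$; adding a small multiple of the completely depolarizing channel keeps them CPTP, and the depolarizing channel is itself T-symmetric. By linearity, $(f\otimes i_E)$ then commutes with RPS on the whole symmetric subspace. Choosing $E\cong A$ maximally and using the identity, we get $(f-f_\trev)\otimes i_E$ vanishing on the $+1$ eigenspace of the involution $\chan\mapsto\chan_\trev$. The last step is to show that the $-1$ eigenspace is also controlled, and this is where we expect the real difficulty. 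We would handle it with the involution's anti-linearity: multiplying a $-1$ eigenvector by $i$ in the complexified span gives a $+1$ eigenvector. This forces $f_\trev=f$.
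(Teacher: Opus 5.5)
Your direction T-symmetric $\Rightarrow$ T-free is the paper's argument. Your covariance identity $[f\otimes i_E(\chan)]_\trev=f_\trev\otimes i_E(\chan_\trev)$ is exactly the chain the paper writes out using $f_\trev=\mathcal{A}_{\trev_B}\bullet f^\dagger\bullet\mathcal{A}_{\trev_A^\dagger}$ and the adjoint superchannel.

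For the converse you reach for the paper's idea, which is to insert a swap so that the output is $\mathcal{M}_f$. As written, however, the step is mis-specified, and this is actually the easy direction, not the main obstacle. The ancilla must be a single copy of $A$, i.e.\ $E=A$ with motion reversal $\trev_A$. With your $E=A'\otimes B'$, holding $\hil_A^{\mathrm{in}}\otimes\hil_A^{\mathrm{out}}$, a swap of $A$ with $E$ is not even dimensionally defined.

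For $E=A$, the swap unitary $S_A$ commutes exactly with $\trev_A\otimes\trev_A$ (check on product vectors and extend antilinearly). So $S_A\in\trs(A\otimes A)$ with no $X$-twist or symmetrization. Up to relabeling, $f\otimes i_A(S_A)=\mathcal{M}_f$: the swap feeds the ancilla input into the decoder's $\hil_A^{\mathrm{out}}$ slot and sends the encoder's $\hil_A^{\mathrm{in}}$ output to the ancilla output. T-freeness makes this channel T-symmetric. Since $\mathcal{M}_{f_\trev}=(\mathcal{M}_f)_\trev$ (this is how the superchannel RPS is characterized), that statement is literally $f=f_\trev$.

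In your own terms: T-freeness plus covariance gives $(f-f_\trev)\otimes i_E(\chan)=0$ for all $\chan\in\trs(A\otimes E)$. The single input $\chan=S_A$ already yields $\mathcal{M}_{f-f_\trev}=0$, so antisymmetric inputs never need to be controlled.

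You should drop the fallback, because two of its steps are false.

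First, $\chan\mapsto\chan_\trev$ is complex-linear, not antilinear. Its two antilinear ingredients, the Hilbert--Schmidt adjoint and conjugation by $\trev$, cancel. At the Choi level the map is $\choi[\chan]\mapsto\adj_{X^\dagger_{\mathrm{out}}\otimes X^\top_{\mathrm{in}}}\circ\swapio(\choi[\chan])$. Multiplying a $-1$ eigenvector by $i$ therefore gives another $-1$ eigenvector, so the proposed trick cannot reach the antisymmetric subspace.

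Second, T-symmetric channels do not span all T-symmetric Hermiticity-preserving maps. Any linear combination of channels is trace-scaling. By contrast, for $\trev=\conj$ the map $\rho\mapsto\trace(P\rho)P$ with $P=|0\rangle\langle 0|$ is T-symmetric and Hermiticity-preserving but not trace-scaling. Mixing in depolarizing noise restores positivity, not trace preservation.
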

\begin{proof}
    If $f:A \multimap B$ is T-free, $f \otimes i_A (S_A)$ is in $\trs(A \otimes B)$ for the swap channel $S_A$ on $\hil_A \otimes \hil_A$, since $S_A \in \trs(A \otimes A)$ when the motion-reversal operator is given by $\trev_A \otimes \trev_A$. The T-symmetry of $f$ follows since $\mathcal{M}_f = f \otimes i_A (S_A)$.

    Conversely, let us assume $f \in \trs(A,B)$ and $\chan \in \trs(A \otimes E)$. Their composition satisfies
    \begin{align}
        f \otimes i_E (\chan) &= f_\trev \otimes i_E (\chan_\trev) \\
        &= (\mathcal{A}_{\trev_B} \bullet f^\dagger \bullet \mathcal{A}_{\trev_A^\dagger}) \otimes i_E (\mathcal{A}_{\trev_A} \otimes \mathcal{A}_{\trev_E} (\chan^\dagger)) \\
        &= \mathcal{A}_{\trev_B} \otimes \mathcal{A}_{\trev_E} (f^\dagger \otimes i_E (\chan^\dagger)) \\
        &= \mathcal{A}_{\trev_B} \otimes \mathcal{A}_{\trev_E} ((f \otimes i_E (\chan))^\dagger) \\
        &= (f \otimes i_E (\chan))_\trev.
    \end{align}
    Since this holds for any ancillary system $E$ and any $\chan \in \trs (A \otimes E)$, $f$ is T-free.
\end{proof}

It is straightforward to verify that identity superchannels are T-symmetric and that T-symmetric superchannels are closed under sequential composition $\bullet$. They therefore define a dynamical resource theory \cite{ji_entropic_2023}. The superchannels from the void system $I$ are exactly the T-symmetric channels, and the general T-symmetric superchannels form the complete set of completely T-symmetry-preserving superchannels. We therefore identify this as the resource theory of \textit{T-violation}.

\subsection{Relation between the two classes of superchannels}
We have defined two classes of superchannels: unital and T-symmetric. These classes respectively define dynamical resource theories of nonunitality and T-violation \cite{ji_entropic_2023}. However, formulating T-violation within the class of unital channels---which we call \textit{kinematic T-violation}---requires a further comparison between the two classes.

In general, for two sets of free states or channels $F_1$ and $F_2$ satisfying
\begin{equation}
    F_1 \subset F_2,
\end{equation}
it appears natural to quantify deviations from $F_1$ within $F_2$ as a resource. A potential difficulty is that a free operation $O_1$ for $F_1$ may satisfy
\begin{equation}
    \label{stick_out}   O_1(F_2) \nsubseteq F_2.
\end{equation}
In this case, a resource initially in $F_2$ can be mapped outside $F_2$ by a free operation. Consequently, $F_2$ does not provide a closed domain for the restricted resource theory. This difficulty does not arise when the ambient domain includes all states or channels under consideration.

Here, we give an example of free sets realizing \eqref{stick_out}.
In the (static) resource theory of bipartite entanglement~\cite{Horodecki2009}, the class of states with positive partial transpose (PPT) contains that of separable states.
This inclusion is strict in all finite dimensions other than the two-qubit and qubit–qutrit cases~\cite{HORODECKI1996,HORODECKI1997}.
Although local operations and classical communication (LOCC)---the canonical free operations in entanglement theory---preserve both classes~\cite{Horodecki1998}, certain operations beyond LOCC can transform a PPT state into an entangled state with negative partial transpose (NPT) while preserving the set of separable states~\cite{Chitambar_beyondLOCC_2020}.
Thus, whether non-PPTness admits a resource-theoretic description under the same free operations as in non-separability depends on the choice of those operations.

Fortunately, in addition to
\begin{equation}
    \trs(A) \subset \unital(A) \qquad (\forall A),
\end{equation}
the unital and T-symmetric superchannels are related by
\begin{equation}
    \label{superchannel_subclass}   \trs(A,B) \subset \unital(A,B) \qquad (\forall A,B).
\end{equation}
Indeed, when $f \in \trs(A,B)$, Eq.~\eqref{tr_superchannel} implies that $f^\dagger$ is a deterministic superchannel. For any $\chan \in \unital(A \otimes E)$, the adjoint
\begin{equation}
    (f \otimes i_E (\chan))^\dagger = f^\dagger \otimes i_E (\chan^\dagger),
\end{equation}
is a valid channel since $\chan^\dagger$ is a channel and $f^\dagger$ is a deterministic superchannel.
By \eqref{superchannel_subclass}, the difficulty in \eqref{stick_out} does not arise for kinematic T-violation. This permits the construction of the nested resource theories summarized in Table~\ref{tab:nest}.

The inclusion \eqref{superchannel_subclass} lifts to a subcategory relation between the categories of T-symmetric and unital superchannels. Our resource theories can thus be formalized in the language of ``partitioned process theory'' \cite{coecke_mathematical_nodate}. Such a theory consists of a symmetric monoidal category $\cat$ of admissible processes together with a symmetric monoidal subcategory $\cat_\mathrm{free}$ of free processes. This categorical pair provides a minimal structure within which convertibility and resource monotones can be formulated systematically. The technical details are presented in Appendix~\ref{sec:partition}.

\section{Decomposition of overall T-violation}\label{sec:decomposition}
In Sec.~\ref{subsec:trs_channels}, we identified two distinct contributions to T-violation in general quantum channels. The first is the intrinsic T-violation of unital channels, analyzed in detail in Sec.~\ref{subsec:trs_superchannels}; we refer to this contribution as \emph{kinematic T-violation}. The second is nonunitality analyzed in Sec.~\ref{subsec:unital}, which is present only for nonunital channels and is independent of the choice of motion-reversal operator.
We expect these distinct contributions to satisfy the quantitative relation \eqref{concept}.
In this section, we formulate this relation rigorously as a quantitative decomposition of the corresponding resource monotones.

\subsection{Resource monotones for T-violation and nonunitality}
Resource monotones are non-negative functions of quantum channels that vanish precisely on free channels and are non-increasing under the corresponding free superchannels. For general T-violation, the free channels are the T-symmetric channels and the free transformations are T-symmetric superchannels. For nonunitality, the free channels and transformations are the unital channels and unital superchannels, respectively.

A general method for constructing monotones for dynamical resources was given in \cite{liu_operational_2020}. The construction is based on channel distinguishability. Let $\free$ denote the set of free channels in the resource theory under consideration, and let $d$ be any distance measure on quantum states. Then
\begin{equation}
    \label{monotone_general}    \inf_{\chan_\mathrm{free} \in \free} \sup_\rho d(\chan \otimes \iden (\rho), \chan_\mathrm{free} \otimes \iden (\rho)),
\end{equation}
where $\rho$ ranges over states on a sufficiently enlarged system, is a resource monotone \cite{liu_operational_2020}.

Although \eqref{monotone_general} can be applied directly to T-violation, we instead use a construction based on Choi states. This approach is central to our integration theorem and also improves the computability and resolving power of monotones for kinematic T-violation, as discussed in Sec.~\ref{sec:kinematic}. Define the Choi \emph{state} of a channel $\chan:\B(\hil) \rightarrow \B(\hil)$ by
\begin{equation}
    \label{cstate}  \cstate[\chan] := \chan \otimes \iden \left( \frac{1}{d}\sum_{i,j=1}^d \ket[i,i] \bra[j,j] \right),
\end{equation} 
where $d= \dim \hil$. Thus, Choi states are normalized Choi operators. The key observation is that unital superchannels act as ordinary quantum channels on these Choi states.
\begin{lemma}\label{lem:choistate}
    For any unital superchannel $f \in \unital(A,B)$, the map $\widetilde{f}:\B(\hil_A \otimes \hil_A) \rightarrow \B(\hil_B \otimes \hil_B)$ defined by
    \begin{equation}
        \widetilde{f}:\cstate[\chan] \mapsto \cstate[f(\chan)],
    \end{equation}
    is linear, completely positive, and trace-preserving.
\end{lemma}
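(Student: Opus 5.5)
The strategy is to write $\widetilde{f}$ explicitly in terms of the Choi operator of $f$ (the link product) and then read off linearity, complete positivity, and trace preservation; unitality of $f$ is used only for the last property. The action of a superchannel on Choi operators is $\choi[f(\chan)] = \choi[f] \ast \choi[\chan]$, a partial trace over $\hil_A^\mathrm{in}\otimes\hil_A^\mathrm{out}$ of $\choi[f]$ multiplied by the partial transpose of $\choi[\chan]$. This is a linear map in $\choi[\chan]$. Since $\cstate[\chan] = \choi[\chan]/d_A$ and $\cstate[f(\chan)] = \choi[f(\chan)]/d_B$, define
\begin{equation*}
    \widetilde{f}(Y) := \frac{d_A}{d_B}\, \choi[f] \ast Y
\end{equation*}
on all of $\B(\hil_A\otimes\hil_A)$. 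This gives a linear extension that agrees with $\cstate[\chan]\mapsto\cstate[f(\chan)]$. Choi states of channels span the whole operator space, so the extension is unique, which settles linearity.

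\textbf{Complete positivity.} The link product with a positive operator $\choi[f]$ is a CP map; concretely, it is a composition of the encoder, the identity on the memory, and the decoder. A cleaner view is operational. Feed $Y$ into the input $\hil_A^\mathrm{in}\otimes\hil_A^\mathrm{out}$ slots of $\mathcal{M}_f$ via the teleportation-like identity $\choi[f]\ast Y = \trace_{A}[\choi[f](Y^\top\otimes\id)]$. This is a positive-operator-to-positive-operator map, and tensoring with an ancilla keeps it of the same form, since $(\choi[f]\otimes \text{ancilla identity})$ is still positive. Hence $\widetilde{f}$ is CP.

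\textbf{Trace preservation.} This is the main obstacle, and the step where unitality is essential. A general superchannel maps Choi operators with $\trace_\mathrm{out}=\id_\mathrm{in}$ to the same class, but it does not preserve the trace of arbitrary operators $Y$, only of those with the right marginal. I would argue as follows. Every operator on $\hil_A\otimes\hil_A$ is a linear combination of Choi states of channels, and more usefully of Choi states of \emph{unital} channels plus traceless terms. Here I would use that Choi states of unital channels span the space of operators $Y$ with $\trace_\mathrm{out}Y \propto \id$ and $\trace_\mathrm{in} Y\propto\id$.

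It suffices to check that the linear functional $Y\mapsto \trace\,\widetilde{f}(Y)$ equals $\trace Y$. On Choi states of channels both sides equal $1$. By the adjoint-based symmetry, the same holds on $\swapio$ of such Choi states, i.e. on Choi states of the adjoint channels $\chan^\dagger$. For this I would use Lemma~\ref{lem:unital}: $f^\dagger$ is a unital, hence trace-preserving-on-channels, superchannel, and $\trace\,\choi[f(\chan)^\dagger]=\trace\,\choi[f(\chan)]$, so $\trace\,\widetilde{f}(Y)=1$ also for $Y=\cstate[\chan^\dagger]$ whenever $\chan^\dagger$ is CP.

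The remaining directions are spanned by operators like $A\otimes B$ with both $A$ and $B$ traceless. I would handle them via the explicit dual: $\trace\,\widetilde{f}(Y) = \frac{d_A}{d_B}\trace[(\trace_{B}\choi[f])\,Y^\top]$. Unitality of $f$ applied to the completely depolarizing channel and to the unital channels $\chan\otimes\iden_E$ should force $\trace_{B^\mathrm{in}B^\mathrm{out}}\choi[f] = \frac{d_B}{d_A}\id_{A^\mathrm{in}A^\mathrm{out}}$. This is the key identity to prove: the no-signaling condition gives $\id$ on $A^\mathrm{in}$, and unitality (via $f^\dagger$ being deterministic, i.e. trace-preserving in the reverse direction) gives the $A^\mathrm{out}$ factor. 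Once it holds, $\trace\,\widetilde{f}(Y)=\trace Y$ for all $Y$, completing the proof.
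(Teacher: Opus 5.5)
Your final route is essentially the paper's. You write $\widetilde{f}(Y)=\frac{d_A}{d_B}\trace_A[\choi[f]\,Y^\top]$, get complete positivity from $\choi[f]\geq 0$, and reduce trace preservation to $\trace_{B^{\mathrm{in}}B^{\mathrm{out}}}\choi[f]=\frac{d_B}{d_A}\id_{A^{\mathrm{in}}}\otimes\id_{A^{\mathrm{out}}}$, which follows from the comb normalization of $f$ together with that of $f^\dagger$ (deterministic by Lemma~\ref{lem:unital}). Two things need correcting.

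First, Choi states of channels do \emph{not} span $\B(\hil_A\otimes\hil_A)$. They span only the proper subspace $V_1:=\{Y:\trace_{\mathrm{out}}Y\in\mathbb{C}\id\}$. The extension is therefore not unique. This is harmless, since you, like the paper, fix the link-product extension. But if your claim were true, trace preservation would hold for every deterministic superchannel. That contradicts your own correct remark that it holds only on operators with the right marginal.

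Second, the factors in your key identity are swapped. The no-signaling condition of $f$ gives $\trace_{B^{\mathrm{out}}}\choi[f]=(\cdots)\otimes\id_{A^{\mathrm{out}}}$. The normalization of $f^\dagger$, whose Choi operator is $\swapio(\overline{\choi[f]})$, gives $\trace_{B^{\mathrm{in}}}\choi[f]=(\cdots)\otimes\id_{A^{\mathrm{in}}}$. Tracing out the remaining $B$ system yields $\alpha_{A^{\mathrm{in}}}\otimes\id_{A^{\mathrm{out}}}=\id_{A^{\mathrm{in}}}\otimes\beta_{A^{\mathrm{out}}}$. This forces both to be multiples of the identity, with the constant fixed by $\trace\widetilde{f}(\cstate[\chan])=1$.

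Also, the spanning argument you abandoned already completes the proof by itself, by a slightly different route. The states $\cstate[\chan^\dagger]$, for $\chan$ a channel, are exactly the positive $Y$ with $\trace_{\mathrm{in}}Y=\id/d_A$, so they span $V_2:=\{Y:\trace_{\mathrm{in}}Y\in\mathbb{C}\id\}$. There are no ``remaining directions'': traceless$\otimes$traceless operators lie in $V_1\cap V_2$. More generally, set $P:=\trace_{\mathrm{out}}Y-\frac{\trace Y}{d_A}\id$; then any $Y$ splits as $(Y-P_{\mathrm{in}}\otimes\id_{\mathrm{out}}/d_A)+P_{\mathrm{in}}\otimes\id_{\mathrm{out}}/d_A\in V_1+V_2$.

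Hence the linear functional $Y\mapsto\trace\widetilde{f}(Y)-\trace Y$ vanishes identically. It vanishes on $\cstate[\chan]$ because $f$ is deterministic, and on $\cstate[\chan^\dagger]$ because $f^\dagger$ is. For the second evaluation you need $f(\chan^\dagger)=[f^\dagger(\chan)]^\dagger$, not $f(\chan)^\dagger$ as you wrote, so that $\trace\choi[f(\chan^\dagger)]=\trace\choi[f^\dagger(\chan)]=d_B$.

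This version uses only that $f$ and $f^\dagger$ map channels to channels, plus a spanning count. The paper instead works directly with the Choi-level comb normalization conditions. The two are dual formulations of the same facts.
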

\noindent The proof of this lemma is presented in Appendix~\ref{sec:proof_choi_CPTP}.
A general deterministic superchannel need not act as a channel on Choi states. A simple example is the ``state-preparation'' superchannel, which feeds a fixed state $\rho$ into the input channel.

Lemma~\ref{lem:choistate} directly yields a family of resource monotones:
\begin{theorem}\label{thm:monotones}
    Let $d$ be any distance function or any divergence between pairs of quantum states that is non-increasing under quantum channels. Then the functions
    \begin{align}
        \label{monotone_trs}    \Delta^\trs_d (\chan) &:= \inf_{\chan_\trs \in \trs(A)} d(\cstate[\chan],\cstate[\chan_\trs]),\\
        \label{monotone_unital} \Delta^\unital_d (\chan) &:= \inf_{\chan_\unital \in \unital(A)} d(\cstate[\chan],\cstate[\chan_\unital]),
    \end{align}
    are resource monotones for T-violation and nonunitality, respectively.
\end{theorem}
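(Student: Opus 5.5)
The argument rests on Lemma~\ref{lem:choistate} combined with the data-processing property of $d$. A resource monotone must be non-negative, vanish on free channels, and be non-increasing under free superchannels. I will check these three properties in turn, first for $\Delta^\trs_d$ and then for $\Delta^\unital_d$. The two arguments are identical in structure once we know the free superchannels of both theories are unital.

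Non-negativity and vanishing on free channels are immediate. If $\chan\in\trs(A)$, choose $\chan_\trs=\chan$ in the infimum, so $\Delta^\trs_d(\chan)\le d(\cstate[\chan],\cstate[\chan])=0$. The same choice works for unital channels.

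For faithfulness, the converse direction, suppose $\Delta^\trs_d(\chan)=0$. Then there are T-symmetric channels whose Choi states approach $\cstate[\chan]$. The Choi map is an isomorphism, and $\trs(A)$ is closed: it is the fixed-point set of the continuous map $\chan\mapsto\chan_\trev$ intersected with the compact set of channels. If $d$ is a genuine distance or faithful divergence, and lower semicontinuity lets us pass to a limit point, that limit point equals $\chan$, so $\chan$ is free. I would state faithfulness under these mild conditions, or simply note that vanishing on free channels is what the framework requires.

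The main step is monotonicity. Let $f\in\trs(A,B)$ and $\chan$ a channel on $A$. By the inclusion \eqref{superchannel_subclass}, $f$ is unital, so Lemma~\ref{lem:choistate} supplies a CPTP map $\widetilde f$ with $\widetilde f(\cstate[\chan'])=\cstate[f(\chan')]$ for every channel $\chan'$. Since $f$ is T-free (by the theorem equating T-symmetric and T-free superchannels, using the trivial ancilla), $f(\chan_\trs)\in\trs(B)$ for each $\chan_\trs\in\trs(A)$. Data processing then gives
\begin{align}
d(\cstate[\chan],\cstate[\chan_\trs]) &\ge d(\widetilde f(\cstate[\chan]),\widetilde f(\cstate[\chan_\trs])) \nonumber\\
&= d(\cstate[f(\chan)],\cstate[f(\chan_\trs)]) \ge \Delta^\trs_d(f(\chan)).
\end{align}
Taking the infimum over $\chan_\trs$ yields $\Delta^\trs_d(\chan)\ge\Delta^\trs_d(f(\chan))$. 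For nonunitality we repeat the argument with $f\in\unital(A,B)$. Lemma~\ref{lem:choistate} applies directly, and $f$ maps $\unital(A)$ into $\unital(B)$ by Definition~\ref{def:unital_superchannels} with trivial $E$.

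The only genuinely nontrivial ingredient is that free superchannels act as CPTP maps on Choi states. For the T-symmetric case this depends crucially on \eqref{superchannel_subclass}, which I take as established. If complete monotonicity (under $f\otimes i_E$) is also wanted, the same chain works: $f\otimes i_E$ is again T-symmetric (respectively unital), so Lemma~\ref{lem:choistate} applies to it on the composite system.
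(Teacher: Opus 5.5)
Your proposal is correct and takes the route the paper intends when it says Lemma~\ref{lem:choistate} ``directly yields'' the monotones. Free superchannels are unital (via \eqref{superchannel_subclass} in the T-symmetric case), so they act as CPTP maps on Choi states and send free channels to free channels, and data processing for $d$ followed by the infimum gives monotonicity. Your remark that vanishing only on free channels additionally needs closedness of the free sets together with faithfulness and lower semicontinuity of $d$ is a caveat the paper leaves implicit, and you handle it correctly.
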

\noindent When $\chan$ is unital, we refer to its T-violation specifically as \emph{kinematic T-violation} (see Sec.~\ref{sec:kinematic} for details).
In general, Choi-state distances need not define channel-resource monotones; their monotonicity here relies essentially on the channel representation established in Lemma~\ref{lem:choistate}.

Our construction differs from the Choi-defined resource theories studied in \cite{zanoni_choi-defined_2024}. A channel resource theory is Choi-defined when the Choi states of its free channels are precisely free states of an underlying state resource theory. Nonunitality is not Choi-defined in this sense, because Choi states of unital channels need not be maximally mixed. Likewise, the T-violation considered here has no underlying state-resource counterpart. Lemma~\ref{lem:choistate} therefore provides a distinct role for Choi states: they furnish a state-space representation of the relevant free superchannels even when the channel resource itself is not Choi-defined.


In this initial survey on T-violation, we do not further explore the physical interpretation of nonunitality as a resource, beyond noting its thermodynamic significance and its independence from motion reversal. Within a top-down framework \cite{ji_entropic_2023}, nonunitality can be viewed as the channel analogue of state nonuniformity \cite{gour_resource_2015}; it is thus expected to both generate and be generated by states that are not maximally mixed. In statistical mechanics, unitality is known to play the role of the microreversibility condition \cite{albash_fluctuation_2013} in establishing fluctuation relations for general quantum channels \cite{manzano_nonequilibrium_2015,rastegin_non-equilibrium_2013}. Beyond unital open-system dynamics, nonunitality necessitates a correction to the Jarzynski equality \cite{rastegin_jarzynski_2014} and can induce entropy production in the environment \cite{aurell_time_2015}. A rigorous operational characterization of nonunitality as a resource, including its exact role in statistical mechanics, lies beyond the scope of the present work.

\subsection{Pythagorean theorem for T-violation}
We apply the generalized Pythagorean theorem for quantum statistical manifolds \cite{matsuda_information_2022} to relate kinematic T-violation and nonunitality. As described below, this application extends to other nested resources.

Let $S(\rho):=-\trace[\rho\log\rho]$ denote the von Neumann entropy and let $S(\rho||\sigma):=-S(\rho)-\trace[\rho\log\sigma]$ denote the quantum relative entropy.

We review several concepts from information geometry that are adapted to the quantum state space (see \cite{amari_methods_2007,matsuda_information_2022}).
Let $\density$ denote the $(\dim \hil)^2-1$ dimensional manifold of positive density matrices on $\hil$.
This manifold is embedded in $\mathbb{R}^{(\dim \hil)^2}$ by choosing $(\dim \hil)^2$ real and imaginary parts of matrix elements, subject to $\trace \rho =1$ and $\rho > 0$, for the coordinates. We assume the Bogoliubov--Kubo--Mori metric \cite{petz1993bogoliubov} on this manifold in the following argument.

A submanifold $A$ of $\density$ is \textit{m}-totally geodesic connected (\textit{m}-TGC) when it is closed under convex mixtures. When $A$ is $m$-TGC, for an arbitrary state $\rho \in \density$, the minimum
\begin{equation}
    \min_{\sigma \in A} S(\sigma || \rho)
\end{equation}
is attained at a unique point in $A$, called the \textit{e}-projection of $\rho$ onto $A$. We denote the \textit{e}-projection by
\begin{equation}
    \sigma_A^m (\rho),
\end{equation}
when it exists.
A submanifold $A$ of $\density$ is \textit{e}-totally geodesic connected (\textit{e}-TGC) when it is closed under the following ``mixture in \textit{e}-representation'' of an arbitrary pair $\rho_1, \rho_2 \in A$:
\begin{equation}
    C(t)_{\rho_1,\rho_2} := \frac{\exp \left( t \log \rho_1 + (1-t) \log \rho_2 \right)}{\trace\left[ \exp \left( t \log \rho_1 + (1-t) \log \rho_2 \right) \right]}.
\end{equation}
When $A$ is $e$-TGC, for an arbitrary state $\rho \in \density$, the minimum
\begin{equation}
    \min_{\sigma \in A} S(\rho || \sigma)
\end{equation}
is attained at a unique point in $A$, called the \textit{m}-projection of $\rho$ onto $A$. We denote the \textit{m}-projection by
\begin{equation}
    \sigma_A^e (\rho),
\end{equation}
when it exists.

In our example, the convex sets of Choi states of unital channels and T-symmetric channels both form \textit{m}-TGC submanifolds.\footnote{They are submanifolds since $\density$ is a convex manifold with an interior point $\id/\dim \hil$, and the affine constraints defining the subsets both contains the point as a solution.}
Exceptionally, they are \textit{e}-TGC when the dimension of the input (and hence the output) system is $2$ (see Appendix~\ref{sec:flat} for details).

We have the following two kinds of generalized Pythagorean theorems adapted to nested resource theories.
\begin{lemma}\label{lem:pythagorean}
    If $A_1$ and $A_2$ are \textit{m}- (\textit{e}-)TGC submanifolds of $\density$ satisfying $A_1 \subset A_2$, then, for every $\rho\in\density$,
    \begin{equation}
        \label{pythagorean1}    \sigma_{A_1}^{m(e)}(\rho) = \sigma_{A_1}^{m(e)} \left( \sigma_{A_2}^{m(e)} (\rho) \right).
    \end{equation}
    When $A_1$ and $A_2$ are \textit{m}-TGC,
    \begin{equation}
        \label{pythagorean2}    S(\sigma_{A_1}^m (\rho) || \rho) = S(\sigma_{A_1}^m (\rho) || \sigma_{A_2}^m (\rho)) + S(\sigma_{A_2}^m(\rho) || \rho ).
    \end{equation}
    When $A_1$ and $A_2$ are \textit{e}-TGC,
    \begin{equation}
        \label{pythagorean_e}    S( \rho || \sigma_{A_1}^e (\rho)) = S( \rho || \sigma_{A_2}^e(\rho) ) + S(\sigma_{A_2}^e (\rho) || \sigma_{A_1}^e (\rho)).
    \end{equation}
\end{lemma}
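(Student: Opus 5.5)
The plan is to derive all three statements from a single \emph{first-order optimality identity} at the projection, and then obtain \eqref{pythagorean1} from uniqueness of the minimizer. Throughout, I use that the submanifolds in question (Choi-state sets of unital and T-symmetric channels) are relatively open pieces of affine, respectively log-affine, subspaces inside $\density$. Consequently a geodesic segment in $A$ through an interior point can be extended slightly in both directions while staying in $A$. I expect the equalities (rather than mere inequalities) to require exactly this two-sided extendability. For a merely convex closed set one only gets ``$\geq$'', so I will make this autoparallel/openness assumption explicit.

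\emph{m-case.} Fix $\rho$ and let $\sigma^*=\sigma^m_A(\rho)$ minimize $S(\sigma\|\rho)$ over $A$. For $\sigma\in A$, consider the mixture $\sigma_t=\sigma^*+t(\sigma-\sigma^*)$. It lies in $A$ for $t\in(-\epsilon,1]$ by the affine structure and by positivity of $\sigma^*$. The derivative at $t=0$ is $\trace[(\sigma-\sigma^*)(\log\sigma^*-\log\rho)]$, since the term $\trace[\sigma_t\,\tfrac{d}{dt}\log\sigma_t]$ reduces to $\tfrac{d}{dt}\trace\sigma_t=0$. Minimality together with two-sided extendability forces this derivative to vanish. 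Expanding the relative entropies gives the algebraic identity
\begin{equation*}
S(\sigma\|\rho)-S(\sigma\|\sigma^*)-S(\sigma^*\|\rho)=\trace[(\sigma-\sigma^*)(\log\sigma^*-\log\rho)],
\end{equation*}
so $S(\sigma\|\rho)=S(\sigma\|\sigma^*)+S(\sigma^*\|\rho)$ for every $\sigma\in A$.

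Apply this with $A=A_2$, $\sigma^*=\sigma^m_{A_2}(\rho)$. For every $\sigma\in A_1\subset A_2$, $S(\sigma\|\rho)$ differs from $S(\sigma\|\sigma^m_{A_2}(\rho))$ by a constant independent of $\sigma$. Hence the two functions have the same minimizer over $A_1$, and uniqueness of the e-projection yields \eqref{pythagorean1}. Substituting $\sigma=\sigma^m_{A_1}(\rho)$ into the identity gives \eqref{pythagorean2}.

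\emph{e-case.} Let $\sigma^*=\sigma^e_A(\rho)$ minimize $S(\rho\|\sigma)$ over $A$. For $\sigma\in A$, take the e-geodesic $C(t)_{\sigma,\sigma^*}$, which stays in $A$ for $t$ slightly negative as well. Using $\log C(t)=t\log\sigma+(1-t)\log\sigma^*-\log Z(t)$ with $Z'(0)=\trace[\sigma^*(\log\sigma-\log\sigma^*)]$, the derivative of $S(\rho\|C(t))$ at $t=0$ is $\trace[(\sigma^*-\rho)(\log\sigma-\log\sigma^*)]$, which must vanish. The corresponding expansion gives
\begin{equation*}
S(\rho\|\sigma)-S(\rho\|\sigma^*)-S(\sigma^*\|\sigma)=\trace[(\rho-\sigma^*)(\log\sigma^*-\log\sigma)],
\end{equation*}
and therefore $S(\rho\|\sigma)=S(\rho\|\sigma^*)+S(\sigma^*\|\sigma)$ on $A$. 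Taking $A=A_2$ and minimizing over $\sigma\in A_1$ gives \eqref{pythagorean1} by uniqueness, and evaluating at $\sigma=\sigma^e_{A_1}(\rho)$ gives \eqref{pythagorean_e}.

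The main obstacle is justifying that the derivative vanishes rather than being merely nonnegative. This requires the projection to lie in the relative interior of $A$ (full-rank), so that the curve extends past $t=0$. Here I would use that $S(\cdot\|\rho)$ has infinite slope toward the boundary of the positive cone, which pushes the minimizer into the interior. Existence of the minimizers is taken from the setup preceding the lemma.
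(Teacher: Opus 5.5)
Your proposal is correct and follows essentially the same route as the paper: apply the Pythagorean identity relative to the $A_2$-projection at points of $A_1$, then invoke uniqueness of the minimizer over $A_1$. Your observation that the two objectives differ on $A_1$ only by a constant is a slightly cleaner form of the paper's two-inequality sandwich. The one substantive difference is that you derive the generalized Pythagorean identity from the first-order optimality condition, whereas the paper cites it from Amari and Matsuda. In doing so, you correctly make explicit the two-sided extendability of geodesics (relative openness in the affine, resp.\ log-affine, hull) that the equality form needs, and that the paper's definition of TGC as mere closure under mixtures leaves implicit.
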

\begin{proof}
    We prove the lemma for \textit{m}-TGC submanifolds and \textit{e}-projections. The proof for \textit{e}-TGC case is obtained by exchanging \textit{m} and \textit{e}, and reversing the arguments in relative entropies.

    Since $\sigma_{A_2}^m(\rho)$ is the \textit{e}-projection of $\rho$ onto $A_2$, while $\sigma_{A_1}^m(\rho)$ and $\sigma_{A_1}^m( \sigma_{A_2}(\rho))$ both belong to $A_1 \subset A_2$, the generalized Pythagorean theorem (see \cite{amari_methods_2007} for general theory, and Lemmas A.1 and A.2 of \cite{matsuda_information_2022} for this setting) give
    \begin{align}
        \label{pythagorean3}    & S ( \sigma_{A_1}^m(\rho) || \rho ) = S(\sigma_{A_1}^m(\rho) || \sigma_{A_2}^m(\rho)) + S(\sigma_{A_2}^m(\rho) || \rho),\\
        \nonumber & S(\sigma_{A_1}^m( \sigma_{A_2}(\rho)) || \rho ) \\
        \label{pythagorean4}    &= S(\sigma_{A_1}^m(\sigma_{A_2}^m(\rho)) || \sigma_{A_2}^m(\rho)) + S( \sigma_{A_2}^m(\rho) || \rho).
    \end{align}
    The first equality is precisely \eqref{pythagorean2}.
    Because $\sigma_{A_1}^m(\rho)$ minimizes $S(\sigma || \rho)$ over $A_1$, we have $S(\sigma_{A_1}^m(\rho) || \rho) \leq S(\sigma_{A_1}^m( \sigma_{A_2}^m(\rho)) || \rho)$. Conversely, because $\sigma_{A_1}^m( \sigma_{A_2}^m(\rho))$ minimizes $S(\sigma || \sigma_{A_2}^m(\rho))$ over the same set, $S(\sigma_{A_1}^m(\rho) || \sigma_{A_2}^m(\rho)) \geq S(\sigma_{A_1}^m( \sigma_{A_2}^m(\rho))||\sigma_{A_2}^m(\rho))$. Equations~\eqref{pythagorean3} and \eqref{pythagorean4} force both inequalities to be equalities. Uniqueness of the minimizer therefore implies $\sigma_{A_1}^m(\rho)=\sigma_{A_1}^m( \sigma_{A_2}^m(\rho))$, proving \eqref{pythagorean1}.
\end{proof}
When free resources are provided by nested TGC submanifolds, Lemma~\ref{lem:pythagorean} constitutes a Pythagorean theorem for nested resources. In particular, Eqs.~\eqref{pythagorean2} and \eqref{pythagorean_e} follows from the orthogonality between the geodesic from $\rho$ to $\sigma_{A_2}(\rho)$ and that from $\sigma_{A_2}(\rho)$ to $\sigma_{A_1}(\rho)$. The divergence associated with the metric (Bogoliubov--Kubo--Mori metric) is the quantum relative entropy. Figure~\ref{fig:geometry} illustrates the orthogonality relation used in Lemma~\ref{lem:pythagorean}.
\begin{figure}[tbp]
    \includegraphics[width=.45\textwidth]{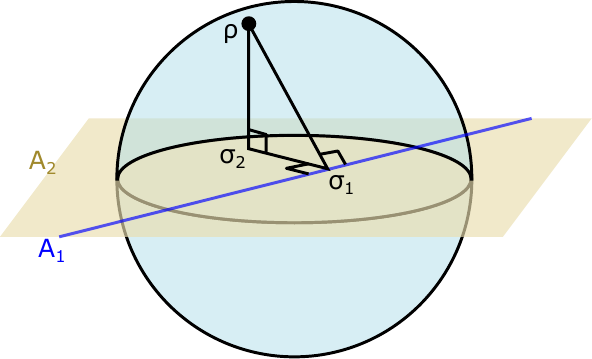}
    \caption{\label{fig:geometry} Schematic relation among the TGC submanifolds and minimizers in Lemma~\ref{lem:pythagorean}. The ball represents the state space $\density(\hil)$. The geometry is deliberately simplified to depict the generalized Pythagorean theorem in familiar Euclidean terms.
    }
\end{figure}

The relation between coherence and imaginarity provides a simple illustration of Lemma~\ref{lem:pythagorean}.
Coherence, imaginarity, and residual coherence in real states satisfy a triangle inequality across a wide range of resource monotones \cite{guo_imaginarity_2026}, whereas the relative entropy monotone turns this into an equality.
For the nested \textit{e}-TGC submanifolds of real ($A_2$) and incoherent ($A_1$) density matrices, Eq.~\eqref{pythagorean1} specializes to
\begin{equation}
    \label{diag_real}    \diag(\rho) = \diag \left( \real(\rho) \right).
\end{equation}
The closest incoherent and real states to $\rho$ (with respective to the specific relative entropy monotones) are, respectively, $\diag(\rho):=\rho_{11} \oplus \cdots \oplus \rho_{nn}$ and $\real(\rho):=(\rho+\rho^\top)/2$.
Equation~\eqref{diag_real} also yields the corresponding decomposition \eqref{pythagorean_e}. The lemma shows that this Pythagorean structure persists quite generally, even when the closest free states do not admit closed-form expressions.

We can now state one of the main results of this article, which gives a rigorous formulation of Eq.~\eqref{concept}. Define the relative entropy monotones of T-violation by
\begin{align}
    \Delta_m^\trs (\chan) &:= \min_{\chan_\trs \in \trs(A)} S(\cstate[\chan_\trs] || \cstate[\chan]),\\
    \label{monotone_choi_trs}   \Delta_e^\trs (\chan) &:= \min_{\chan_\trs \in \trs(A)} S(\cstate[\chan] || \cstate[\chan_\trs]),
\end{align}
and those of nonunitality by
\begin{align}
    \Delta_m^\unital (\chan) &:= \min_{\chan_\unital \in \unital(A)} S(\cstate[\chan_\unital] || \cstate[\chan]),\\
    \label{monotone_choi_unital}   \Delta_e^\unital (\chan) &:= \min_{\chan_\unital \in \unital(A)} S(\cstate[\chan] || \cstate[\chan_\unital]),
\end{align}
according to Theorem~\ref{thm:monotones}.
Although subscripts ``\textit{m}'' and ``\textit{e}'' come from the relevant structures of free states, the \textit{e}-type monotones \eqref{monotone_choi_trs} and \eqref{monotone_choi_unital} remain well-defined even when the free-Choi-state spaces are not \textit{e}-TGC.
\begin{theorem}\label{thm:pythagorean_trs}
    For any channel $\chan$ on a finite-dimensional system $A$ with a strictly positive Choi state $\cstate[\chan] > 0$,
    \begin{equation}
        \label{pythagorean_trs_m} \Delta^\trs_m (\chan) = \Delta^\unital_m (\chan) + \Delta^\trs_m \left( \argmin_{\chan' \in \unital(A)} S(\cstate[\chan'] || \cstate[\chan]) \right).
    \end{equation}
    For any channel $\chan$ on a $2$-dimensional system $A$ with a strictly positive Choi state $\cstate[\chan] >0$,
    \begin{equation}
        \label{pythagorean_trs_e} \Delta^\trs_e (\chan) = \Delta^\unital_e (\chan) + \Delta^\trs_e \left( \argmin_{\chan' \in \unital(A)} S( \cstate[\chan] || \cstate[\chan']) \right).
    \end{equation}
\end{theorem}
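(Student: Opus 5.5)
The plan is to reduce Theorem~\ref{thm:pythagorean_trs} to Lemma~\ref{lem:pythagorean}, applied to the nested sets of Choi states. Set $A_1 := \{\cstate[\chan'] : \chan' \in \trs(A)\}$ and $A_2 := \{\cstate[\chan'] : \chan' \in \unital(A)\}$, and put $\rho := \cstate[\chan]$.

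The first step is to check the hypotheses of the lemma. The inclusion $\trs(A) \subset \unital(A)$ gives $A_1 \subset A_2$. Both sets are convex and are cut out by affine constraints:
\begin{itemize}
\item the trace-preserving and unitality conditions on the marginals of the Choi state;
\item for $A_1$, additionally the linear fixed-point condition $\cstate[\chan'] = \adj_{X^\dagger_\mathrm{out} \otimes X^\top_\mathrm{in}} \circ \swapio (\cstate[\chan'])$ coming from \eqref{choi_time_reversal}.
\end{itemize}
Hence both sets are \textit{m}-TGC. Each contains the interior point $\id/d^2$, which is the Choi state of the completely depolarizing channel, so each meets $\density$ in a submanifold, as noted in the footnote.

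A technical point needs care here. The \textit{e}-projections $\sigma^m_{A_i}(\rho)$ must exist and be unique, and the minimum in the definitions of $\Delta_m$ must be attained inside the relevant set. This is where the assumption $\rho > 0$ is used. It keeps $S(\sigma\|\rho)$ finite and continuous on the compact closure of $A_i$. Together with strict convexity of $\sigma \mapsto S(\sigma\|\rho)$, this gives a unique minimizer. The minimum may lie on the boundary, where $\sigma$ is not full rank. I would argue that the Pythagorean identity of \cite{matsuda_information_2022} still holds there, because it only needs $\rho$ to be full rank. Alternatively, one can pass to the limit along interior points.

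The second step is to apply \eqref{pythagorean2} with this $\rho$. By definition $S(\sigma^m_{A_1}(\rho)\|\rho) = \Delta^\trs_m(\chan)$ and $S(\sigma^m_{A_2}(\rho)\|\rho) = \Delta^\unital_m(\chan)$. Write $\chan^\ast := \argmin_{\chan'\in\unital(A)} S(\cstate[\chan']\|\cstate[\chan])$, so that $\sigma^m_{A_2}(\rho) = \cstate[\chan^\ast]$. The remaining term is $S(\sigma^m_{A_1}(\rho)\|\cstate[\chan^\ast])$. Using \eqref{pythagorean1}, $\sigma^m_{A_1}(\rho) = \sigma^m_{A_1}(\cstate[\chan^\ast])$, so this term equals $\min_{\chan_\trs} S(\cstate[\chan_\trs]\|\cstate[\chan^\ast]) = \Delta^\trs_m(\chan^\ast)$. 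This yields \eqref{pythagorean_trs_m}. One subtlety is that applying \eqref{pythagorean1} to $\cstate[\chan^\ast]$ needs $\cstate[\chan^\ast] > 0$. I would handle this by noting that the identity \eqref{pythagorean1} is proved in the lemma without that assumption, since it only compares values that \eqref{pythagorean3}--\eqref{pythagorean4} force to be equal. The uniqueness of the minimizer can be taken from strict convexity.

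The third step is the qubit case. Here I would invoke the result deferred to Appendix~\ref{sec:flat}, that for $\dim A = 2$ both $A_1$ and $A_2$ are \textit{e}-TGC. The same argument then goes through with \eqref{pythagorean_e}, the \textit{m}-projections, and the roles of the arguments in the relative entropy reversed, giving \eqref{pythagorean_trs_e}.

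I expect the main obstacle to be the existence and uniqueness of the projections when the minimizer lies on the boundary of $\density$. This is especially delicate in the \textit{e}-case, where $S(\rho\|\sigma)$ can diverge as $\sigma$ approaches the boundary. Strict positivity of $\rho$ makes the minimizer $\sigma$ full rank there, because the divergence on the boundary pushes the minimizer into the interior. For the \textit{m}-case, as noted above, I would rely on the version of the Pythagorean identity that only needs $\rho$ to be full rank.
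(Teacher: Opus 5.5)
Your proposal is correct and matches the paper's proof. The paper applies Lemma~\ref{lem:pythagorean} to the nested convex (hence \textit{m}-TGC) Choi-state sets of T-symmetric and unital channels, uses \eqref{pythagorean1} to identify the residual term as $\Delta^{\trs}_{m}$ (resp.\ $\Delta^{\trs}_{e}$) of the closest unital channel, and cites Appendix~\ref{sec:flat} for \textit{e}-TGC in the qubit case. Your worry about a rank-deficient minimizer in the \textit{m}-case does not arise: both sets contain the full-rank state $\id/d^2$, and the von Neumann entropy has infinite slope at rank-deficient states, so the minimizer of $S(\sigma\|\rho)$ is full rank and $\cstate[\chan^\ast]>0$ holds automatically.
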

\noindent These identities follow directly from Lemma~\ref{lem:pythagorean}. See Appendix~\ref{sec:flat} for the proof that Choi-state spaces for $2$-dimensional unital and T-symmetric channels are both \textit{e}-TGC. The relative-entropy projection onto the smaller free set therefore decomposes into the projection onto the unital set followed by the residual projection onto the T-symmetric set.

At the moment, we do not have an extension of the decomposition theorem for channels with a deficient Choi-rank. 

One may ask whether relative entropy is essential for establishing the decomposition. We do not completely deny other possibilities, but the generalized Pythagorean theorem, and hence Lemma~\ref{lem:pythagorean}, relies strongly on the torsion-free property of the Bogoliubov--Kubo--Mori metric. If another metric is used instead, the generalized Pythagorean identity requires a correction term arising from torsion \cite{henmi2018statistical}. This obstructs a straightforward generalization of the Pythagorean theorem for nested resources.

\subsection{Relative entropies of T-violation: general properties}
Motivated by the central role of relative entropy monotones $\Delta_{m \backslash e}^{\trs \backslash \unital}$ in the main decomposition theorem, we explore their general properties. The \textit{e}-type kinematic T-violation (i.e., $\Delta_e^\trs$ applied to unital channels) will be further analyzed with concrete examples in Section~\ref{sec:kinematic}.

A benefit of \textit{m}-type relative entropy monotones is their potential to be strongly additive. This was observed for entanglement in \cite{eisert_remarks_2003}, whose proof we adapt to T-violation.
\begin{lemma}
    For any pair of channels $\chan_A$ on system $A$ and $\chan_B$ on system $B$,
    \begin{align}
        \sigma_{\trs \backslash \unital}^m (\chan_A \otimes \chan_B) &= \sigma_{\trs \backslash \unital}^m (\chan_A) \otimes \sigma_{\trs \backslash \unital}^m (\chan_B),\\
        \Delta^{\trs \backslash \unital}_m (\chan_A \otimes \chan_B) &= \Delta^{\trs \backslash \unital}_m (\chan_A) + \Delta^{\trs \backslash \unital}_m (\chan_B).
    \end{align}
\end{lemma}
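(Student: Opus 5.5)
The plan follows Eisert \emph{et al.}~\cite{eisert_remarks_2003}: prove the two inequalities $\le$ and $\ge$ for $\Delta_m$ separately, then read off the product form of the projection from uniqueness of the \textit{e}-projection. Throughout, $\free$ stands for either $\trs$ or $\unital$. For the composite system $AB$ we use the motion reversal $\trev_A\otimes\trev_B$. After reordering tensor factors, the Choi state factorizes as $\cstate[\chan_A\otimes\chan_B]=\cstate[\chan_A]\otimes\cstate[\chan_B]$.

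\emph{Upper bound.} First I check that $\free(A)\otimes\free(B)\subset\free(AB)$. Unitality is clearly preserved by tensor products. For T-symmetry, Theorem~\ref{thm:trs_channel} gives $(\chan_A\otimes\chan_B)_\trev=(\chan_A)_\trev\otimes(\chan_B)_\trev$, because the adjoint and the adjoint action both factorize. Now take $\sigma_A,\sigma_B$ to be the minimizers for $\chan_A,\chan_B$; the minima are attained because the free Choi-state sets are compact and convex. Additivity of relative entropy on product states then gives $\Delta_m(\chan_A\otimes\chan_B)\le\Delta_m(\chan_A)+\Delta_m(\chan_B)$.

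\emph{Lower bound.} Let $\sigma_{AB}=\cstate[\chan']$ with $\chan'\in\free(AB)$ be arbitrary, and write $\rho_A=\cstate[\chan_A]$, $\rho_B=\cstate[\chan_B]$. Expanding the logarithm of the product state and using subadditivity of the von Neumann entropy,
\begin{align}
S(\sigma_{AB}\|\rho_A\otimes\rho_B) &= -S(\sigma_{AB})-\trace[\sigma_A\log\rho_A]-\trace[\sigma_B\log\rho_B] \nonumber\\
&\ge S(\sigma_A\|\rho_A)+S(\sigma_B\|\rho_B).
\end{align}
Here $\sigma_A$ and $\sigma_B$ denote the marginals of $\sigma_{AB}$ on $A_\mathrm{in}A_\mathrm{out}$ and $B_\mathrm{in}B_\mathrm{out}$. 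The inequality is an equality exactly when $\sigma_{AB}=\sigma_A\otimes\sigma_B$. To finish, it suffices to show that each marginal is the Choi state of a free channel on the corresponding subsystem.

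\emph{Key step: marginals are free.} Tracing out $B_\mathrm{in}B_\mathrm{out}$ gives $\sigma_A=\cstate[\chan'_A]$, where $\chan'_A(Y):=\trace_B[\chan'(Y\otimes \id_B/d_B)]$. This $\chan'_A$ is CPTP, and if $\chan'(\id)=\id$ then $\chan'_A(\id_A)=\id_A$, so unitality passes to the marginal. For T-symmetry I use the Choi-state criterion \eqref{choi_time_reversal}: $\chan'$ is T-symmetric iff its Choi state is invariant under $\adj_{(X_A\otimes X_B)^\dagger_\mathrm{out}\otimes(X_A\otimes X_B)^\top_\mathrm{in}}\circ\swapio$. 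This map factorizes into an $A$ part and a $B$ part, and the partial trace over $B_\mathrm{in}B_\mathrm{out}$ commutes with the $B$ part, since that part is a unitary conjugation composed with a swap of the two $B$ factors. Hence the invariance descends to $\sigma_A$, so $\chan'_A\in\trs(A)$. I expect this to be the only point that needs care. One must check that the $A_\mathrm{in}$ marginal stays $\id/d_A$ and that the swap acts on each subsystem separately, which is where the product form $\trev_A\otimes\trev_B$ is used.

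\emph{Conclusion.} Minimizing the lower bound over $\sigma_{AB}$ gives $\Delta_m(\chan_A\otimes\chan_B)\ge\Delta_m(\chan_A)+\Delta_m(\chan_B)$, which together with the upper bound proves additivity. For the projections: the product of the two individual minimizers attains the joint minimum. The free set of Choi states on $AB$ is \textit{m}-TGC, so the \textit{e}-projection is unique, and therefore $\sigma^m(\chan_A\otimes\chan_B)=\sigma^m(\chan_A)\otimes\sigma^m(\chan_B)$. If $\cstate[\chan]$ is not strictly positive, I would allow the value $+\infty$ in the same argument; the inequalities still hold in that case.
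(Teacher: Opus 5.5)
Your proposal is correct and follows essentially the same Eisert-type argument as the paper: the subadditivity bound $S(\tau\|\rho_A\otimes\rho_B)\ge S(\tau_A\|\rho_A)+S(\tau_B\|\rho_B)$, the fact that the marginals are free Choi states, and uniqueness of the \textit{e}-projection. The differences are minor. The paper obtains freeness of the marginals by noting that the partial trace is the T-symmetric (unital) superchannel of Fig.~\ref{fig:superchannels}(c), whereas you check it directly on Choi states. The paper also merges your upper and lower bounds into one step, applying the inequality to the joint projection and concluding that the product of its marginals must be that projection.
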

\begin{proof}
    Generally, for any bipartite quantum state $\tau$ and product state $\rho_A \otimes \rho_B$, we have \cite{eisert_remarks_2003}
    \begin{align}
        \label{strong_additivity1}  S(\tau || \rho_A \otimes \rho_B) &\geq S(\trace_B[\tau] \otimes \trace_A[\tau] || \rho_A \otimes \rho_B) \\
        \label{strong_additivity2}  &= S(\trace_B[\tau] || \rho_A ) + S(\trace_A[\tau] || \rho_B).
    \end{align}
    Now we take $\tau = \sigma^m_{\trs \backslash \unital}(\chan_A \otimes \chan_B)$, $\rho_A = \cstate[\chan_A]$, and $\rho_B = \cstate[\chan_B]$. Taking the partial trace of the Choi state over $A$, for instance, corresponds to applying a T-symmetric (or unital) superchannel that inputs the completely mixed state on $\hil_A^\mathrm{in}$ and traces out $\hil_A^\mathrm{out}$ (see Figure~\ref{fig:superchannels} (c)). Therefore, $\trace_A \left[ \sigma^m_{\trs \backslash \unital}(\chan_A \otimes \chan_B) \right]$ is a valid Choi state of a T-symmetric$\backslash$unital channel, and the same holds for $\trace_B \left[ \sigma^m_{\trs \backslash \unital}(\chan_A \otimes \chan_B) \right]$. Since
    \begin{equation}
        \trace_B \left[ \sigma^m_{\trs \backslash \unital}(\chan_A \otimes \chan_B) \right] \otimes \trace_A \left[ \sigma^m_{\trs \backslash \unital}(\chan_A \otimes \chan_B) \right]
    \end{equation}
    is a valid Choi state of a T-symmetric$\backslash$unital channel, \eqref{strong_additivity1} implies that it is the \textit{e}-projection of (the Choi state of) $\chan_A \otimes \chan_B$. The lemma follows from \eqref{strong_additivity2} and the uniqueness of \textit{e}-projections.
\end{proof}
It will be evident from the results of Section~\ref{subsec:monotones} that strong additivity does not hold for \textit{e}-type relative entropy monotones.

While \textit{e}-type relative entropy monotones have been proposed for various convex resource theories, including entanglement, coherence, and imaginarity, \textit{m}-type ones have been actively studied recently due to their central role in the quantum Sanov theorem \cite{hayashi_entanglement_2025,lami_asymptotic_2024,lami_generalised_2025}.
A difficulty in handling \textit{m}-type monotones, already mentioned in \cite{eisert_remarks_2003}, is that they easily diverge when the target state has low rank. Indeed, the relative entropy $S(\sigma || \rho)$ becomes $+ \infty$ whenever the support of $\sigma$ is not contained in the support of $\rho$. Therefore, \textit{m}-type monotones are sometimes too coarse to evaluate the strength of convex resources. For this reason, we focus on \textit{e}-type monotones when exploring the convertibility of kinematic T-violation resources in Section~\ref{sec:kinematic}.

We close this section by a remark on the underlying minimizations involved in our relative-entropy measures.
These minimizations are generally difficult to perform analytically.
In our setting, Theorem~\ref{thm:twirl} determines the closest T-symmetric channel to a given unital channel in the second argument, leading to a closed-form expression for $\Delta^\trs_e$ on unital channels.
Finding the unital channel closest to a given channel is a central problem in operator scaling \cite{gurvits_classical_2003,idel_review_2016,allen-zhu_operator_2018,matsuda_information_2022} and in the quantum Schr\"odinger bridge problem \cite{georgiou_positive_2015}. For classical stochastic matrices, the closest doubly stochastic matrix (in the first argument of relative entropy) can be found iteratively using Sinkhorn's algorithm \cite{sinkhorn_relationship_1964,sinkhorn_concerning_1967,csiszar_i-divergence_1975}. This procedure generalizes to operator scaling for quantum channels \cite{gurvits_classical_2003}. Although operator scaling shares the geometric interpretation of Sinkhorn's algorithm, it does not necessarily converge to the closest unital channel in Choi relative entropy \cite{matsuda_information_2022}. The quantum Schr\"odinger bridge approach likewise constructs a family of unital channels from a given nonunital channel \cite{georgiou_positive_2015}, but the distance between the original channel and its ``bridge'' remains unexplored.

\section{Kinematic T-violation}\label{sec:kinematic}
In this section, we study the convertibility of unital channels under T-symmetric superchannels and introduce monotone functions for this conversion preorder. This is equivalent to studying the resource theory of kinematic T-violation.

We identify a channel with maximal kinematic T-violation in the smallest nontrivial, two-dimensional system. This channel serves as a golden unit: a universal resource that can be converted into any unital channel by a T-symmetric superchannel. Its existence implies that kinematic T-violation is not an extensive resource that simply increases with system size. Indeed, we find that kinematic T-violation is neither quantity-like nor quality-like.

We also identify a classical golden unit: a unital classical channel (i.e., a doubly stochastic map) that can be converted into any unital channel. Thus, under our definition of T-symmetry, T-violation is not an exclusively quantum phenomenon.

Resource monotones provide necessary conditions for convertibility. We discuss both the standard construction of dynamical resource monotones \cite{liu_operational_2020} and alternative constructions \eqref{monotone_trs} tailored to the present theory.

We do not, however, obtain necessary and sufficient criteria for convertibility, even after restricting to pure resources, i.e., unitary channels. Such criteria are known in particular resource theories, including entanglement, coherence, and imaginarity. A complete characterization of convertibility for T-violation remains an important direction for future work.


\subsection{Interplay between different motion-reversal operators}
The construction of free superchannels allows us to discuss resource conversion not only within a fixed system, but also between systems equipped with different motion-reversal operators. This makes it possible to compare resources defined relative to distinct notions of motion reversal. We begin with isomorphic conversions between systems of the same dimension but with different motion-reversal operators.

Two systems have the same resource-convertibility structure when they are related by a free isomorphism. Indeed, suppose that $i:A \multimap B$ is a free operation and that its inverse $i^{-1}:B \multimap A$ exists and is also free, with inverse understood in the sense that $i \circ i^{-1} = \iden$. Then every resource conversion within $A$ has an isomorphic counterpart in $B$. More explicitly, if a free operation $f_A$ on $A$ satisfies $f_A(a_1)=a_2$, then $f_B := i \circ f_A \circ i^{-1}$ is a free operation on $B$ satisfying $f_B(b_1)=b_2$, where $b_1=i(a_1)$ and $b_2=i(a_2)$. Conversely, one can reconstruct $f_A$ from $f_B$. The superchannels $i$, $i^{-1}$, $f_A$ and $f_B$ are related by the commutative diagram:
\[
\begin{tikzcd}
  A \ar[r,yshift=0.5ex, "i"] \arrow[d, "f_A"'] & B \ar[d, "f_B"] \ar[l,yshift=-0.5ex, "i^{-1}"] \\
  A \ar[r,yshift=0.5ex, "i"]  & B.  \ar[l,yshift=-0.5ex, "i^{-1}"]
\end{tikzcd}
\]
The two systems are therefore equivalent as resource theories. This is the same logic that permits one to fix the computational basis in resource theories of coherence and imaginarity.

We now characterize such isomorphisms for systems of equal dimension equipped with different motion-reversal operators.
\begin{theorem}\label{thm:isomorphism}
    Let $A = (\hil_A, \conj X)$ and $B = (\hil_B, \conj Y)$ be systems having the same dimension $\dim \hil_A = \dim \hil_B$.
    There exists an invertible T-symmetric superchannel $f \in \trs(A,B)$ if and only if $X^\dagger X^\top$ and $Y^\dagger Y^\top$ share the same spectrum up to a global phase. When it exists, the inverse $f^{-1}$ belongs to $\trs(B,A)$.
\end{theorem}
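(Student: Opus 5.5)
The plan is to reduce the question to unitary pre- and post-processing, and then to turn the T-symmetry condition $f=f_\trev$ into an intertwining relation between $X^\dagger X^\top$ and $Y^\dagger Y^\top$.

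\textbf{Step 1: invertible superchannels are unitary conjugations.} I would first invoke the known characterization of reversible deterministic superchannels between systems of equal dimension (Chiribella--D'Ariano--Perinotti). Every invertible $f:A\multimap B$ has the form
\[
f(\chan)=\adj_V\circ\chan\circ\adj_W,
\]
for unitaries $V,W:\hil_A\to\hil_B$ (or the appropriate direction for $W$), with inverse $f^{-1}(\chan)=\adj_{V^\dagger}\circ\chan\circ\adj_{W^\dagger}$. Here $V$ and $W$ are determined up to phases.

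\textbf{Step 2: compute the RPS of such an $f$.} From Lemma~\ref{lem:unital}, the adjoint $f^\dagger$ has encoder $\adj_{V^\dagger}$ and decoder $\adj_{W^\dagger}$, so $f^\dagger(\chan)=\adj_{W^\dagger}\circ\chan\circ\adj_{V^\dagger}$. Then \eqref{tr_superchannel} gives
\[
f_\trev(\chan)=\adj_{\trev_B^\dagger W^\dagger \trev_A}\circ\chan\circ\adj_{\trev_A^\dagger V^\dagger\trev_B}.
\]
Now substitute $\trev_A=\conj X$, $\trev_B=\conj Y$ and use $\conj W^\dagger\conj=W^\top$. This yields
\[
\trev_B^\dagger W^\dagger\trev_A=Y^\dagger W^\top X,
\]
and similarly for the other factor.

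\textbf{Step 3: T-symmetry as an intertwining relation.} The condition $f=f_\trev$ must hold on all channels, including channels on $A\otimes E$ with ancillas. Testing it on enough channels (for instance swap-type channels, as in the proof that T-symmetric equals T-free) forces equality of the pre- and post-processing unitaries up to phases:
\[
V=e^{i\alpha}\,Y^\dagger W^\top X,\qquad W=e^{i\beta}\,X^\dagger V^\top Y.
\]
Transpose the first relation and substitute it into the second. Using $(Y^\dagger)^\top=\overline{Y}$ and $\overline{Y}^{-1}=Y^\top$, the pair reduces to the single condition
\[
X^\dagger X^\top\, W=e^{i\gamma}\, W\, Y^\dagger Y^\top
\]
for some phase $\gamma$. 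Once $W$ satisfies this, $V$ is then defined by the first relation.

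\textbf{Step 4: both directions and the inverse.} Both $X^\dagger X^\top$ and $Y^\dagger Y^\top$ are unitary, hence normal. A unitary $W$ satisfying $X^\dagger X^\top=e^{i\gamma}WY^\dagger Y^\top W^\dagger$ therefore exists if and only if the two operators have the same spectrum up to the global phase $e^{i\gamma}$.
\begin{itemize}
\item For necessity, the relation read off from $f$ exhibits the spectral equality directly.
\item For sufficiency, I would take $W$ to be the diagonalizing unitary that matches eigenbases, define $V$ from the first relation, and check that both relations of Step 3 hold.
\end{itemize}
For the inverse, the defining relations are symmetric under $(V,W,X,Y)\mapsto(V^\dagger,W^\dagger,Y,X)$. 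Equivalently, one can apply $(g\bullet f)_\trev=g_\trev\bullet f_\trev$ (from \eqref{tr_superchannel} and the adjoint rule) to $f^{-1}\bullet f=i_A$. Either route shows $f^{-1}\in\trs(B,A)$.

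One side remark: involutivity $\trev^2\propto\id$ means $\overline{X}X\propto\id$, so $X^\top=\pm X$ and $X^\dagger X^\top=\pm\id$. The spectral condition then just compares these scalars up to phase. This suggests the theorem is stated so as not to rely on that assumption.

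\textbf{Main obstacle.} I expect the hardest part to be Step 1 together with the phase bookkeeping in Step 3. Step 1 requires citing or re-deriving, in the staircase representation $\mathcal{M}_f$, that invertibility forces trivial memory and unitary encoder and decoder. Step 3 requires extracting operator equalities from equality of superchannels, which hold only up to independent phases on each side; consistently combining these phases into the single $e^{i\gamma}$ is the delicate point.
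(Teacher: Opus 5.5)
Your proposal is correct and follows essentially the same route as the paper. Both reduce to a memoryless unitary encoder--decoder pair and translate $f=f_\trev$ into the relations $V\propto Y^\dagger W^\top X$, $W\propto X^\dagger V^\top Y$ (the paper's $D\propto Y^\dagger E^\top X$, $E\propto X^\dagger D^\top Y$); both then eliminate one unitary to get a unitary-similarity condition between $X^\dagger X^\top$ and $Y^\dagger Y^\top$, and obtain the inverse from the adjoint symmetry of these relations. The differences are cosmetic: you compute the RPS from $f_\trev=\mathcal{A}_{\trev_B}\bullet f^\dagger\bullet\mathcal{A}_{\trev_A^\dagger}$ rather than the Choi-operator formula, and you eliminate the decoder instead of the encoder.
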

\begin{proof}
    A deterministic superchannel between systems of the same dimension is invertible if and only if it is implemented by a unitary encoder--decoder pair with no memory system. Let $E$ and $D$ denote the unitary transformations implementing the encoder and decoder, respectively. The Choi operator of the superchannel is then
    \begin{equation}
        \label{choi_isomorphism}    \choi[f] = \adj_{E_A^\mathrm{in}} (\Psi_{B_\mathrm{in}:B_\mathrm{in}}) \otimes \adj_{D_B^\mathrm{out}} (\Psi_{A_\mathrm{out}:A_\mathrm{out}}),
    \end{equation}
    where $\Psi$ denotes the unnormalized maximally entangled operator on the indicated spaces. The unital superchannel $f$ is T-symmetric if and only if
    \begin{equation}
        \label{choi_symmetric}  \choi[f] = \adj_{X^\dagger_\mathrm{in} \otimes Y^\dagger_\mathrm{out} \otimes X^\top_\mathrm{out} \otimes Y^\top_\mathrm{in}} \circ \swapio (\choi[f]),
    \end{equation}
    from \eqref{choi_time_reversal_superchannel}, where $\trev_A = \conj X$ and $\trev_B = \conj Y$. By substituting \eqref{choi_isomorphism} into \eqref{choi_symmetric}, we obtain the condition
    \begin{equation}
        \label{condition1}    D \propto Y^\dagger E^\top X, \quad E \propto X^\dagger D^\top Y.
    \end{equation}
    Eliminating $E$ gives the equivalent condition
    \begin{equation}
        \label{condition2}  D X^\dagger X^\top D^\dagger \propto Y^\dagger Y^\top.
    \end{equation}
    When this condition holds, one may take $E = X^\dagger D Y$ to complete the construction of an invertible superchannel in $\trs(A,B)$. Condition \eqref{condition2} is precisely the statement that $X^\dagger X^\top$ and $Y^\dagger Y^\top$ have the same spectrum up to a global phase.

    The inverse of $f$ is implemented by the encoder $E' = E^\dagger$ and decoder $D' = D^\dagger$. For this inverse to be T-symmetric, these unitaries must satisfy
    \begin{equation}
        \label{condition3}    D' \propto X^\dagger {E'}^\top Y, \quad E' \propto Y^\dagger {D'}^\top X,
    \end{equation}
    analogously to \eqref{condition1}. This condition is automatically satisfied when $E' = E^\dagger$, $D' = D^\dagger$, and \eqref{condition1} holds.
\end{proof}

An antiunitary operator $\trev$ is called a conjugation when $\trev^\dagger = \trev$, and a skew-conjugation when $\trev^\dagger = - \trev$. If $\trev = \conj X$ is a conjugation or a skew-conjugation, then $X^\top = X$ or $X^\top = -X$, respectively, and hence $X^\dagger X^\top = \id$ or $X^\dagger X^\top = -\id$. Thus, systems whose motion-reversal operators are conjugations or skew-conjugations are related by isomorphic T-symmetric superchannels.

In general, the symmetry properties of antiunitary operators depend sensitively on the operator in question. The distinction between conjugations and skew-conjugations is especially important. Motion-reversal operators for bosons and fermions are typically modeled by conjugations and skew-conjugations, respectively. These two cases differ, for instance, in that conjugations have eigenvectors and define real subspaces, whereas skew-conjugations have no eigenvectors. Their distinction in bipartite systems is discussed in \cite{miyazaki_non-locality_2024}.
Nevertheless, the convertibility structures for conjugations and skew-conjugations, and hence for bosonic and fermionic motion reversal, are identical.

Because isomorphic systems have identical convertibility structures, it suffices to analyze the simplest representative. In what follows, we focus on systems whose motion-reversal operator is complex conjugation $\conj$ in the computational basis. Quantum channels are T-symmetric precisely when
\begin{equation}
    \chan = \chan^\top,
\end{equation}
in this convention. By Thm.~\ref{thm:isomorphism}, results obtained for this representative apply equally to all systems with conjugation or skew-conjugation motion-reversal operators.

\subsection{Resource monotones}\label{subsec:monotones}
We now construct functions that quantify kinematic T-violation. This includes the Choi-defined T-violation monotone \eqref{monotone_trs} applied to unital channels. Such monotones provide necessary conditions for convertibility: if $m(\chan)<m(\chan')$, then no T-symmetric superchannel can convert $\chan$ into $\chan'$.

\subsubsection{Monotones based on Choi states}
Monotones based on Choi states are non-increasing under unital superchannels, although not necessarily under all deterministic superchannels. Lemma~\ref{lem:choistate} implies that distances between Choi states are non-increasing under T-symmetric superchannels. In addition to \eqref{monotone_trs}, the following are monotones for any state-distance measure or divergence $d$:
\begin{align}
    \label{twirling_monotone}   \Delta_d^\mathrm{twirl} (\chan) &:= d(\cstate[\chan], \cstate[\chan_\mathrm{twirl}]),\\
    \label{flipped_monotone}    \Delta_d^\trev (\chan) &:= d(\cstate[\chan], \cstate[\chan_{\trev_A}]),
\end{align}
where the twirled channel $\chan_\mathrm{twirl}$ is defined by
\begin{equation}
    \label{twirling}    \chan_\mathrm{twirl} := \frac{\chan + \chan_{\trev_A}}{2}.
\end{equation}

Although the twirling monotone \eqref{twirling_monotone} is a monotone for general T-violation, it is particularly meaningful when $\chan$ is unital and hence $\chan_\mathrm{twirl}$ is a T-symmetric channel. For certain choices of $d$, twirling yields the closest Choi state of a T-symmetric channel in \eqref{monotone_trs}:
\begin{theorem}\label{thm:twirl}
    If $d$ is the trace distance, the Frobenius distance, or the quantum relative entropy (\textit{e}-type), then, for any unital channel $\chan_\unital$, its twirled channel $\chan_\mathrm{twirl}$ defined by \eqref{twirling} minimizes $d(\cstate[\chan_\unital], \cstate[\chan_\trs])$ over all T-symmetric channels $\chan_\trs$.
\end{theorem}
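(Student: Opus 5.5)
The plan is to exploit the symmetry of the problem under the RPS map acting on Choi states. Let $\mathcal{R}$ denote the map $\cstate[\chan] \mapsto \cstate[\chan_\trev]$. By \eqref{choi_time_reversal}, $\mathcal{R} = \adj_{X^\dagger_\mathrm{out} \otimes X^\top_\mathrm{in}} \circ \swapio$ is a unitary conjugation on the bipartite space $\hil_\mathrm{out}\otimes\hil_\mathrm{in}$. This holds provided the normalization is preserved, which is the case because $\chan$ and $\chan_\trev$ have equal trace of Choi operators. The map $\mathcal{R}$ is linear and involutive up to phase, i.e.\ $\mathcal{R}^2=\iden$, since $\trev^2 \propto \id$. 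The set of Choi states of T-symmetric channels is exactly the fixed-point set of $\mathcal{R}$ intersected with the Choi states of channels. Moreover, for unital $\chan_\unital$, both $\chan_\unital$ and $(\chan_\unital)_\trev$ are channels, so $\cstate[\chan_\mathrm{twirl}] = \tfrac12(\rho + \mathcal{R}\rho)$ with $\rho := \cstate[\chan_\unital]$ is itself the Choi state of a T-symmetric channel.

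The core step is the following argument, valid for any $d$ that is unitarily invariant and jointly convex, or convex in the second argument. For an arbitrary T-symmetric $\chan_\trs$ with $\sigma := \cstate[\chan_\trs] = \mathcal{R}\sigma$, unitary invariance gives
\begin{equation*}
d(\rho,\sigma) = d(\mathcal{R}\rho, \mathcal{R}\sigma) = d(\mathcal{R}\rho,\sigma).
\end{equation*}
Averaging and using joint convexity then yields
\begin{equation*}
d(\rho,\sigma) = \tfrac12 d(\rho,\sigma) + \tfrac12 d(\mathcal{R}\rho,\sigma) \ge d\!\left(\tfrac{\rho+\mathcal{R}\rho}{2},\sigma\right).
\end{equation*}
This alone is the wrong direction. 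We need a bound relating $d(\rho,\sigma)$ to $d(\rho,\tfrac{\rho+\mathcal{R}\rho}{2})$, so each distance is treated separately.

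\emph{Trace and Frobenius distances.} Write $\rho = \rho_+ + \rho_-$, where $\rho_\pm := (\rho \pm \mathcal{R}\rho)/2$ are the $\pm1$ eigencomponents of the involution $\mathcal{R}$ acting on operators. Then $\rho - \sigma = (\rho_+ - \sigma) + \rho_-$, and $\mathcal{R}(\rho-\sigma) = (\rho_+-\sigma) - \rho_-$. Since $\mathcal{R}$ is a unitary conjugation, it preserves any unitarily invariant norm $\|\cdot\|$. The triangle inequality therefore gives
\begin{equation*}
2\|\rho_-\| = \|(\rho-\sigma) - \mathcal{R}(\rho-\sigma)\| \le 2\|\rho-\sigma\|,
\end{equation*}
so $\|\rho - \rho_+\| = \|\rho_-\| \le \|\rho-\sigma\|$. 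This settles both the trace and Frobenius cases at once. In the Frobenius case it is simply orthogonal projection onto the fixed subspace.

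\emph{Relative entropy (\textit{e}-type).} Here I would use the identity
\begin{equation*}
S(\rho\|\sigma) - S(\rho\|\rho_+) = \trace[\rho(\log\rho_+ - \log\sigma)].
\end{equation*}
Because $\sigma$ and $\rho_+$ are $\mathcal{R}$-invariant and $\mathcal{R}$ is a unitary conjugation, their logarithms are also $\mathcal{R}$-invariant. Hence $\trace[\rho_-(\log\rho_+-\log\sigma)] = 0$, since $\mathcal{R}$ preserves the Hilbert--Schmidt inner product and flips the sign of $\rho_-$. The difference therefore equals $\trace[\rho_+(\log\rho_+-\log\sigma)] = S(\rho_+\|\sigma)\ge 0$, which gives the Pythagorean-type identity and the desired minimality. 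Support issues, where $\log$ can be infinite, are handled by restricting to the case of finite $S(\rho\|\sigma)$ and noting that $\mathrm{supp}\,\rho_+ \supseteq \mathrm{supp}\,\rho$.

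The step I expect to need the most care is the bookkeeping that $\mathcal{R}$ is genuinely a unitary conjugation on normalized Choi states. This requires checking that $\swapio$ composed with $\adj_{X^\dagger\otimes X^\top}$ is a unitary conjugation. It is, since the swap is a unitary, although it exchanges the tensor factor labels, so the input and output dimensions must coincide. One must also check that it is an involution, using $X^\dagger X^\top$ being proportional to a unitary with $\trev^2\propto\id$. In addition, the twirl must be confirmed to lie in $\trs(A)$, which uses unitality so that $(\chan_\unital)_\trev$ is trace-preserving.
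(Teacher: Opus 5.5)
Your proposal is correct and is essentially the paper's argument. The trace-distance step is the same triangle-inequality bound built on invariance of the distance under the RPS unitary. Your identity $S(\rho\|\sigma)=S(\rho\|\rho_+)+S(\rho_+\|\sigma)$ for RPS-invariant $\sigma$ is exactly the content of Proposition~2 of \cite{gour_measuring_2009}, which the paper cites rather than reproves. The one deviation is the Frobenius case: you reuse the triangle inequality, which works for any unitarily invariant norm, whereas the paper expands $\|\cstate[\chan]-\cstate[\chan_\mathrm{free}]\|^2$ and cancels the cross term via the sign flip of $L$, which additionally yields uniqueness of the minimizer.
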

This reduction applies to a broad class of relative-entropy monotones quantifying symmetry breaking \cite{gour_measuring_2009}. The proofs for the trace distance and Frobenius norm are analogous to those in state resource theories such as imaginarity, but are included in Appendix~\ref{sec:twirl} for completeness. For these distance functions, \eqref{monotone_trs} coincides with \eqref{twirling_monotone}.

An important example of the flip monotone \eqref{flipped_monotone} is the fidelity between Choi states. This fidelity is comparatively easy to compute and provides useful insight into convertibility. For unitary channels, it is also directly related to the relative-entropy monotone.


\subsubsection{Example: relative entropy and fidelity of kinematic T-violation}
For a unital channel $\chan$, the relative-entropy monotone reduces to the twirled relative entropy (Theorem~\ref{thm:twirl} and Proposition~2 of \cite{gour_measuring_2009}):
\begin{align}
    \Delta^\trs_e (\chan) &= \Delta^\mathrm{twirl}_e (\chan) = S \left( \cstate[\chan] || \cstate[\chan_\mathrm{twirl}] \right) \\
    \label{entropy_difference}  &= S(\cstate[\chan_\mathrm{twirl}]) - S(\cstate[\chan]) \quad (\chan \in \unital).
\end{align}
The monotone is also convex:
\begin{equation}
    \label{convexity}   \Delta^\trs_e (p \chan + (1-p) \chan') \leq p \Delta^\trs_e (\chan)+ (1-p) \Delta^\trs_e (\chan').
\end{equation}
Its range is
\begin{equation}
    \label{range}   0 \leq \Delta^\trs_e (\chan) \leq 1,
\end{equation}
when $\chan$ and $\chan'$ are unital (see Appendix~\ref{sec:convexity} for the derivation).

For a unitary channel $\adj_U$, the relative entropy of T-violation reduces to
\begin{equation}
    \label{relative_entropy_fidelity}   \Delta^\trs_e (\adj_U) = h_2 \left( \frac{1+ \sqrt{\fidelity^\trev(\adj_U)}}{2} \right),
\end{equation}
where $h_2$ is the binary entropy $h_2(p) = -p \log p - (1-p) \log (1-p)$, and $\fidelity^\trev$ is the fidelity of T-violation
\begin{equation}
    \fidelity^\trev(\chan) := \left(\trace \sqrt{\sqrt{\cstate[\chan]} \cstate[\chan_\trev] \sqrt{\cstate[\chan]}}\right)^2.
\end{equation}
For a unitary channel $\chan = \adj_U$, this reduces to
\begin{equation}
    \fidelity^\trev(\adj_U) = \frac{1}{d^2} \left| \trace [ U_\trev^\dagger U] \right|^2,
\end{equation}
and, in the special case $\trev = \conj$, to
\begin{equation}
    \label{fidelity_unitary}    \fidelity^\conj(\adj_U) = \frac{1}{d^2} \left| \trace [ \overline{U} U] \right|^2.
\end{equation}
The fidelity takes values in $[0,1]$. Unlike the convention for resource monotones above, it equals $1$ on T-symmetric channels and is monotonically non-decreasing under T-symmetric superchannels.

As a concrete example, consider a single-qubit system $(\mathbb{C}^2, \conj)$. By unitary congruence (see \cite{horn_matrix_2012} $\S$ 4.4, or \cite{wigner_normal_1960} for this version), any $2 \times 2$ unitary operator can be uniquely decomposed as
\begin{align}
    \label{congruence}  U = V \left[ \begin{array}{cc}
        0 & e^{i \phi} \\
        e^{-i \phi} & 0
    \end{array}\right] V^\top, \quad (\phi \in [0,\pi/2])
\end{align}
for some unitary $V$, and the eigenvalues of $\overline{U}U$ are $e^{\pm 2 i \phi}$. Therefore, the fidelity is
\begin{equation}
    \label{fidelity_qubit}  \fidelity^\conj(\adj_U) = | \cos 2 \phi |^2.
\end{equation}
Examples of T-symmetric unitary channels, corresponding to $\phi = 0,\pi/2$, are
\begin{align}
    U = \left[ \begin{array}{cc}
        1 & 0 \\
        0 & 1
    \end{array}\right],~
    \left[ \begin{array}{cc}
        0 & 1 \\
        1 & 0
    \end{array}\right],~
    \left[ \begin{array}{cc}
        0 & -i \\
        i & 0
    \end{array}\right].
\end{align}
By contrast, unitary channels that maximally break T-symmetry correspond to $\phi=\pi/4$; examples are
\begin{align}
    \label{golden}  G = \left[ \begin{array}{cc}
        0 & e^{i \pi /4} \\
        e^{- i \pi /4} & 0
    \end{array}\right],~
    \frac{1}{\sqrt{2}} \left[ \begin{array}{cc}
        1 & 1 \\
        -1 & 1
    \end{array}\right].
\end{align}

It is worth emphasizing that the values $\fidelity^\conj=0$ and $\Delta^\trs_S = 1$ are already attainable in a single-qubit system. In Sec.~\ref{subsec:golden_unit}, we establish that $\adj_G$ is a golden unit, i.e., a resource from which any other unital channel can be generated by a T-symmetric superchannel.

A notable property of this fidelity is multiplicativity under parallel composition:
\begin{equation}
    \label{multiplicativity}    \fidelity^{\trev_A \otimes \trev_B} (\chan_A \otimes \chan_B) = \fidelity^{\trev_A} (\chan_A ) \times \fidelity^{\trev_B} (\chan_B),
\end{equation}
for any pair of channels $\chan_A \in \unital(A)$ and $\chan_B \in \unital(B)$. Here, the motion reversal of the composite system is taken to be $\trev_A \otimes \trev_B$. The identity follows directly from multiplicativity of the ordinary state fidelity. It implies that if a composite resource is maximal, in the sense that $\fidelity=0$, then at least one component resource must already be maximal. A tensor product of non-maximal resources can therefore never become maximal.

\subsubsection{Monotones based on channel distinguishability}\label{subsubsec:distinguishability}
Let us now follow the general method for constructing monotones for dynamical resources \cite{liu_operational_2020}, which is outlined by \eqref{monotone_general}. The construction is based on channel distinguishability.

The monotone \eqref{monotone_general} is difficult to compute because it involves nested optimizations.
Analogously to the Choi-state monotones, the following functions are nevertheless monotones:
\begin{equation}
    \label{monotone_ansatz1}    \sup_\rho d(\chan \otimes \iden (\rho), \chan_\mathrm{twirl} \otimes \iden (\rho)),
\end{equation}
and
\begin{equation}
    \label{monotone_ansatz2}    \sup_\rho d(\chan \otimes \iden (\rho), \chan_{\trev_A} \otimes \iden (\rho)).
\end{equation}
Indeed, both functions vanish exactly when $\chan = \chan_{\trev_A}$ and are non-increasing under arbitrary deterministic superchannels, not only under T-symmetric ones. Note, however, that the analogue of Theorem~\ref{thm:twirl} need not hold.

For example, choosing the diamond distance gives the monotone
\begin{equation}
    \label{diamond} || \chan - \chan_\mathrm{twirl} ||_\diamond = \frac{1}{2} || \chan - \chan_{\trev_A} ||_\diamond.
\end{equation}
This quantity can be computed by semidefinite programming in general, and an analytic expression is known for unitary channels \cite{aharonov_quantum_1998}. However, as shown below, comparison with the Choi-state monotone \eqref{monotone_trs} reveals that the diamond-distance monotone \eqref{diamond} does not resolve the convertibility structure sufficiently well.

To illustrate this point, consider the diamond-distance monotone in \eqref{diamond} for a qubit system whose motion-reversal operator is $\conj$. Using the formula for the diamond distance between unitary channels \cite{aharonov_quantum_1998}, if the eigenvalues of $\overline{U}U$ are $e^{\pm 2 i \phi}$, then
\begin{align}
    \frac{1}{2} || \adj_U - \adj_{U_\conj} ||_\diamond = |\sin 2 \phi|.
\end{align}
This quantity attains its maximal value, $1$, precisely at $\phi=\pi/4$.
For two copies of the unitary channel, defined on a two-qubit system with motion-reversal operator $\conj \otimes \conj$, one instead obtains
\begin{align}
    \frac{1}{2} || \adj_{U \otimes U} - \adj_{(U \otimes U)_{\conj \otimes \conj}} ||_\diamond = \left\{ \begin{array}{lc}
        1 & \frac{\pi}{8} \leq \phi \leq \frac{3}{8} \pi \\
        |\sin 4 \phi| & \text{otherwise.}
    \end{array} \right.
\end{align}
Thus, the replicated channel attains the maximal diamond distance throughout the interval $\phi \in [\pi/8,3\pi/8]$. By contrast, the fidelity of the replicated channel reaches its minimum value, $0$, only when the original single-copy channel itself maximally violates T-symmetry. Consequently, the channel $\adj_{U \otimes U}$ at $\phi=\pi/8$ cannot be converted into the corresponding channel at $\phi=\pi/4$, because the latter has strictly smaller fidelity. The diamond-distance monotone does not detect this obstruction, since it is maximal in both cases. Fidelity therefore provides a finer resolution of the convertibility structure than the diamond distance.

This limited resolution is not a consequence of replacing the minimization in \eqref{monotone_general} with twirling as in \eqref{twirling}; rather, it reflects a general feature of channel distinguishability. The distinguishability of channels behaves differently from that of states under replication. For instance, nonorthogonal unitary channels can become perfectly distinguishable when a sufficiently large but finite number of copies is available. Consequently, resource measures based on channel distinguishability can readily attain their maximal value upon replication, even when each individual channel contains only a small amount of resource. For kinematic T-violation, however, no finite number of weakly resourceful objects can generate a maximal resource. The limited resolution thus arises from a mismatch between the properties of channel-distinguishability measures and those of the channel resource theory.

\subsection{Convertibility and golden units}\label{subsec:golden_unit}
Whereas resource monotones readily provide impossibility criteria, establishing that a conversion is possible generally requires the explicit construction of a T-symmetric superchannel and is therefore substantially more difficult. We focus on specific conversion scenarios that highlight the properties of kinematic T-violation. A general analytical characterization of convertibility remains open, although individual conversion problems can, in principle, be formulated as semidefinite programs.

We write $\chan_1 \succ \chan_2$ when $\chan_1$ can be converted into $\chan_2$ by a T-symmetric superchannel, and $\chan_1 \nsucc \chan_2$ when no such conversion is possible.

We first introduce a canonical form for unitary channels under invertible T-symmetric superchannels. The key ingredient is unitary congruence (see Sec.~4.4 of \cite{horn_matrix_2012}, or \cite{wigner_normal_1960} for this version): for any $d$-dimensional unitary matrix $U$, there exists a unitary $E$ such that
\begin{align}
    \label{congruence2}  E^\top U E &= \id_{d-2q} \oplus W_{\phi_1} \oplus \cdots \oplus W_{\phi_q}, \\
    W_{\phi_j} &:= \left[ \begin{array}{cc}
        0 & e^{i \phi_j} \\
        e^{-i \phi_j} & 0
    \end{array}\right],
\end{align}
where $e^{\pm 2 i \phi_j}$, with $\phi_j \in (0,\pi/2]$, are the eigenvalues of $\overline{U}U$ that differ from $1$. The superchannel with encoder $\adj_E$ and decoder $\adj_{E^\top}$ is invertible and T-symmetric. Therefore, $\adj_U$ and $\adj_{E^\top U E}$ are interconvertible:
\begin{equation}
    \adj_U \sim \adj_{E^\top U E}.
\end{equation}
We therefore call the right-hand side of \eqref{congruence2} the canonical form of $U$. It also shows that convertibility between unitary channels $\adj_U$ and $\adj_V$ is determined solely by the eigenvalues of $\overline{U}U$ and $\overline{V}V$. For a symmetric unitary satisfying $U^\top=U$, no $W$ blocks appear: the identity block $\id_{d-2q}$ in \eqref{congruence2} does not contribute to T-violation.

The resource theories of $\mathbb{Z}_2$ asymmetry and imaginarity \cite{hickey_quantifying_2018} admit a universal golden unit: a state that can be converted into any other state, in any dimension, by free operations. Kinematic T-violation shares this property.
\begin{theorem}\label{thm:golden_unit}
    Let $G$ be defined by
    \begin{equation}
        G := W_{\pi/4} = \left[ \begin{array}{cc}
        0 & e^{\pi i/4} \\
        e^{- \pi i/4} & 0
        \end{array}\right].
    \end{equation}
    Then $\adj_G$ can be converted into any unital channel by a T-symmetric superchannel.
\end{theorem}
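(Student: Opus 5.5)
The strategy is to exhibit an explicit T-symmetric superchannel $f:A\multimap B$, with $A=(\mathbb{C}^2,\conj)$ and $B=(\hil_B,\conj)$, such that $f(\adj_G)=\chan$ for a given unital channel $\chan$ on $B$. By Theorem~\ref{thm:isomorphism} we may assume complex conjugation on $B$, so T-symmetry of a channel means $\chan=\chan^\top$. The superchannel will use the qubit channel only as a classical ``orientation detector'' controlling a classical switch between $\chan$ and $\chan^\top$. Both $\chan$ and $\chan^\top$ are unital channels, since $\chan^\top=\chan^\dagger$ up to conjugation.

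Concretely, $f$ acts on an input channel $\mathcal{C}$ as follows:
\begin{equation*}
f(\mathcal{C}) = \sum_{j,k\in\{0,1\}} \frac{1}{2}\, \trace\!\left[P_k\, \mathcal{C}(\rho_j)\right] \chan_{jk}, \qquad \chan_{jk}\in\{\chan,\chan^\top\}.
\end{equation*}
Here $\{\rho_0,\rho_1\}$ is an orthonormal basis of qubit states fed in with uniform weight. This keeps the encoder unital, which matches $\trs(A,B)\subset\unital(A,B)$. The measurement $\{P_0,P_1\}$ is a projective qubit measurement. This is a deterministic superchannel with a classical memory.

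First, choose the bases so that $G$ and $G^\top=\overline{G}^\dagger\cdot$(phase) are perfectly distinguished. A natural choice is $\rho_j=|\pm\rangle\langle\pm|$ and $P_k=G\rho_k G^\dagger$. Then $\trace[P_k\,\adj_G(\rho_j)]=\delta_{jk}$. For $G^\top$, which up to phase equals $W_{-\pi/4}$, the outputs $G^\top|\pm\rangle$ are the complex conjugates of $G|\pm\rangle$. These are $Y$-type states, so they land on the opposite projector: $\trace[P_k\,\adj_{G^\top}(\rho_j)]=1-\delta_{jk}$. Setting $\chan_{jj}=\chan$ and $\chan_{j,1-j}=\chan^\top$ then gives $f(\adj_G)=\chan$, and also $f(\adj_{G^\top})=\chan^\top$ and $f(\text{symmetric } \mathcal{C})$ = a mixture.

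The second step, which I expect to be the main obstacle, is verifying $f=f_\trev$ using Eq.~\eqref{tr_superchannel} or the Choi criterion~\eqref{choi_time_reversal_superchannel} with $X=Y=\id$. Under the RPS, the roles of preparations and effects are swapped and everything is transposed. The term with $(\rho_j,P_k,\chan_{jk})$ is mapped to a term with preparation $\overline{P_k}$ (renormalised), effect $\overline{\rho_j}$, and channel $\chan_{jk}^\top$. The uniform weights $1/2$ together with rank-one projectors make the normalisations of Eqs.~\eqref{s_to_e}--\eqref{e_to_s} consistent.

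One must therefore check that the set of triples is invariant under this map. Concretely, $\{\overline{P_k}\}$ must again be the preparation basis and $\{\overline{\rho_j}\}$ the measurement basis, and the relabelling must send diagonal pairs $(j,j)$ to off-diagonal ones, so that $\chan\leftrightarrow\chan^\top$ is consistent. I would first test the bases above. Here $\overline{|\pm\rangle}=|\pm\rangle$, while $\overline{G|\pm\rangle}$ equals $G|\mp\rangle$ up to phase, by the $Y$-eigenstate computation. This swaps the basis roles incorrectly (preparation $\leftrightarrow$ measurement bases differ). If so, I would symmetrise by choosing $\rho_j$ and $P_k$ from the same mutually conjugate pair of bases, e.g.\ $\rho_j$ and $P_k$ both taken from the $\{|0\rangle,|1\rangle\}$-rotated basis $V|j\rangle$ with $G=V W V^\top$. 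Alternatively, I would average $f$ with its own RPS; this keeps $f(\adj_G)=\chan$ provided $f_\trev(\adj_G)=\chan$ as well, which holds because $(\adj_G)_\conj=\adj_{G^\top}$ is mapped by the reversed detector to $\chan^{\top\top}=\chan$. The averaging trick is the fallback, because it makes T-symmetry automatic.

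Finally, confirm that the averaged map is a deterministic superchannel, i.e.\ that $f_\trev$ is deterministic here. This follows since its encoder is a unital preparation and its decoder is a trace-preserving measure-and-switch. Conclude that $\adj_G\succ\chan$ for every unital $\chan$ in every dimension.
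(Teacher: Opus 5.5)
Your proposal does prove the theorem, but only through the averaging step, and that step is necessary rather than a fallback.

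Consider a single detector of your form: uniformly weighted orthonormal preparations $\rho_j$, a two-outcome projective measurement $\{P_k\}$, and outputs in $\{\chan,\chan^\top\}$, with $\chan\neq\chan^\top$. Exactness of $f(\adj_G)=\chan$, together with $\choi[f]=\swapio(\choi[f])$, forces up to relabelling $\chan_{jj}=\chan$, $\chan_{j,1-j}=\chan^\top$ and $P_j=G\rho_jG^\dagger$. Comparing coefficients of $\choi[\chan]$ then gives $N+FNF=\id\otimes\id$, with $N=\sum_j\rho_j\otimes\overline{P_j}$ and $F$ the swap. Write $\rho_j=(\id\pm\vec n\cdot\vec\sigma)/2$ and $\overline{P_j}=(\id\pm\vec v\cdot\vec\sigma)/2$. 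The condition then requires $\vec n\vec v^{\top}+\vec v\vec n^{\top}=0$ for unit vectors $\vec n,\vec v$, which is impossible. So no choice of bases rescues a single detector. In particular, in the computational basis $\adj_G$ and $\adj_{G^\top}$ give identical statistics and are not even distinguished.

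The average $\frac12(f+f_\trev)$ works exactly as you say:
\begin{itemize}
\item $f_\trev(\mathcal{C})=[f(\mathcal{C}^\top)]^\top$ prepares $\overline{P_k}$ uniformly, measures $\{\overline{\rho_j}\}$ and applies $\chan_{jk}^\top$. This is a channel because $\chan$ is unital, so $f_\trev$ is a deterministic superchannel.
\item $\choi[f_\trev]=\swapio(\choi[f])$, so the average is swap-invariant, i.e.\ T-symmetric.
\item Your computation $f(\adj_{G^\top})=\chan^\top$ gives $f_\trev(\adj_G)=\chan$, hence the average maps $\adj_G$ to $\chan$.
\end{itemize}

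The paper instead writes down one explicit Choi operator $J_\chan=D+J_++J_-$. This is a three-outcome tester on the Choi state of the qubit channel. The $\{|00\rangle,|11\rangle\}$ sector triggers the completely depolarizing channel. Two Bell-type effects on the $\{|01\rangle,|10\rangle\}$ sector trigger $\chan$ and $\chan^\top$. Positivity, normalization and swap invariance are checked term by term.

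Grouped by output channel, your averaged tester is the same object except on the $\{|00\rangle,|11\rangle\}$ sector. There you split the weight evenly between $\chan$ and $\chan^\top$, i.e.\ you output $\chan_{\mathrm{twirl}}$ rather than the depolarizing channel. Either choice is T-symmetric, which is all that sector needs.

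Your route buys three things:
\begin{itemize}
\item An implementation using only product preparations and measurements plus classical memory.
\item T-symmetry that is automatic rather than verified.
\item A reusable principle: if a deterministic $f$ with deterministic RPS satisfies $f(a)=b$ and $f(a_\trev)=b_\trev$, then $\frac12(f+f_\trev)$ is a T-symmetric conversion $a\to b$. This also adapts to a general $\trev_B$ by using $\chan_{\trev_B}$ in place of $\chan^\top$.
\end{itemize}
The paper's route gives a closed-form Choi operator with a direct check. It does not need to argue separately that $f_\trev$ is deterministic.
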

\noindent The proof is presented in Appendix~\ref{sec:golden_unit}.
The existence of a universal golden unit may seem natural because T-symmetry resembles a $\mathbb{Z}_2$ symmetry: it can be expressed as invariance of the Choi operator under $\swapio$. However, Choi operators satisfy additional dynamical constraints that break the isotropy of the ambient state space. Theorem~\ref{thm:golden_unit} therefore does not follow directly from the corresponding result for $\mathbb{Z}_2$ asymmetry.

The existence of a finite-dimensional universal golden unit constrains the quantitative behavior of kinematic T-violation. In the terminology of \cite{coecke_mathematical_nodate}, this resource is neither quantity-like nor quality-like. Specifically, a resource theory is called quantity-like when $a_1 \otimes a_2 \simeq b_1 \otimes b_2$ and $a_1 \preceq b_1$ together imply $a_2 \succeq b_2$. Kinematic T-violation does not satisfy this condition because the universal golden unit acts as a \textit{catalyst}:
\begin{equation}
    \adj_G \otimes \chan \simeq \adj_G \otimes \chan',
\end{equation}
for every pair of unital channels $\chan$ and $\chan'$.

A resource theory is called quality-like when $a \otimes a \simeq a$ holds for every resource $a$ \cite{coecke_mathematical_nodate}. For kinematic T-violation, $\chan \otimes \chan \succ \chan$, because discarding one copy is a T-symmetric superchannel. Conversely, multiplicativity of the fidelity, Eq.~\eqref{multiplicativity}, implies
\begin{equation}
    1 > \fidelity^\trev (\chan) > 0 \quad \Rightarrow \quad \chan \nsucc \chan \otimes \chan.
\end{equation}
Therefore, kinematic T-violation is not quality-like. The universal golden unit is an exception to this obstruction to replication:
\begin{equation}
    \adj_{G} \succ \adj_{G} \otimes \adj_{G},
\end{equation}
where $G$ is defined in \eqref{golden}.

There also exists a classical golden unit: a doubly stochastic map that can be converted into any unital channel by a T-symmetric superchannel. Define the doubly stochastic map $\circlearrowleft$ on $X_3 = \{ 0,1,2 \}$ by the matrix
\begin{equation}
    \label{circulator}  \left[\begin{array}{ccc}
        0 & 1 & 0 \\
        0 & 0 & 1 \\
        1 & 0 & 0
    \end{array}\right].
\end{equation}
This matrix represents a three-port circulator in scattering theory and is a prototypical nonreciprocal scattering matrix. With motion reversal defined as the identity on the index set $X_3$, $\circlearrowleft$ can be converted into the golden unit $\adj_G$ by a T-symmetric superchannel (see Appendix~\ref{sec:golden_unit}). Thus, $\circlearrowleft$ is also a golden unit.

A distinction between classical and quantum T-violation emerges when the dimension is fixed: no two-dimensional doubly stochastic map is a golden unit. Without a dimensional restriction, however, kinematic T-violation is not intrinsically quantum, in the sense that every unital T-violating channel can be generated from the doubly stochastic map $\circlearrowleft$.

\section{Conclusion and outlook}\label{sec:conclusion}
We have established a rigorous framework for quantifying T-violation in quantum channels using operational resource theory. Our definition of T-symmetric channels draws on both the conventional treatment of motion-reversal operators and the operational formulation of time reversal by Oreshkov and Cerf \cite{oreshkov_operational_2015}. In this framework, T-violation is characterized by the difference between a channel and its transpose, up to unitary transformations determined by the chosen motion-reversal operator.

Overall T-violation comprises two contributions: kinematic T-violation, which depends on the motion-reversal operator, and nonunitality, associated with channel non-invertibility. We distinguish these contributions structurally by identifying nested classes of resource-nongenerating operations, or free superchannels, for T-violation and nonunitality. Quantitatively, we prove that the relative-entropy measure of overall T-violation decomposes exactly into the corresponding measure of nonunitality and the kinematic T-violation of the closest unital channel. This identity provides a rigorous separation of the two contributions.

This quantitative decomposition relies on two techniques with implications beyond the present resource theory. The first is a nonstandard construction of dynamical resource monotones based on Choi states. These monotones can resolve distinctions in the convertibility of kinematic T-violation resources that channel-distinguishability-based monotones fail to detect. Our analysis thus illustrates a limitation of channel distinguishability as a basis for quantifying dynamical resources.

The second technique applies the generalized Pythagorean theorem for statistical manifolds \cite{matsuda_information_2022} to nested resource theories. In the resulting information-geometric picture, overall T-violation, nonunitality, and kinematic T-violation are represented by relative-entropy divergences associated with the three sides of a right triangle. Relations of this kind have been established, for example, between coherence and imaginarity, for which closed-form expressions for the resource monotones are available. Our analysis extends this relation to other nested state resource theories, without requiring closed-form expressions for the monotones.

We have also investigated resource monotones and convertibility for kinematic T-violation. The dynamical resource-theoretic framework permits comparisons between distinct physical systems, even when their motion-reversal operators differ. In particular, systems of the same dimension equipped with conjugation or skew-conjugation motion-reversal operators have isomorphic convertibility structures.

The fidelity-based monotone is particularly useful because it is multiplicative under parallel composition. This property implies that no finite number of resources with nonzero fidelity can be combined and converted into a resource with zero fidelity, i.e., a maximally T-violating channel. It also rules out exact replication of T-violating channels with intermediate fidelity by T-symmetric superchannels, while leaving replication of the golden unit possible.

The golden unit reveals further structural properties of kinematic T-violation. As a maximal resource, it can be converted into any unital channel by a T-symmetric superchannel. Its catalytic role shows that kinematic T-violation is not a quantity-like resource in the sense of \cite{coecke_mathematical_nodate}; nevertheless, the resource is not quality-like either. We also identify a classical golden unit: a doubly stochastic matrix that can be converted into any unital channel, including quantum channels, by a T-symmetric superchannel. Thus, kinematic T-violation is not an exclusively quantum resource.

Beyond the open questions discussed in the main text, several directions merit further investigation. One is to extend the framework to probabilistic operations; another is to characterize T-violation arising from the sequential composition of noncommuting T-symmetric channels. In connection with the latter, formulating T-violation for $S$-matrices---composed via the Redheffer star product \cite{Redheffer,catalano2026quantum}---holds potential applications in scattering theory and wave reciprocity. The dynamical resource-theoretic principles developed here provide a starting point for these extensions.

\begin{acknowledgments}
    We thank many people for discussions, and especially Jacopo Surace, Nelly Ng, Satoshi Yoshida, Philip Taranto, and Kim Worrall for their valuable feedback. We acknowledge the use of Google Gemini (3.6 Thinking) and OpenAI Prism for proofreading and language polishing. The authors also utilized Google Gemini (3.6 Thinking) to assist in conceptual discussions and in drafting proof ideas for Appendix~\ref{sec:twirl}. All AI-generated outputs, including mathematical arguments and manuscript text, were independently verified, revised, and approved by the authors, who take full responsibility for the final proofs, scientific claims, and overall presentation. This work was supported by MEXT Quantum Leap Flagship Program (MEXT QLEAP) JPMXS0118069605 and JPMXS0120351339; Japan Science and Technology Agency (JST) as part of Adopting Sustainable Partnerships for Innovative Research Ecosystem (ASPIRE), Grant Number JPMJAP25A3; JST CREST, Grant Number JPMJCR25I5; JST NEXUS, Grant Number JPMJNX26C9; JSPS KAKENHI Grant No. 21H03394 and No. 23K21643; and IBM Quantum.
    K.K.\ was supported by a Mike and Ophelia Lazaridis Fellowship, a Funai Overseas Scholarship, and a Perimeter Residency Doctoral Award. 
\end{acknowledgments}

\appendix


\section{Partitioned process theory}\label{sec:partition}
We introduce the categories of deterministic, unital, and T-symmetric superchannels, denoted by $\cat_\mathrm{det}$, $\cat_\unital$, and $\cat_\trs$, respectively, and establish the subcategory relations
\begin{equation}
    \label{subcategory} \cat_\trs \hookrightarrow \cat_\unital \hookrightarrow \cat_\mathrm{det}.
\end{equation}
We can then define three partitioned process theories \cite{coecke_mathematical_nodate}, each consisting of a category and a subcategory, corresponding to the following resource theories:
\begin{align}
    (\cat_\mathrm{det}, \cat_\unital) \qquad &\text{nonunitality}\\
    (\cat_\unital, \cat_\trs) \qquad &\text{kinematic T-violation}\\
    (\cat_\mathrm{det}, \cat_\trs) \qquad &\text{overall T-violation}.
\end{align}

\begin{definition}\label{def:category_deterministic}
    The category $\cat_\mathrm{det}$ of deterministic superchannels is defined by the following ingredients:
    \begin{itemize}
        \item An object $A := (\hil_A, \trev_A)$ is a pair consisting of a Hilbert space $\hil_A$ and an antiunitary operator $\trev_A$.
        \item An arrow $f: A \multimap B$ is a deterministic quantum superchannel from $\B(\B(\hil_A),\B(\hil_A))$ to $\B(\B(\hil_B),\B(\hil_B))$.
        \item The parallel composition of objects $A = (\hil_A, \trev_A)$ and $B = (\hil_B, \trev_B)$ is given by $A \otimes B := (\hil_A \otimes \hil_B, \trev_A \otimes \trev_B)$. The parallel composition of arrows $f:A \multimap C$ and $g:B \multimap D$ is also given by the tensor-product superchannel $f \otimes g$ from $\B(\B(\hil_A \otimes \hil_B),\B(\hil_A \otimes \hil_B))$ to $\B(\B(\hil_C \otimes \hil_D),\B(\hil_C \otimes \hil_D))$.
        \item The void object $I$ is defined as $(\mathbb{C},\ast)$, where $\ast$ denotes complex conjugation on scalars.
    \end{itemize} 
\end{definition}
The category axioms for $\cat_\mathrm{det}$ follow from the closure of deterministic superchannels under associative sequential composition. The identity arrow is the trivial superchannel implemented by the identity encoder--decoder pair. Moreover, $(\cat_\mathrm{det},\bullet,\otimes,I)$ is a symmetric monoidal category, with the monoidal structure inherited from the usual tensor product. Although it is not used in the subsequent analysis, Fig.~\ref{fig:ppt} summarizes the diagrammatic elements of $\cat_\mathrm{det}$ and their corresponding circuit representations.
\begin{figure*}[tbp]
    \includegraphics[width=.8\textwidth]{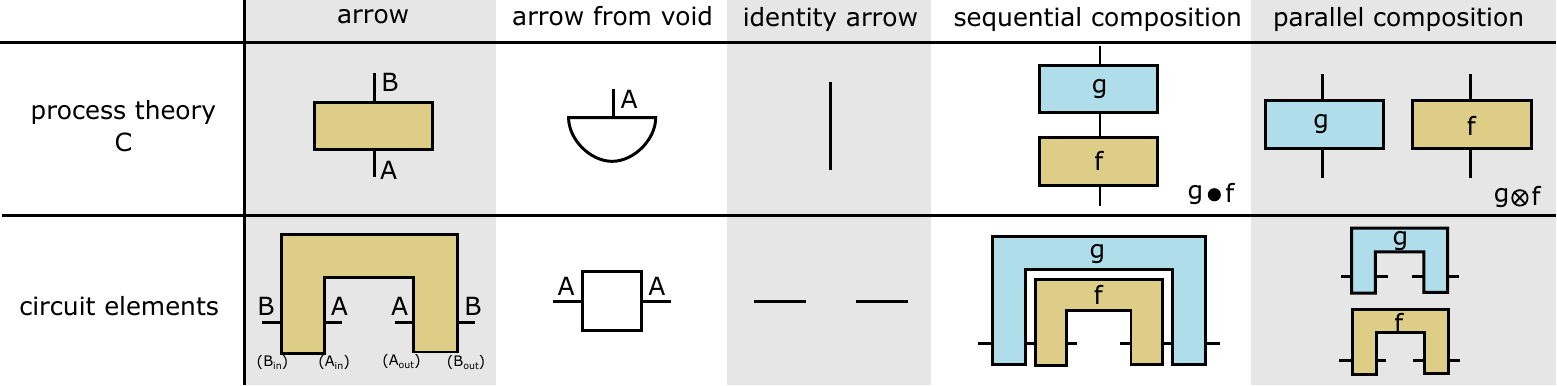}
    \caption{\label{fig:ppt} Correspondence between elements of the process theories ($\cat_\mathrm{det}$, $\cat_\unital$, and $\cat_\trs$) and elements of quantum circuits. The circuit element shaped like ``$\sqcap$'' represents a one-slot quantum comb.}
\end{figure*}

Ordinary quantum channels appear in $\cat_\mathrm{det}$ as arrows from the void object. Indeed, when a superchannel has no input channel, its encoder and decoder simply compose to form a single quantum channel. This reduction from superchannels to channels is used in the ``top-down'' approach to dynamical resource theories in \cite{ji_entropic_2023}.

The antiunitary operators assigned to objects are intended to represent motion reversal, although they have not yet entered the preceding definitions. The categories of deterministic and unital superchannels could be defined without these operators; nevertheless, we include them so that the category of T-symmetric superchannels can be realized as a subcategory of both. Definition~\ref{def:category_deterministic} also fixes a nontrivial convention: under parallel composition, motion-reversal operators compose by the ordinary tensor product $\trev_A \otimes \trev_B$. This convention is inherited by the subcategories introduced below.

As shown in the main text, the classes of unital and T-symmetric superchannels are each closed under sequential composition ($\bullet$) and parallel composition ($\otimes$). Restricting the arrows of $\cat_\mathrm{det}$ to these respective classes therefore yields symmetric monoidal subcategories $\cat_\unital$ and $\cat_\trs$.
\begin{definition}\label{def:category_unital}
    The category of unital superchannels $\cat_\unital$ has the same objects as $\cat_\mathrm{det}$ and unital superchannels as arrows. The category of T-symmetric superchannels $\cat_\trs$ has the same objects as $\cat_\mathrm{det}$ and T-symmetric superchannels as arrows.
\end{definition}
The monoidal tensor $\otimes$ and the void object $I$ are inherited from $\cat_\mathrm{det}$. 

By definition, $(\cat_\mathrm{det},\cat_\unital)$ and $(\cat_\mathrm{det},\cat_\trs)$ both form partitioned process theories, respectively defining nonunitality and T-violation.
Additionally, \eqref{superchannel_subclass} implies
\begin{equation}
   \cat_\trs \hookrightarrow \cat_\unital,
\end{equation}
that is, $\cat_\trs$ is a subcategory of $\cat_\unital$. Therefore, the pair $(\cat_\unital,\cat_\trs)$ also forms a partitioned process theory, defining kinematic T-violation.


\section{Unital superchannels are CPTP maps on Choi states}\label{sec:proof_choi_CPTP}
This appendix proves Lemma \ref{lem:choistate}.
In terms of the Choi operator $\choi[f]$ of $f$, this map is represented as
\begin{equation}
    \widetilde{f}(\cstate[\chan]) = \frac{1}{d_B} \choi[f] \star d_A \cstate[\chan] = \trace_A \left[ \frac{d_A}{d_B} \choi[f] \cstate[\chan]^\top \right].
\end{equation}
It therefore suffices to show that $d_A \choi[f]/d_B$ is the Choi operator of a CPTP map from $\B(\hil_A \otimes \hil_A)$ to $\B(\hil_B \otimes \hil_B)$. Complete positivity follows from positive-semidefiniteness of $\choi[f]$; trace preservation is equivalent to
\begin{equation}
    \trace_B \left[ \frac{d_A}{d_B} \choi[f] \right] = \id_{\hil_A} \otimes \id_{\hil_A}.
\end{equation}
To verify this condition, label the input and output spaces by $\hil_A^\mathrm{in}$, $\hil_A^\mathrm{out}$, $\hil_B^\mathrm{in}$, and $\hil_B^\mathrm{out}$. Since $\choi[f]$ is the Choi operator of a deterministic superchannel, we have \cite{chiribella_theoretical_2009}
\begin{equation}
    \label{trace_rightmost}    \trace_{\hil_B^\mathrm{out}} [ \choi[f] ] = \frac{1}{d_A} \trace_{\hil_B^\mathrm{out}, \hil_A^\mathrm{out}} [ \choi[f] ] \otimes \id_{\hil_A^\mathrm{out}}.
\end{equation}
Since $f$ is unital, its adjoint $f^\dagger$ is also a unital superchannel (Lemma~\ref{lem:unital}). Its Choi operator is given by
\begin{equation}
    \choi[f^\dagger] = \swapio(\overline{\choi[f]}),
\end{equation}
where the overline denotes complex conjugation in the computational basis. Since $f^\dagger$ is a deterministic superchannel, we have
\begin{align}
    \trace_{\hil_B^\mathrm{in}} [ \choi[f] ] &= \overline{\trace_{\hil_B^\mathrm{in}} [ \swapio(\choi[f^\dagger]) ]} \\
    &= \swapio \left( \overline{\trace_{\hil_B^\mathrm{out}} [ \choi[f^\dagger] ]} \right) \\
    &= \swapio \left( \frac{1}{d_A} \overline{ \trace_{\hil_B^\mathrm{out}, \hil_A^\mathrm{out}} [ \choi[f^\dagger] ] \otimes \id_{\hil_A^\mathrm{out}}} \right) \\
    \label{trace_leftmost}    &= \frac{1}{d_A}  \swapio \left( \overline{\trace_{\hil_B^\mathrm{out}, \hil_A^\mathrm{out}} [ \choi[f^\dagger] ]} \right) \otimes \id_{\hil_A^\mathrm{in}},
\end{align}
where the action of $\swapio$ on partially traced systems is understood as relabeling the remaining system.
Combining \eqref{trace_rightmost} and \eqref{trace_leftmost}, we obtain
\begin{equation}
    \trace_{\hil_B^\mathrm{in}, \hil_B^\mathrm{out}} \left[ \frac{d_A}{d_B} \choi[f] \right] = r \id_{\hil_A} \otimes \id_{\hil_A},
\end{equation}
for some positive number $r$ that determines the uniform trace scaling. Since $\widetilde{f}$ maps Choi states to Choi states, one must have $r=1$. This completes the proof of Lemma~\ref{lem:choistate}.

\section{2-dimensional channel spaces are \textit{e}-TGC}\label{sec:flat}
In this section, we show that the spaces $S_\unital$ and $S_\trs$ defined by
\begin{align}
    &S_\unital := \left\{ \cstate[\chan_\unital] \middle| \chan_\unital \in \unital(A) \right\} \\
    &= \left\{ \rho \in \density \middle| \trace_{A_\mathrm{in}} [\rho] = \frac{\id_{A_\mathrm{out}}}{d_A}, ~ \trace_{A_\mathrm{out}} [\rho] = \frac{\id_{A_\mathrm{in}}}{d_A} \right\}, \\
    &S_\trs := \left\{ \cstate[\chan_\trs] \middle| \chan_\trs \in \trs(A) \right\} \\
    &= \left\{ \rho \in S_\unital \middle| \adj_{X^\dagger_\mathrm{out} \otimes X^\top_\mathrm{in}} \circ \swapio (\rho) = \rho  \right\},
\end{align}
where $\density$ denotes the space of density operators on $\hil_A^\mathrm{in} \otimes \hil_A^\mathrm{out}$, are both \textit{e}-TGC when $d_A = \dim \hil_A = 2$.

Exceptional features of qubit systems used here are that any qubit unital channel is a mixture of unitary channels \cite{LANDAU1993107}, and that any such channel can be written as a mixture of four mutually orthogonal unitary channels \cite{BETHRUSKAI2002159}. The latter implies that any positive affine combination $\sum_i p_i \ketbra[\psi_i]$ ($p_i > 0$) of (possibly non-orthogonal) maximally entangled states $\ket[\psi_i]$ has a decomposition
\begin{equation}
    \sum_i p_i \ketbra[\psi_i] = \sum_{j=0}^3 q_j \ketbra[\phi_j],
\end{equation}
where $\{ \ket[\phi_j] \}_{j=0,1,2,3}$ is a set of orthonormal maximally entangled vectors.

Let us start with $S_\unital$. Let $\rho = \sum_{j=0}^3 q_j \ketbra[\phi_j]$ and $\rho' = \sum_{j=0}^3 q'_j \ketbra[\phi'_j]$ be such orthogonal decompositions of $\rho, \rho' \in S_\unital$. We then have
\begin{align}
    &t \log \rho + (1-t) \log \rho' \\
    &= \sum_{j=0}^3 t \log q_j \ketbra[\phi_j] +  \sum_{j=0}^3 (1-t) \log q'_j \ketbra[\phi'_j].
\end{align}
Since the right-hand side is a linear combination of maximally entangled states with non-positive coefficients, it has a decomposition $\sum_j - q''_j \ketbra[\phi''_j]$ into another set of orthonormal maximally entangled vectors $\{ \ket[\phi''_j] \}$ with $q''_j \geq 0$. This leads to
\begin{equation}
    \exp[t \log \rho + (1-t) \log \rho'] = \sum_j e^{- q''_j} \ketbra[\phi''_j],
\end{equation}
which, after normalization, is contained in $S_\unital$. This proves that any \textit{e}-geodesic in $S_\unital$ is contained in $S_\unital$, and hence that $S_\unital$ is \textit{e}-TGC.

To see the \textit{e}-TGC of $S_\trs$, note that the constraint $\adj_{X^\dagger_\mathrm{out} \otimes X^\top_\in} \circ \swapio (\rho) = \rho$, imposed on $S_\unital$ to define $S_\trs$, is affine in \textit{e}-coordinates.
More directly, if $\rho, \rho' \in S_\unital$ satisfy a symmetry $U \rho U^\dagger = \rho$ and $U \rho' U^\dagger = \rho'$, so do $\log \rho$, $\log \rho'$, $t \log \rho + (1-t) \log \rho'$, and $\exp[t \log \rho + (1-t) \log \rho']$. The \textit{e}-TGC of $S_\trs$ follows by taking $\adj_U = \adj_{X^\dagger_\mathrm{out} \otimes X^\top_\in} \circ \swapio$.

\section{Optimality of the twirled channel}\label{sec:twirl}
This appendix proves Theorem~\ref{thm:twirl}. The relative-entropy case follows directly from Proposition~2 of \cite{gour_measuring_2009}.

In general, RPS of a channel is represented by the unitary transformation in \eqref{choi_time_reversal} acting on the corresponding Choi state. Hence, $d(a,b)$ is invariant under simultaneously applying RPS to $a$ and $b$. In particular, for any unital channel $\chan$ and any T-symmetric channel $\chan_\mathrm{free}$, we have
\begin{equation}
    \label{distance_invariance} d(\cstate[\chan],\cstate[\chan_\mathrm{free}]) = d(\cstate[\chan_\trev],\cstate[\chan_\mathrm{free}]).
\end{equation}

First, consider the trace distance $d(a,b) = | a-b |_1$. By \eqref{distance_invariance},
\begin{align}
    |\cstate[\chan] - \cstate[\chan_\mathrm{free}]|_1
    &= \frac{|\cstate[\chan] - \cstate[\chan_\mathrm{free}]|_1}{2} + \frac{|\cstate[\chan_\trev] - \cstate[\chan_\mathrm{free}]|_1}{2}\\
    & \geq \frac{|\cstate[\chan] - \cstate[\chan_\trev]|_1}{2} \\
    & = | \cstate[\chan] - \cstate[\chan_\mathrm{twirl}] |_1 ,
\end{align}
for every T-symmetric channel $\chan_{\mathrm{free}}$, where the second line follows from the triangle inequality. Thus, the trace distance is minimized by $\chan_\mathrm{twirl}$.

For the squared Frobenius norm $d(a,b) = || a-b ||^2$, define $L := (\cstate[\chan] - \cstate[\chan_\trev])/2$ and expand $|| \cstate[\chan] - \cstate[\chan_\mathrm{free}] ||^2$ as
\begin{align}
    & \trace[ (\cstate[\chan_\mathrm{twirl}] + L - \cstate[\chan_\mathrm{free}])^2 ] \\
    & = \trace[ (\cstate[\chan_\mathrm{twirl}] - \cstate[\chan_\mathrm{free}])^2 ] + \trace[L^2] \\
    &\quad + 2\trace[ (\cstate[\chan_\mathrm{twirl}] - \cstate[\chan_\mathrm{free}]) L].
\end{align}
The last term vanishes because, under the unitary transformation representing RPS on Choi-operators, $\cstate[\chan_\mathrm{twirl}] - \cstate[\chan_\mathrm{free}]$ is invariant whereas $L$ changes sign. Therefore,
\begin{equation}
    || \cstate[\chan] - \cstate[\chan_\mathrm{free}] ||^2 = || \cstate[\chan_\mathrm{twirl}] - \cstate[\chan_\mathrm{free}] ||^2 + \trace[L^2],
\end{equation}
which is minimized if and only if $\cstate[\chan_\mathrm{twirl}] = \cstate[\chan_\mathrm{free}]$.

\section{Convexity and the range of relative entropy monotones}\label{sec:convexity}
We derive the convexity in \eqref{convexity} and the range in \eqref{range} of the relative-entropy monotones. These properties are not specific to T-violation, but hold for a broad class of relative-entropy monotones.

We first establish convexity. The twirling operation is affine in the sense that
\begin{equation}
    (p \chan + q \chan')_\mathrm{twirl} = p \chan_\mathrm{twirl} + q \chan'_\mathrm{twirl}, 
\end{equation}
and this affine structure is inherited directly by the Choi-state representation. In general, if the state transformation $\mathcal{G} : \rho \mapsto \mathcal{G} (\rho)$ (which need not be CPTP) is affine, then
\begin{align}
    & S(p \rho + (1-p) \rho' || \mathcal{G}(p \rho + (1-p) \rho') ) \\
    &= S(p \rho + (1-p) \rho' || p \mathcal{G}(\rho) + (1-p)\mathcal{G} (\rho') ) \\
    & \leq p S(\rho || \mathcal{G}(\rho)) + (1-p) S(\rho' || \mathcal{G}(\rho')),
\end{align}
by the joint convexity of relative entropy. Thus, the relative-entropy monotone is convex whenever the closest free state depends affinely on the input state.

Having established convexity, it suffices to consider pure input states to prove the range in \eqref{range}. For pure inputs (i.e., unitary channels), Eq.~\eqref{entropy_difference} reduces to
\begin{equation}
    \label{twirled_pure}    S(\cstate[\chan_\mathrm{twirl}]) = S \left( \frac{\cstate[\chan] + \cstate[\chan_{\trev_A}]}{2} \right).
\end{equation}
For pure states $\ket[\psi]$ and $\ket[\phi]$, we have
\begin{equation}
    \label{binary}  S \left( \frac{\ketbra[\psi] + \ketbra[\phi]}{2} \right) = h_2 \left( \frac{1+ | \braket{\psi}{\phi} |}{2} \right),
\end{equation}
where $h_2$ is the binary entropy. Since this value lies in $[0,1]$, so does \eqref{twirled_pure}; the range in \eqref{range} then follows from convexity.

Equation~\eqref{binary} also yields the relation between relative entropy and fidelity in \eqref{relative_entropy_fidelity}.

\section{Golden unit}\label{sec:golden_unit}
In this appendix, we prove Theorem~\ref{thm:golden_unit} by explicitly constructing the Choi operators of the required T-symmetric superchannels. We also construct a T-symmetric superchannel that converts the three-port circulator $\circlearrowleft$ into $\adj_G$.

Consider the conversion from $\adj_G$ to an arbitrary unital channel $\chan \in \unital(B)$, where $\dim \hil_B = d$.
Define an operator $J_\chan$ on $\hil_B^\mathrm{in} \otimes \hil_B^\mathrm{out} \otimes \hil_A^\mathrm{in} \otimes \hil_A^\mathrm{out}$, with $\dim \hil_A = 2$, by
\begin{equation}
    J_\chan := D + J_+ + J_-,
\end{equation}
where
\begin{align}
    D &:= \frac{1}{d} \id_{B_\mathrm{in}} \otimes \id_{B_\mathrm{out}} \otimes \frac{\ketbra[00] + \ketbra[11]}{2},\\
    J_+ &:= \frac{\choi[\chan] + \choi[\chan^\top]}{2} \otimes \frac{\ketbra[01] + \ketbra[10]}{2}, \\
    J_- &:= \frac{\choi[\chan] - \choi[\chan^\top]}{2i} \otimes \frac{\ket[10] \bra[01] - \ket[01] \bra[10]}{2}.
\end{align}
We first show that $J_\chan$ is the Choi operator of a deterministic superchannel for every unital channel $\chan$.
To establish positive semidefiniteness, we use the expression
\begin{equation}
    J_+ + J_- = \choi[\chan] \otimes \ketbra[\lambda_1] + \choi[\chan^\top] \otimes \ketbra[\lambda_2] \geq 0,
\end{equation}
where
\begin{align}
    \ket[\lambda_1] = \frac{\ket[01] - i \ket[10]}{\sqrt{2}}, \quad \ket[\lambda_2] = \frac{i \ket[01] + \ket[10]}{\sqrt{2}}.
\end{align}
Since $D$ and $J_+ + J_-$ are both positive semidefinite, so is $J_\chan$.
To verify the no-signaling condition for the comb, note that
\begin{equation}
    \trace_{B_\mathrm{out}}[ \choi[\chan^\top] ] = \id_{B_\mathrm{in}},
\end{equation}
because $\chan$ is unital. It follows that
\begin{align}
    \trace_{B_\mathrm{out}}[J_\chan] &= \frac{1}{2} \id_{B_\mathrm{in}} \otimes \id_{A_\mathrm{in}} \otimes \id_{A_\mathrm{out}}, \\
    \trace_{B_\mathrm{out}, A_\mathrm{in}, A_\mathrm{out}}[J_\chan] &= 2 \id_{B_\mathrm{in}}.
\end{align}
Therefore, $J_\chan$ is the Choi operator of a deterministic superchannel.

Moreover, $J_\chan$ defines a T-symmetric superchannel. Since $\swapio (\choi[\chan]) = \choi[\chan^\top]$, each of the three components $D$, $J_+$, and $J_-$ is invariant under the input-output swap $\swapio$.

Finally, a direct calculation gives
\begin{align}
    \trace_A [J_\chan (\id_B \otimes \choi[\adj_G]^\top)] = \choi[\chan].
\end{align}
Thus, the channel $\adj_G$ can be converted into the unital channel $\chan$ by the T-symmetric superchannel defined by $J_\chan$.

Now let $A$ be the classical system with index set $X_3 = \{ x_0,x_1,x_2 \}$ and motion reversal given by the identity, and let $B$ be a two-dimensional quantum system. Consider the following Choi operator on $\hil_B^\mathrm{in} \otimes \hil_B^\mathrm{out} \otimes \hil_A^\mathrm{in} \otimes \hil_A^\mathrm{out}$:
\begin{align}
    & \frac{1}{6} \id_{B_\mathrm{in} \otimes B_\mathrm{out}} \otimes \sum_{k=0}^2 x_k \otimes x_k + \frac{2}{3} \sum_{l \neq m} \ketbra[\psi_{l,m}] \otimes x_l \otimes x_m, \\
    & \left( \ket[\psi_{l,m}] := \frac{\ket[0]_{B_\mathrm{in}} \ket[1]_\mathrm{out} + i \epsilon_{ml} \ket[1]_{B_\mathrm{in}} \ket[0]_\mathrm{out} }{\sqrt{2}} \right)
\end{align}
where $\epsilon_{ml}$ is a Levi-Civita-like symbol. A direct calculation verifies that this operator defines a T-symmetric superchannel from $A$ to $B$ that converts $\circlearrowleft$ into $\adj_G$.

\bibliography{ref_arxiv.bib}
\end{document}